\documentclass[a4paper,fleqn]{cas-sc}

\usepackage[authoryear,longnamesfirst]{natbib}

\def\tsc#1{\csdef{#1}{\textsc{\lowercase{#1}}\xspace}}
\tsc{WGM}
\tsc{QE}

\usepackage{amssymb}
\usepackage{amsmath}

\usepackage[T1]{fontenc}
\usepackage[utf8]{inputenc}
\usepackage{CJKutf8}
\usepackage{etoolbox}
\usepackage{xcolor, xspace}
\AtBeginEnvironment{thebibliography}{\begin{CJK}{UTF8}{gbsn}}
\AtEndEnvironment{thebibliography}{\end{CJK}}

\usepackage{amsthm}
\theoremstyle{plain}
\newtheorem{thm}{Theorem}[section]

\newtheorem{lem}[thm]{Lemma}
\newtheorem{cor}[thm]{Corollary}

\theoremstyle{definition}

\theoremstyle{remark}
\newtheorem{rem}[thm]{Remark}

\usepackage{algorithm}
\usepackage{algpseudocode}

\usepackage{booktabs}
\usepackage{threeparttable,multirow,array,tabularx,makecell}

\newcolumntype{L}[1]{>{\raggedright\arraybackslash}p{#1}}
\newcolumntype{Y}{>{\raggedright\arraybackslash}X}

\usepackage{subcaption}

\usepackage{physics}

\usepackage{tikz}
\usetikzlibrary{positioning,arrows.meta,calc,fit,matrix,decorations.pathreplacing}

\tikzset{
  frame/.style={draw=gray!35, fill=gray!2, rounded corners=2mm, thick},
  nodeg/.style={draw=gray!35, fill=white, rounded corners=2mm, thick, inner sep=5pt, align=center},
  nodeb/.style={draw=blue!65!black, fill=blue!8, rounded corners=2mm, very thick, inner sep=5pt, align=center},
  circ/.style={circle, draw=gray!35, fill=white, thick, inner sep=1.3pt},
  arrb/.style={-Latex, thick, draw=blue!65!black},
  arrg/.style={-Latex, thick, draw=gray!60},
  arrgr/.style={-Latex, thick, draw=gray!60, rounded corners=2mm},
}

\newif\ifshowcomments
\showcommentstrue

\usepackage{dsfont} 
\DeclareMathOperator{\indicator}{\mathds{1}}
\DeclareMathOperator{\expect}{\mathbb{E}}
\DeclareMathOperator{\prob}{\mathbb{P}}
\newcommand{\pconv}{\xrightarrow{p}}

\DeclareMathOperator{\Var}{Var}

\DeclareMathOperator{\Cov}{Cov}
 
\DeclareMathOperator{\diag}{diag}

\newcommand{\eqd}{\stackrel{d}{=}}

\newcommand{\R}{\mathbb{R}}
\newcommand{\N}{\mathbb{N}}

\newcommand{\bigO}{\mathcal O}

\newcommand*{\defeq}{\mathrel{\vcenter{\baselineskip0.5ex \lineskiplimit0pt
  \hbox{\scriptsize.}\hbox{\scriptsize.}}}=}

\begin{document}
\let\WriteBookmarks\relax
\def\floatpagepagefraction{1}
\def\textpagefraction{.001}

\shorttitle{RESAPLE}    

\shortauthors{Khan \& Franklin}  

\title [mode = title]{RESAPLE: An Approximate One-Step Restricted Likelihood Estimator of Spatial Dependence for Exploratory Spatial Analysis}  



%

\author[1,2]{Aditya Khan}[orcid=0009-0007-9798-5211]

\cormark[1]


\ead{adityakhan@cs.toronto.edu}


\credit{Conceptualisation, Methodology, Software, Validation, Formal analysis, Investigation, Writing - Original Draft, Writing - Review \& Editing, Visualisation}

\affiliation[1]{organization={Department of Computer Science, University of Toronto}, addressline={40 St George St}, postcode={M5S 2E4}, city={Toronto}, state={ON}, country={Canada}}

\author[2]{Meredith Franklin}[orcid=0000-0003-3802-8829]


\ead{meredith.franklin@utoronto.ca}


\credit{Writing - Review \& Editing, Supervision, Project Administration}

\affiliation[2]{organization={Department of Statistical Sciences, University of Toronto}, addressline={700 University Ave 9th Floor}, postcode={M5G 1Z5}, city={Toronto}, state={ON}, country={Canada}}

\cortext[1]{Corresponding author}



\begin{abstract}
Spatial autocorrelation diagnostics such as Moran's index and the approximate profile likelihood-based estimators (APLE) are widely used to assess spatial dependence in areal data. Yet, although Moran's index is frequently applied to regression residuals and APLE is typically formulated for raw outcomes, neither is explicitly derived as an estimator of residual spatial dependence after adjusting for large-scale trends and covariates. We propose RESAPLE, a one-step approximate restricted maximum likelihood (REML) estimator of the spatial error model's spatial dependence parameter $\rho$ based on REML residuals. Because RESAPLE is a Rayleigh coefficient, it retains the interpretability and diagnostic convenience of exploratory indices, while also providing a computationally inexpensive and accurate estimator of $\rho$ under moderate spatial dependence. We show theoretically that for small to medium sample sizes and well-specified trend models, RESAPLE is a better estimator of, and competitive test statistic for, residual spatial dependence relative to Moran's index and the APLE across a broad range of spatial configurations. The theory we develop also yields a diagnostic for spatial weight selection, providing guidance towards resolving a longstanding challenge in spatial data analysis. We illustrate the method through simulations on both regular and highly irregular lattices with a case study using American Community Survey tract-level data. 
\end{abstract}



\begin{keywords}
Spatial Autocorrelation \sep Exploratory Spatial Data Analysis \sep Spatial Error Model
\end{keywords}

\maketitle

\section{Introduction}
Spatial autoregressive (SAR) models, which include lag (SAR$_{lag}$) and error (SEM) specifications, are widely used to capture spatial dependence in areal (lattice) data across spatial statistics and econometrics \citep{zhang2018-sar-econ}, with growing applications in ecology \citep{verhoef2018-sar-ecol} and public health \citep{gao2021-sar-health}. Tests and estimators of spatial dependence derived from SAR and SEM models serve two complementary roles. First, they provide diagnostics for assessing model assumptions, especially violations of spatial independence, where tests of residual spatial autocorrelation are standard \citep{anselin1996}. Second, they underpin exploratory spatial data analysis (ESDA), by offering simple, interpretable, and computationally light summaries of spatial dependence or clustering in the data \citep{anselin1995}.

The classical statistic for summarising global spatial dependence is Moran's $\mathcal I_M$ \citep{moranNOTESCONTINUOUSSTOCHASTIC1950}. Its appeal lies in its closed form expression as a ratio of quadratic forms. However, extensive work has documented its shortcomings as an estimator of the spatial dependence parameter, $\rho$. $\mathcal I_M$ can be severely biased except when $\rho \approx 0$, in which case it does constitute a locally best invariant test for the parameter under appropriate conditions \citep{liMoransTestingSpatial2007}. Hence, Moran's $\mathcal I_M$ is excellent for detection but not for estimation. This limitation motivates alternatives that preserve the interpretability of Moran's $\mathcal I_M$ while more closely targeting the spatial dependence parameter, $\rho$.

A leading example is the approximate profile likelihood estimator (APLE) of $\rho$ \citep{liMoransTestingSpatial2007}, derived as the ``one-step'' (first-order) solution to the profile-likelihood score equation for $\rho$ in a SAR model. Like Moran's $\mathcal I_M$, APLE is a closed-form ratio of quadratic forms and is straightforward to compute, making it feasible for ESDA and large datasets. Implementations are available in standard \textbf{R} packages such as \texttt{spdep} \footnote{\url{https://www.rdocumentation.org/packages/spdep/versions/0.7-8/topics/aple}}. Under a Gaussian model, one can instead compute the maximum likelihood estimator for $\rho$, but that can be computationally burdensome (discussed in \cite{liMoransTestingSpatial2007}). Subsequent work has elaborated on APLE's distributional properties and extensions (see, for instance, \cite{jinApproximatedLikelihoodRoot2012, liOnestepEstimationSpatial2012, rederMoreAPLEStatistic2009}).

In this article we argue that despite substantial progress on APLE-type estimators, current methodology is misaligned with how practitioners actually analyse spatial regression models with covariates. In applied regression analyses, Moran's $\mathcal I_M$ and APLE are rarely applied directly to raw outcomes. Instead, they are typically computed from residuals after removing the mean structure or detrending, either 1) implicitly, by centring variables first (as in \cite{liMoransTestingSpatial2007}), or 2) explicitly, by allowing for regressors in the APLE formula (as in \cite{liOnestepEstimationSpatial2012}).

In other words, although APLE is derived from the full profile likelihood involving the both the regression coefficients $\beta$ and the spatial dependence parameter $\rho$, in practice $\beta$ is first treated as a nuisance parameter, typically estimated by least squares, and then spatial dependence is summarised using the resulting residuals. Moreover, APLE-type procedures with covariates can depend in non-trivial ways on how fixed effects are specified, including the particular choice of regressors and the resulting residualisation operator, even though ESDA summaries are typically intended to be stable across such choices.

Our key insight is that this tension between joint ML of all parameters ($\beta$, $\rho$, $\sigma^2$), and inference on the covariance parameter ($\rho$), is exactly the problem that REML was designed to solve in Gaussian linear mixed models \cite{harvilleMaximumLikelihoodApproaches1977, pattersonRecoveryInterblockInformation1971}. REML bases estimation of covariance parameters on linear combinations of the data that remove the fixed effects, yielding reduced bias and insensitivity to the parametrisation of the design matrix $X$. In the SAR/SEM setting, this perspective also yields a principled route to residual-based ESDA. Specifically, it allows spatial dependence to be summarized in the residual space after accounting for $X$, while preserving the computational and interpretive advantages of quadratic-form statistics.

Motivated by this, we propose RESAPLE, a REML-aligned, closed-form ESDA statistic for SEM models with covariates. RESAPLE is derived as an approximate one-step estimator from the restricted profile likelihood for $\rho$, and therefore aligns naturally with residual-based practice by construction. In addition, RESAPLE is invariant to reparameterizations of $X$, and its curvature at the null yields an analytic information quantity that can be used to compare candidate spatial weight matrices $W$ in terms of their local informativeness and power for detecting spatial dependence.

The remainder of this paper is organised as follows. In Section \ref{sec:model-likelihood}, we provide some context towards the setting we work with, as well as the model definition. We then define the RESAPLE statistic in Section \ref{sec:RESAPLE-estimator} and discuss some properties of the statistic. Next we discuss the primary use case of RESAPLE, as an ESDA statistic in Section \ref{sec:RESAPLE-esda}. This is followed by a discussion of how the choice of spatial weight matrix, $W$, impacts our analysis in Section \ref{sec:impact-W}. We then provide a simulation study and real data case study in Section \ref{sec:simulations} and \ref{sec:real-data} respectively to empirically validate our claims. We conclude in Section \ref{sec:conclusion}.

\section{Model Definitions and Background}\label{sec:model-likelihood}
We work under the Gaussian SEM. The purpose of this section is to set our notation and to introduce the residualised contrasts that underpin the definition of the RESAPLE estimator in Section \ref{sec:RESAPLE-estimator}. Let $n \in \N$ denote the number of spatial units, and let $p \in \{1,\dots,n-1\}$ denote the number of covariates.

\subsection{Notation for the SEM}\label{subsec:gaussian-sem}
The Gaussian SEM is given by
\begin{equation}\label{eq:sem}
Z = X\beta + U,\qquad U = \rho WU + \epsilon,\qquad \epsilon \sim N(0, \sigma^2 I_n).
\end{equation}
Here, $Z \in \R^n$ is the observed response, $X \in \R^{n \times p}$ is the design matrix representing fixed effects, and $\beta \in \R^p$ are regression coefficients. On the error side, $W \in \R^{n\times n}$ is a spatial weight matrix with zero diagonal that encodes the spatial topology of the data, where $w_{ij}$ encodes the (possibly asymmetric) proximity between locations $i$ and $j$.

The scalar spatial autoregressive parameter is $\rho \in \Theta \subseteq \R$, where $\Theta$ is taken to be a compact subset of $$\{\rho \in \R : \det R(\rho) \equiv \det(I_n - \rho W) \neq 0\}.$$ Writing $R(\rho)=I_n-\rho W$, the non-singularity condition ensures that the induced covariance structure is well defined. Further details, including the implied marginal distribution of $Z$ and the corresponding log-likelihood, are given in Appendix \ref{app:model-likelihood}.

\subsection{The Residual Space}\label{subsec:residual-space}
Our estimator is constructed in the residual space obtained by removing the contribution of $X$. Let $$P = X(X^\top X)^{-1}X^\top,\qquad M = I_n - P.$$ Since $X$ has full column rank $p$, $P$ is a rank $p$ orthogonal projector onto $\Im(X)$, and $M$ is the rank $n-p$ orthogonal projector onto $\Im(X)^\perp$. We refer to $\Im(M)$ as the residual space of the data, of dimension $r=n-p$.

Let $H\in\R^{n\times r}$ have orthonormal columns chosen to satisfy $$H^\top H = I_r,\qquad HH^\top = M,$$ so that the columns of $H$ form an orthonormal basis for $\Im(M)=\Im(X)^\perp$. We define the residualised contrasts $$e = H^\top Z \in \R^r.$$ Intuitively, $e$ contains all information in $Z$ that is orthogonal to the column space of $X$. The following lemma characterises its distribution under the SEM.

\begin{lem}\label{lem:error-dist}
Under the SEM model \eqref{eq:sem},
\begin{enumerate}
    \item $H^\top X = 0$ and hence $e = H^\top Z = H^\top U$.
    \item $e = H^\top R(\rho)^{-1}\epsilon$, and hence $e \sim N\big(0,\sigma^2 \Sigma_r(\rho)\big)$ with $$\Sigma_r(\rho) = H^\top R(\rho)^{-1}R(\rho)^{-\top}H.$$
\end{enumerate}
\begin{proof}
See Appendix \ref{app:model-likelihood}.
\end{proof}
\end{lem}

The distribution of $e$ is the starting point for the derivation of the RESAPLE. The key point is that this distribution is independent of $\beta$ and depends only on the covariance structure of $U$.

\subsection{Recap of the APLE estimator}\label{subsec:aple-recap}
Before deriving the RESAPLE estimator, we briefly recall the APLE estimator, starting with the case where there are no covariates. The estimator is given in \cite{liMoransTestingSpatial2007}: 
$$\hat \rho_{APLE} = \frac{Z^\top KZ}{Z^\top(W^\top W + \nu_n I_n)Z},\qquad \nu_n = \Tr(W^2)/n,\qquad K=(W+W^\top)/2.$$
This is a ratio of quadratic forms, and it is the one-step estimator $$\hat \rho_{APLE} = \rho_0 - \frac{f(\rho_0)}{f'(\rho_0)}\qquad (\rho_0 = 0),$$ where $f(\rho) = \pdv{\rho} \ell_M^p(\rho)$ and $\ell_M^p$ denotes the likelihood for the data $Z$ after profiling out $\sigma^2$. Because it is a one-step estimator, it provides a first-order approximation to the maximum likelihood estimator while remaining in closed form. Moreover, because it is a ratio of quadratic forms, the distributional and asymptotic theory for this statistic is well developed. In particular, under a Gaussian null hypothesis $H_0: \rho = 0$, the exact distribution of this statistic can be expressed in terms of a quadratic-form mixture of $\chi_1^2$ variables and is numerically evaluated using Imhof's method \citep{imhof1961}. In the general case where the null is $H_0: \rho = \rho_0$, \cite{liMoransTestingSpatial2007} show that a test based on APLE is first-order equivalent to a Score test and a Lagrange multiplier test, which are locally most powerful.

Later, \cite{liOnestepEstimationSpatial2012} generalise this statistic to account for covariates, defining the updated statistic (which we refer to as MAPLE for convenience) as $$\hat \rho_{MAPLE}= \frac{Z^\top MKMZ}{Z^\top\big(MW^\top WM - M(W^\top + W)P(W^\top W)M + \nu_n M\big)Z},$$ where the additional terms arise from incorporating covariates into the likelihood. However, despite these appealing properties, the MAPLE estimator remains an ML-based one-step approximation to the profile likelihood, and therefore it does not align with residual-based ESDA in the presence of fixed effects. In particular, it inherits several drawbacks that are precisely those REML was designed to address. The one-step profile likelihood underlying MAPLE treats $\beta$ as a nuisance parameter to be estimated by ML within the same objective. This differs from how APLE-type statistics are typically used in practice, where spatial dependence is summarised on regression residuals after removing fixed effects \citep{liMoransTestingSpatial2007}. This residual-based workflow corresponds more closely to the restricted likelihood perspective than to the full ML profile likelihood.

We address these issues by defining the RESAPLE estimator in the subsequent section, as a REML-aligned closed-form statistic constructed in the residual space.

\section{The RESAPLE Estimator}\label{sec:RESAPLE-estimator}
\subsection{Defining the Estimator}
We now define the RESAPLE estimator. The construction begins from the restricted profile likelihood for the residualised contrasts $e=H^\top Z$, which eliminates the regression coefficients $\beta$ and bases inference for $(\rho,\sigma^2)$ on the residual space. The corresponding restricted likelihood and its profiled form are given in Appendix \ref{app:RESAPLE-estimator}, together with all proofs for this section.

To motivate an estimator at $\rho=0$, we require the restricted profile score at the null.

\begin{lem}\label{lem:score}
Let $\ell_r^p(\rho;e)$ denote the restricted profile log-likelihood for $\rho$ based on $e$, and let $S_r(\rho) = \pdv{\rho}\ell_r^p(\rho;e).$
Then the following all hold.
\begin{enumerate}
    \item $\Sigma_r(0) = I_r$.
    \item $\Sigma_r'(0) = W_r + W_r^\top = 2K_r$, where $W_r = H^\top WH$ and $K_r = (W_r + W_r^\top)/2$.
    \item Hence the restricted profile score under no spatial dependence is
    \begin{equation}\label{eq:restricted-score}
    S_r(0) = r\frac{e^\top K_r e}{e^\top e} - \Tr(K_r)
    = r\Big(\frac{e^\top K_r e}{e^\top e} - \mu_r\Big),
    \end{equation}
    where $\mu_r = \Tr(K_r)/r$.
\end{enumerate}
\end{lem}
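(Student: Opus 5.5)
The plan is to write down the restricted profile log-likelihood explicitly and differentiate it at $\rho=0$. By Lemma \ref{lem:error-dist}, $e\sim N(0,\sigma^2\Sigma_r(\rho))$. Up to constants, the restricted log-likelihood is therefore
$$-\tfrac r2\log\sigma^2-\tfrac12\log\det\Sigma_r(\rho)-\tfrac{1}{2\sigma^2}e^\top\Sigma_r(\rho)^{-1}e.$$
Maximising in $\sigma^2$ gives $\hat\sigma^2(\rho)=e^\top\Sigma_r(\rho)^{-1}e/r$. Substituting back yields the profile form
$$\ell_r^p(\rho;e)=-\tfrac r2\log\big(e^\top\Sigma_r(\rho)^{-1}e\big)-\tfrac12\log\det\Sigma_r(\rho)+\text{const}.$$
I will take this as the definition of $\ell_r^p$ given in the appendix.

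For part 1, set $\rho=0$, so that $R(0)=I_n$. Then $\Sigma_r(0)=H^\top H=I_r$ by orthonormality of the columns of $H$.

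For part 2, write $A(\rho)=R(\rho)^{-1}$. Differentiating $R(\rho)A(\rho)=I_n$ gives $A'(\rho)=A(\rho)WA(\rho)$, so $A'(0)=W$. By the product rule,
$$\frac{d}{d\rho}\big(AA^\top\big)=A'A^\top+A(A')^\top,$$
which at $\rho=0$ equals $W+W^\top$. Sandwiching between $H^\top$ and $H$ gives $\Sigma_r'(0)=H^\top WH+H^\top W^\top H=W_r+W_r^\top=2K_r$.

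For part 3, I differentiate the profile form using two standard identities:
$$\frac{d}{d\rho}\log\det\Sigma_r=\Tr\!\big(\Sigma_r^{-1}\Sigma_r'\big),\qquad \frac{d}{d\rho}\Sigma_r^{-1}=-\Sigma_r^{-1}\Sigma_r'\Sigma_r^{-1}.$$
These give
$$S_r(\rho)=\frac r2\,\frac{e^\top\Sigma_r^{-1}\Sigma_r'\Sigma_r^{-1}e}{e^\top\Sigma_r^{-1}e}-\frac12\Tr\!\big(\Sigma_r^{-1}\Sigma_r'\big).$$
At $\rho=0$, substitute parts 1 and 2. The factors of $2$ cancel, leaving
$$S_r(0)=r\,\frac{e^\top K_re}{e^\top e}-\Tr(K_r),$$
which rearranges to $r\big(e^\top K_r e/e^\top e-\mu_r\big)$ with $\mu_r=\Tr(K_r)/r$.

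Two points need care, though neither is a real obstacle. The first is differentiability of $\Sigma_r$ near $0$. This holds because $\det R(\rho)\neq0$ on a neighbourhood of $0$ and matrix inversion is smooth there. The second is making sure the profiled form used matches the appendix's normalisation, so that no stray constant factors enter the score. The division by $e^\top e$ is legitimate almost surely, since $e$ is Gaussian with nonsingular covariance $\sigma^2\Sigma_r(\rho)$.
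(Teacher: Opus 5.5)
Your proposal is correct and matches the paper's proof step for step. You show $\Sigma_r(0)=H^\top H=I_r$ and $\Sigma_r'(0)=H^\top(W+W^\top)H$, then differentiate the profiled restricted likelihood $-\tfrac12\log|\Sigma_r(\rho)|-\tfrac r2\log(e^\top\Sigma_r(\rho)^{-1}e)$ using the log-determinant and inverse-derivative identities. The only cosmetic difference is in part 2: you get $\tfrac{d}{d\rho}R(\rho)^{-1}=R(\rho)^{-1}WR(\rho)^{-1}$ by implicit differentiation, whereas the paper reads off the linear term of $(I_n-\rho W)^{-1}=I_n+\rho W+\bigO(\rho^2)$.
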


We now define the RESAPLE estimator. Writing
$$\mu_r = \frac{\Tr(K_r)}{r},\qquad \omega_r = \frac{\Tr(K_r^2)}{r},\qquad B_r = K_r^2+\omega_r I_r,$$
we define
$$\hat \rho_{RESAPLE} = \frac{e^\top(K_r - \mu_r I_r)e}{e^\top B_r e}.$$
This definition is motivated by a one-step approximation to the restricted score equation at $\rho=0$, with a denominator that is positive definite whenever $K_r\neq 0$. Details are given in Appendix \ref{app:RESAPLE-estimator}.

\begin{rem}
Notice that $S_r$ has Moran's $\mathcal I_M$ embedded into the formula, through $\frac{e^\top K_r e}{e^\top e}$, up to scaling by $r$, and centring by $\mu_r$. 
\end{rem}

\begin{thm}\label{thm:RESAPLE-def}
The following all hold for the RESAPLE estimator.
\begin{enumerate}
    \item The RESAPLE estimator is invariant to the choice of orthonormal basis $H$ satisfying $HH^\top = M$, $H^\top H = I_r$, and $H^\top X = 0$.
    \item Denote the null expectation under $\rho = 0$ as $\expect_0$. Then $\expect_0[e^\top(K_r - \mu_r I_r)e] = 0$. Moreover, if $K_r\neq0$, then $B_r=K_r^2+\omega_r I_r\succ0$ and
    $$\expect_0[e^\top B_r e] = 2\sigma^2\Tr(K_r^2)=\sigma^2\big(\Tr(W_r^\top W_r)+\Tr(W_r^2)\big).$$
    \item The RESAPLE estimator is the one-step solution to the approximate restricted score equation $$S_r(0) + \mathcal I_r^A(0)\rho = 0,$$ where
    $$\mathcal I_r^A(0) = -\frac{r}{e^\top e}e^\top B_r e$$
    is a stabilised approximate restricted Fisher information (curvature) motivated by the restricted Fisher information at $\rho = 0$ with $\sigma^2$ plugged in by $\hat\sigma_r^2(0)=\frac1r e^\top e$.
    \item Consequently, $$\hat \rho_{RESAPLE} = - \frac{S_r(0)}{\mathcal I_r^A(0)}.$$ 
\end{enumerate}
\end{thm}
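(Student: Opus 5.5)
The plan is to handle the four parts in order, with part 1 reduced to a statement about orthogonal changes of basis.

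\emph{Part 1 (basis invariance).} If $H$ and $\tilde H$ both satisfy $HH^\top=\tilde H\tilde H^\top=M$ and have orthonormal columns, then $Q=H^\top\tilde H$ is an $r\times r$ orthogonal matrix. Indeed,
$$Q^\top Q=\tilde H^\top M\tilde H=\tilde H^\top\tilde H\tilde H^\top\tilde H=I_r.$$
Moreover $\tilde H=M\tilde H=HH^\top\tilde H=HQ$. Consequently
$$\tilde e=Q^\top e,\qquad \tilde W_r=Q^\top W_rQ,\qquad \tilde K_r=Q^\top K_rQ.$$
Traces are invariant under orthogonal conjugation, so $\mu_r$ and $\omega_r$ are unchanged, and $\tilde B_r=Q^\top B_rQ$. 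Both quadratic forms are therefore unchanged, and so is their ratio.

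\emph{Part 2 (null moments and positive definiteness).} By Lemma \ref{lem:error-dist} and Lemma \ref{lem:score}(1), under $\rho=0$ we have $e\sim N(0,\sigma^2 I_r)$. Hence $\expect_0[e^\top Ae]=\sigma^2\Tr(A)$ for any symmetric $A$. Applying this to the numerator gives
$$\sigma^2\big(\Tr(K_r)-r\mu_r\big)=0.$$
For the denominator:
\begin{itemize}
\item Positive definiteness: $K_r^2\succeq 0$, and $\omega_r=\|K_r\|_F^2/r>0$ when $K_r\neq 0$, so $B_r\succ 0$.
\item Expectation: $\Tr(B_r)=\Tr(K_r^2)+r\omega_r=2\Tr(K_r^2)$.
\item Trace identity: expanding $K_r^2=\tfrac14(W_r^2+W_rW_r^\top+W_r^\top W_r+W_r^{\top2})$ and using $\Tr(W_r^{\top 2})=\Tr(W_r^2)$ gives $2\Tr(K_r^2)=\Tr(W_r^\top W_r)+\Tr(W_r^2)$.
\end{itemize}

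\emph{Parts 3 and 4 (one-step characterisation).} These are mostly algebra, since the approximate information $\mathcal I_r^A(0)$ is defined in the statement. Solving $S_r(0)+\mathcal I_r^A(0)\rho=0$ gives $\rho=-S_r(0)/\mathcal I_r^A(0)$. Substituting \eqref{eq:restricted-score}, the numerator is
$$r\,\frac{e^\top(K_r-\mu_rI_r)e}{e^\top e},$$
and the factors $r/(e^\top e)$ cancel against those in $\mathcal I_r^A(0)$. What remains is exactly $\hat\rho_{RESAPLE}$, which is part 4.

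The substantive step is the motivation claimed in part 3. Here one must show that $\mathcal I_r^A(0)$ is a stabilised form of the restricted curvature at $\rho=0$ with $\hat\sigma_r^2(0)$ plugged in. I would proceed as follows.
\begin{itemize}
\item Differentiate the profiled restricted log-likelihood (Appendix \ref{app:RESAPLE-estimator}),
$$\ell_r^p(\rho)=-\tfrac r2\log\big(e^\top\Sigma_r^{-1}(\rho)e\big)-\tfrac12\log\det\Sigma_r(\rho),$$
twice at $0$.
\item For this I need $\Sigma_r''(0)$. Expanding $R(\rho)^{-1}=I+\rho W+\rho^2W^2+\cdots$ gives
$$\Sigma_r(\rho)=I_r+2\rho K_r+\rho^2 H^\top(W^2+WW^\top+W^{\top2})H+O(\rho^3).$$
\end{itemize}

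The main obstacle is that this curvature involves $H^\top W^2H$ rather than $W_r^2$, because $M$ sits between the two factors of $W$. So one cannot get $B_r$ exactly; one can only justify it as an approximation. The approach I would take mirrors how APLE's denominator arises. First, replace the observed curvature by its expected counterpart, which produces the $\Tr(K_r^2)$, i.e.\ $r\omega_r$, term. Second, replace the data-dependent quadratic term by $e^\top K_r^2e$, dropping the $H^\top WPWH$ cross terms. The combination $-\tfrac{r}{e^\top e}e^\top(K_r^2+\omega_rI_r)e$ then matches the expected information $-2\Tr(K_r^2)$ under the null, by part 2. That matching is the sense in which $\mathcal I_r^A(0)$ is ``motivated by'' the restricted Fisher information, and it is the justification I would state.
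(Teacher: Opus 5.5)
Your proposal is correct and follows the paper's proof essentially step for step: the same orthogonal change of basis $\tilde H=HQ$ for part 1 (you additionally verify that $Q=H^\top\tilde H$ is orthogonal, which the paper only asserts), the same trace computations for part 2, and the same cancellation of $r/(e^\top e)$ for parts 3 and 4. The paper treats the ``motivated by'' clause as descriptive, justified only by $\expect_0[e^\top B_r e]=\sigma^2\mathcal I_r(0)$ together with $e^\top e/r\approx\sigma^2$, so your second-derivative heuristic is not needed. If you keep it, note two corrections: dropping the $P$ cross terms actually yields the APLE-type pair $e^\top W_r^\top W_r e$ and $\Tr(W_r^2)$, and moving to $K_r^2+\omega_r I_r$ is a further trace-preserving symmetrisation, not a consequence of the expansion.
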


It is important to note that the standard APLE is recovered as a special case of the RESAPLE estimator when $X$ is absent and the residualised spatial weight matrix is symmetric. When $X$ is absent ($p=0$), we have $M = I_n$. Taking $H = I_n$ yields $r = n$, $e = Z$, $W_r = W$, and $K_r=K$. If, in addition, $W$ is symmetric, then $K=W$, $\mu_r=\Tr(W)/n=0$ because $W$ has zero diagonal, and
$$B_r = W^2+\frac{\Tr(W^2)}{n}I_n = W^\top W+\frac{\Tr(W^2)}{n}I_n.$$
Therefore
$$\hat \rho_{RESAPLE} = \frac{Z^\top K Z}{Z^\top(W^\top W + \nu_n I_n)Z},\qquad \nu_n=\frac{\Tr(W^2)}{n},$$
which is exactly the APLE proposed in \cite{liMoransTestingSpatial2007}. For non-symmetric weights, RESAPLE remains an APLE-type one-step statistic in the residual space, but uses the positive-definite stabilised denominator $B_r=K_r^2+\omega_r I_r$ rather than the APLE-style denominator $W_r^\top W_r+\Tr(W_r^2)I_r/r$. Finally, we note that RESAPLE depends on the data only through the error contrasts $e$ and is invariant to reparameterisations of the design matrix $X$. We state this as a corollary of the first point of Theorem \ref{thm:RESAPLE-def}.

\begin{cor}\label{cor:design-invariance}
Let $\tilde X = XQ$ for any invertible $Q \in \R^{p\times p}$, and construct $\tilde H$, $\tilde e$, $\tilde W_r$, and $\tilde K_r$ accordingly. Then $$\hat \rho_{RESAPLE}(X) = \hat \rho_{RESAPLE}(\tilde X).$$
\end{cor}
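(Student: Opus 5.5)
The plan is to reduce the corollary to part 1 of Theorem \ref{thm:RESAPLE-def}. That theorem says RESAPLE depends on $H$ only through the constraints $HH^\top = M$, $H^\top H = I_r$ and $H^\top X = 0$. It therefore suffices to show that replacing $X$ by $\tilde X = XQ$ leaves the projector $M$ unchanged. Once that is done, the admissible bases $\tilde H$ for $\tilde X$ are exactly the admissible bases for $X$.

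The first step is a direct computation of the projector. Using the invertibility of $Q$ and of $X^\top X$,
\begin{equation*}
\tilde P = XQ\,(Q^\top X^\top X Q)^{-1}Q^\top X^\top = XQ\,Q^{-1}(X^\top X)^{-1}Q^{-\top}Q^\top X^\top = P,
\end{equation*}
so $\tilde M = I_n - \tilde P = M$. As a sanity check, this agrees with the geometric fact that $\Im(XQ) = \Im(X)$ whenever $Q$ is invertible, and that the orthogonal projector depends only on the subspace. Since $\tilde X$ also has full column rank $p$, the residual dimension $\tilde r = n - p = r$ is unchanged.

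The second step is to verify that $\tilde H$ is admissible for the original $X$. We have $\tilde H\tilde H^\top = \tilde M = M$ and $\tilde H^\top \tilde H = I_r$. Moreover, $\tilde H^\top X = 0$, which follows from $\tilde H^\top X Q = 0$ together with the invertibility of $Q$, or alternatively from $\tilde H^\top X = \tilde H^\top M X = 0$.

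The third step applies part 1 of Theorem \ref{thm:RESAPLE-def}. The statistic computed from $\tilde H$ then equals the statistic computed from $H$, giving $\hat\rho_{RESAPLE}(\tilde X) = \hat\rho_{RESAPLE}(X)$.

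If a self-contained argument is preferred to citing part 1, I would note that any two admissible bases satisfy $\tilde H = HO$ with $O = H^\top\tilde H$ orthogonal, since both have the same column space $\Im(M)$. Then $\tilde e = O^\top e$, $\tilde W_r = O^\top W_r O$ and $\tilde K_r = O^\top K_r O$. Traces are invariant under orthogonal conjugation, so $\mu_r$ and $\omega_r$ are unchanged, and $\tilde B_r = O^\top B_r O$. Each quadratic form in the RESAPLE ratio is therefore preserved.

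I expect no real obstacle. The only delicate point is justifying that $O$ is orthogonal, which follows from $O^\top O = \tilde H^\top HH^\top\tilde H = \tilde H^\top M\tilde H = I_r$.
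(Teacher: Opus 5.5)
Your proposal is correct and follows essentially the same route as the paper: you show $\tilde P = P$ (the paper appeals to $\Im(\tilde X)=\Im(X)$, while you compute it directly), conclude that any admissible $\tilde H$ differs from $H$ by an orthogonal matrix, and apply part 1 of Theorem~\ref{thm:RESAPLE-def}. Your explicit check that $O = H^\top\tilde H$ is orthogonal fills in a step the paper only asserts.
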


\subsection{Properties Inherited From One-Step Estimation}\label{subsec:one-step}
It is worth emphasising that one-step estimators are well-studied, and hence, the RESAPLE inherits a number of desirable properties by virtue of being one (indeed, shared by the APLE and covariate-adjusted MAPLE estimators as well). We discuss some properties of the estimator, following \cite{vaartAsymptoticStatistics1998}.

Recall that under the standard regularity conditions for likelihood-based estimation, a one-step estimator is asymptotically linear, and it has the same first-order asymptotic distribution as the likelihood estimator that solves the corresponding score equation. In our setting,
$$\hat\rho_{RESAPLE}=-\frac{S_r(0)}{\mathcal I_r^A(0)},$$
where $\mathcal I_r^A(0)$ is the stabilised approximate restricted curvature used for the one-step update at $\rho=0$ (Theorem \ref{thm:RESAPLE-def}). Moreover, under mild conditions (Theorem \ref{thm:curvature-consistency}), $$-\frac{\mathcal I_r^A(0)}{\mathcal I_r(0)} \pconv 1,$$
where $S_r(0)$ is defined in equation \eqref{eq:restricted-score} and the restricted information at $\rho = 0$ is $$\mathcal I_r(0) = 2\Tr(K_r^2)=\Tr(W_r^\top W_r) + \Tr(W_r^2)$$ (see Lemma \ref{lem:restricted-info}). Consequently, when $\rho$ lies in a neighbourhood of $0$ that shrinks at the $\sqrt r$ scale ($r \defeq n-p$), the RESAPLE admits a first-order expansion of the form
$$\hat\rho_{RESAPLE} = -\frac{S_r(0)}{\mathcal I_r(0)} + o_p(r^{-1/2}).$$
This immediately implies asymptotic normality if $\rho=0$ (the reader might recognise this as a null assumption of no spatial dependence) and under local alternatives of order $r^{-1/2}$. We note that the APLE literature adopts the same one-step perspective for spatial dependence parameters \citep{liMoransTestingSpatial2007, liOnestepEstimationSpatial2012}.

Second, one-step estimators inherit bias properties from Taylor expansion (see for instance \cite{vaartAsymptoticStatistics1998}). Specifically, the error in the Taylor expansion comes from higher-order derivatives of the score, while the stabilised denominator is first-order equivalent to the restricted Fisher information under the conditions above. Therefore, the local approximation error (and hence local bias) of the RESAPLE around a linearisation point of $\rho=0$ is at most \emph{quadratic} in $|\rho|$ in the $\sqrt{r}$-scaling. A quadratic error is negligible when $\rho$ is small. In other words, one can treat the RESAPLE as locally \emph{approximately} unbiased for $\rho$ around $\rho = 0$.

It is worth bearing in mind that just as we discuss these properties of the RESAPLE, they are equally applicable to likelihood-based one-step estimators that use consistent curvature normalisations. Indeed, the above arguments imply that these estimators are asymptotically equivalent to full REML and ML estimation procedures for $\rho$ in a first-order neighbourhood around $\rho = 0$ under a $\sqrt{r}$-scaling. The main difference comes from the \emph{finite-sample} case. It is in this case that the RESAPLE has the benefits that we describe earlier in this paper.

\begin{figure}[pos=htp]
\centering

\resizebox{\linewidth}{!}{%
\begin{tikzpicture}[
  font=\small,
  >={Latex[length=2.2mm]},
  arr/.style={->, thick, draw=blue!65!black},
  carr/.style={->, thick, draw=gray!60},
  box/.style={rounded corners=2mm, draw=gray!35, thick, align=center, inner sep=5pt, fill=white, text width=30mm},
  op/.style={rounded corners=2mm, draw=blue!65!black, very thick, align=center, inner sep=5pt, fill=blue!8, text width=46mm},
  wide/.style={rounded corners=2mm, draw=gray!35, thick, align=center, inner sep=5pt, fill=white, text width=56mm},
  gbox/.style={draw=gray!35, thick, rounded corners=2mm, inner sep=6pt, fill=gray!2},
  subspace/.style={draw=gray!35, thick, rounded corners=1.5mm, align=center, inner sep=4pt, fill=white}
]

\node[box] (Z) {$Z\in\R^n$\\ observed response};
\node[box, below=10mm of Z] (X) {$X\in\R^{n\times p}$\\ fixed effects};
\node[box, below=10mm of X] (W) {$W\in\R^{n\times n}$\\ spatial weights\\ (diag $0$)};

\node[gbox, right=18mm of X, minimum width=64mm, minimum height=54mm] (geom) {};
\node[anchor=north, font=\bfseries, text=gray!65] at ($(geom.north)+(0,-2mm)$) {decomposition};

\node[
  draw=gray!35, rounded corners=1.5mm, thick,
  minimum width=60mm, minimum height=40mm, fill=gray!1
] (Rn) at ($(geom.center)+(0,-5mm)$) {};
\node[anchor=north west, text=gray!65] at ($(Rn.north west)+(2mm,-1.5mm)$) {$\R^n$};

\node[subspace, fill=blue!6, minimum width=56mm, minimum height=12mm]
  (imX) at ($(Rn.center)+(0,7mm)$)
  {$\Im(X)$ (fixed-effect space)};

\node[subspace, fill=gray!6, minimum width=56mm, minimum height=18mm]
  (imM) at ($(Rn.center)+(0,-10mm)$)
  {$\Im(M)=\Im(X)^\perp$\\ residual space, $r=n-p$};

\node[op, right=18mm of geom, yshift=12mm] (proj)
{$P=X(X^\top X)^{-1}X^\top$\\ $M=I-P$};

\node[op, below=10mm of proj] (Hdef)
{$H\in\R^{n\times r}$\\ $H^\top H=I_r,\ HH^\top=M$\\ $e=H^\top Z$};

\node[op, below=10mm of Hdef] (Wr)
{$W_r=H^\top W H$\\ $K_r=(W_r+W_r^\top)/2$};

\node[wide, below=10mm of Wr] (info)
{$\mathcal I_r(0)=2\Tr(K_r^2)$\\
$=\Tr(W_r^\top W_r)+\Tr(W_r^2)$};

\node[wide, below=10mm of info] (RESAPLE)
{\vspace{0.5mm}
$\displaystyle \hat\rho_{RESAPLE}
=\frac{e^\top (K_r-\mu_r I_r)e}{e^\top B_r e}$\\[1mm]
$B_r=K_r^2+\omega_r I_r$\\
$\mu_r=\Tr(K_r)/r,\ \omega_r=\Tr(K_r^2)/r$
\vspace{0.5mm}};

\node[anchor=west, font=\footnotesize\bfseries, text=gray!65] at ($(proj.east)+(3mm,0)$) {project out covariates};
\node[anchor=west, font=\footnotesize\bfseries, text=gray!65] at ($(Hdef.east)+(3mm,0)$) {residualise data};
\node[anchor=west, font=\footnotesize\bfseries, text=gray!65] at ($(Wr.east)+(3mm,0)$) {residualise weights};
\node[anchor=west, font=\footnotesize\bfseries, text=gray!65] at ($(info.east)+(3mm,0)$) {compute restricted null information};
\node[anchor=west, font=\footnotesize\bfseries, text=gray!65] at ($(RESAPLE.east)+(3mm,0)$) {compute RESAPLE};

\coordinate (branch) at ($(RESAPLE.south)+(0,-8mm)$);

\node[box, text width=34mm] (scat)  at ($(branch)+(0,-18mm)$) {RESAPLE\\ scatterplot};
\node[box, text width=34mm] (test)  at ($(scat.west)+(-46mm,0)$) {Testing\\ (exact / perm / $z$)};
\node[box, text width=34mm] (local) at ($(scat.east)+(46mm,0)$) {Local\\ contributions};

\draw[arr] (Z.east) -- ($(geom.west)+(0,12mm)$);
\draw[arr] (X.east) -- (geom.west);
\draw[arr] (W.east) -- ($(geom.west)+(0,-12mm)$);

\draw[carr] (geom.east) -- (proj.west);

\draw[carr] (proj.south) -- (Hdef.north);
\draw[carr] (Hdef.south) -- (Wr.north);
\draw[arr]  (Wr.south) -- (info.north);
\draw[arr]  (info.south) -- (RESAPLE.north);

\draw[carr] (W.east) -- ++(16mm,0) |- (Wr.west);

\draw[arr] (RESAPLE.south) -- (branch);
\draw[arr] (branch) -- (scat.north);
\draw[arr] (branch) -| (test.north);
\draw[arr] (branch) -| (local.north);

\end{tikzpicture}%
}

\caption{Overall workflow for using RESAPLE as a residual-space diagnostic for spatial dependence, in ESDA.}
\label{fig:RESAPLE-overview}
\end{figure}

\section{RESAPLE in ESDA}\label{sec:RESAPLE-esda}
While in the previous sections, we discuss some of the baseline properties of the RESAPLE, those results primarily serve as sanity checks that the estimator we propose is well behaved. The main use case of this statistic is in ESDA, as summarized in Figure \ref{fig:RESAPLE-overview}. We discuss three ESDA use cases of RESAPLE: (i) testing for autocorrelation (as a alternative to Moran's $\mathcal I_M$), (ii) constructing a RESAPLE scatterplot analogous to the Moran's $\mathcal I_M$ scatterplot, and (iii) defining a local variant of RESAPLE to diagnose local autocorrelation. A summary of the workflow is provided in Figure \ref{fig:RESAPLE-overview}.

\subsection{Testing with the RESAPLE}\label{subsec:RESAPLE-testing}
In many applications, the primary inferential task is to test for the presence of residual spatial autocorrelation after adjusting for covariates. In the Gaussian SEM, this leads to a hypothesis test for the spatial dependence parameter
$$H_0:\rho = 0 \qquad\text{versus}\qquad H_1:\rho > 0,$$
where the one-sided alternative reflects the usual interest in testing for positive spatial autocorrelation between neighbouring areal units. In general, practitioners default to Moran's $\mathcal I_M$ computed on unadjusted data or on regression residuals. For Gaussian SAR or SEM models without covariates, and under suitable regularity conditions, it is rather well-known that $\mathcal I_M$ yields a locally most powerful invariant test for $H_0:\rho=0$ against local alternatives in which $\rho$ is small and positive (discussed for instance in \cite{liMoransTestingSpatial2007}). However, it is well documented that Moran's $\mathcal I_M$ is severely biased as an estimator of $\rho$ away from the null and that its finite-sample distribution can differ from its asymptotic approximation, especially when the residual dimension is small or when adjacency induces non-regular spectral structure \cite{rederMoreAPLEStatistic2009}. In particular, when covariates are present, the usual procedure of computing Moran's $\mathcal I_M$ on residuals obtained from a fixed effects model is not aligned with the likelihood for $\rho$ and does not account for the contribution of $X$ to the covariance structure.

The RESAPLE framework we propose provides a natural alternative test statistic that is constructed in the residual space and is directly linked to the restricted likelihood. In what follows (Figure \ref{fig:RESAPLE-overview}), we regard the RESAPLE estimator as the basic test statistic for $H_0:\rho=0$. There are three potential ways to test with $\hat \rho_{RESAPLE}$:
\begin{enumerate}
    \item exact testing,
    \item Freedman-Lane permutation testing, and
    \item (approximate) $z$-testing.
\end{enumerate}
We discuss each method and their benefits and drawbacks in turn.

\paragraph{Exact testing.}
Assuming the Gaussian SEM as we have throughout, there is an exact reference law for the RESAPLE statistic. Indeed, suppose the null of $H_0:\rho=0$. Lemma \ref{lem:error-dist} implies $e=H^\top Z\sim N(0,\sigma^2 I_r)$ with $r=n-p$. Under this null, $\hat\rho_{RESAPLE}$ is a ratio of quadratic forms in a Gaussian vector, with positive denominator whenever $K_r\neq0$. Thus its null distribution can be evaluated using classical methods for quadratic forms in normal variables. In particular, for any observed value $t\in\R$, define $$D_t \defeq K_r-\mu_r I_r - tB_r.$$ Then
\begin{equation}\label{eq:RESAPLE-qf-event}
\prob_0\big(\hat\rho_{RESAPLE}\geq t\big) = \prob_0\big(e^\top D_t e \geq 0\big).
\end{equation}
The matrix $D_t$ is deterministic and symmetric. After diagonalising $D_t$, the quadratic form $e^\top D_t e$ has the same distribution as a finite weighted sum of independent $\chi^2_1$ variables. Numerical computation of the tails is standard in the literature, for instance by Imhof's method \citep{imhof1961}. This gives an exact test under the Gaussian null.

A key robustness observation is that the exact same reference law also holds for a spherically symmetric extension to the standard Gaussian SEM. Start by replacing the Gaussian innovation in \eqref{eq:sem} by $\sigma\xi$, where $\xi$ is spherically symmetric. Thus
$$Z=X\beta+U,\qquad U=\rho WU+\sigma\xi.$$
Under $H_0:\rho=0$, this gives $Z=X\beta+\sigma\xi$ and hence $e=\sigma H^\top\xi$. The statistic $\hat\rho_{RESAPLE}$ is invariant to rescaling of $e$, so it depends on $e$ only through its direction. Since the direction of $H^\top\xi$ is uniform on the unit sphere in $\R^r$, the null distribution is the same as in the Gaussian case. This is the residual-space version of the invariant spherical testing argument in \citet{kingRobustTestsSpherical1980}, and it is consistent with the invariant formulation of residual spatial autocorrelation tests in linear regression \citep{martellosio2010power}. The formal result is given in Appendix \ref{app:RESAPLE-testing}. The exact test therefore does not require Gaussian tails under the null as the Gaussian SEM would suppose, but it does require spherical innovations in the SEM. This encompasses a number of heavier-tailed distributions than the multivariate Normal, including the multivariate-$t$, or in general, scale-mixtures of Gaussians (which may also yield lighter tails than a Gaussian). Using this test requires an eigendecomposition of a matrix of dimension $r$, and its numerical cost can therefore be non-trivial when $r$ is large.

\paragraph{Freedman-Lane permutation testing.}
If we wish to avoid over-reliance on this distributional assumption, we can instead use a residual-based randomisation approach, per \citet{freedman-lane-1983}. Under $H_0:\rho=0$, the mean structure is non-spatial and the fixed effects can be removed by ordinary least squares. Let $$\hat Z=PZ,\qquad \hat u=MZ.$$ For each randomly sampled permutation matrix $\Pi_b\in\R^{n\times n}$, the Freedman-Lane procedure constructs the pseudo-response
$$Z^{(b)}=\hat Z+\Pi_b\hat u.$$
The RESAPLE statistic is then recomputed on $Z^{(b)}$ using the same design matrix $X$ and the same spatial weight matrix $W$. Equivalently, with $e^{(b)}=H^\top Z^{(b)}$, we compute
$$\hat\rho^{(b)}_{RESAPLE}=\frac{(e^{(b)})^\top(K_r-\mu_r I_r)e^{(b)}}{(e^{(b)})^\top B_r e^{(b)}}.$$
For the one-sided alternative $H_1:\rho>0$, the permutation p-value is
$$\hat p_{FL}=\frac{1+\sum_{b=1}^L \indicator\{\hat\rho^{(b)}_{RESAPLE}\geq \hat\rho_{RESAPLE}\}}{L+1}.$$
For a two-sided test, one replaces the event above by $|\hat\rho^{(b)}_{RESAPLE}|\geq |\hat\rho_{RESAPLE}|$. This procedure preserves the fitted large-scale mean while randomising the residual spatial arrangement. It is therefore well aligned with residual-based ESDA, and it avoids defining the randomisation operation through an arbitrary coordinate system for the residual space. The method is straightforward to implement and avoids heavy eigendecompositions. It can, however, require many permutations for stable tail probability estimation.

\paragraph{Approximate $z$-testing.}
Finally, one may appeal to the asymptotic normal approximation implied by the one-step construction, discussed in Section \ref{subsec:one-step}. This motivates us to use the test statistic $$Z_{RESAPLE} = \sqrt{\mathcal I_r(0)}\hat\rho_{RESAPLE},\qquad \mathcal I_r(0)=\Tr(W_r^\top W_r)+\Tr(W_r^2),$$ and compare $Z_{RESAPLE}$ to the upper tail of a standard normal distribution under $H_0$. This method is of course, computationally simple. 

Each of these three methods has their benefits and drawbacks. It is worth briefly commenting on the alternatives to RESAPLE. As has been the theme in this paper, tests based on RESAPLE are better aligned with covariate adjusted workflows than APLE or MAPLE. The main concern the reader may have though, is why not a simply use a likelihood ratio statistic derived from the restricted likelihood for the SEM, which compares the restricted likelihood maximised under $\rho=0$ to that maximised over $\rho\in\Theta$ \citep{anselinLagrangeMultiplierTest1988}? It is true that this approach is appealing when one intends to fit the SEM by REML in any case. It can, however, be computationally burdensome in large problems, since it requires repeated evaluation of $\log|R(\rho)|$ and numerical optimisation over $\rho$. In the ESDA setting where one seeks an exploratory check, RESAPLE remains low-cost while just as accurate as REML asymptotically, and accurate for small $\rho$ (which is typically the regime one would wish to test autocorrelation for--when autocorrelation is not exactly obvious).

\subsection{Scatterplot Diagnostics}\label{subsec:RESAPLE-scatter}
In this subsection we develop a RESAPLE analogue of the Moran scatterplot and of the APLE scatterplot, as illustrated schematically in Figure \ref{fig:RESAPLE-scatter-schematic}. The main idea here is that statistics of the form 
$$\frac{e^\top A e}{e^\top B e},$$ 
with $A$ symmetric and $B$ positive definite, define a generalised Rayleigh quotient. Such a ratio admits a canonical regression interpretation after a whitening transformation determined by $B$. This yields a scatterplot whose ordinary least squares slope through the origin equals the ratio itself, and whose residuals provide a diagnostic for how well the implied one-dimensional summary represents the transformed contrasts. When applied to Moran-type statistics, this reproduces the classical Moran scatterplot \citep{anselin1996}, as with the APLE scatterplot when applied to APLE statistics \citep{liMoransTestingSpatial2007}. Of course, when we apply it to the RESAPLE, we continue to get a scatterplot whose slope equals $\hat\rho_{RESAPLE}$, but the key distinction is that the axes of the scatterplot are defined intrinsically in the restricted residual space.

To construct the scatterplot, recall that 
$$\hat\rho_{RESAPLE}=\frac{e^\top A_r e}{e^\top B_r e},\qquad A_r \defeq K_r-\mu_r I_r,\qquad B_r \defeq K_r^2+\omega_r I_r.$$ 
Then $A_r$ is symmetric and, whenever $K_r\neq0$, $B_r$ is symmetric positive definite. Define the whitened contrasts $x,y\in\R^r$ by
\begin{equation}\label{eq:RESAPLE-whitened}
x \defeq B_r^{1/2} e, \qquad y \defeq B_r^{-1/2}A_r e.
\end{equation}
Then $$x^\top x = e^\top B_r e, \qquad x^\top y = e^\top A_r e.$$ 
A straightforward calculation shows then that the ordinary least squares slope through the origin in the regression of $y$ on $x$ equals 
$$\frac{x^\top y}{x^\top x}=\frac{e^\top A_r e}{e^\top B_r e}=\hat\rho_{RESAPLE},$$ which is the standard regression interpretation used in Moran and APLE scatterplots \citep{anselin1996, liMoransTestingSpatial2007}. The key idea here is that the vectors $x$ and $y$ live in the residual space $\R^r$ and \emph{not} in the fixed-effect space. To obtain unit-level coordinates indexed by the original spatial units, define
\begin{equation}\label{eq:RESAPLE-tilde-xy}
\tilde x \defeq Hx = H B_r^{1/2} e,
\qquad
\tilde y \defeq Hy = H B_r^{-1/2}A_r e.
\end{equation}
We refer to the components $(\tilde x_i,\tilde y_i)$ for $i\in\{1,\dots,n\}$ as the plotted coordinates. Since $H^\top H=I_r$, these satisfy $$\sum_{i=1}^n \tilde x_i^2 = \tilde x^\top \tilde x = x^\top x = e^\top B_r e, \qquad \sum_{i=1}^n \tilde x_i\tilde y_i = \tilde x^\top \tilde y = x^\top y = e^\top A_r e. $$ 
Consequently,
\begin{equation}\label{eq:RESAPLE-contrib}
\hat\rho_{RESAPLE}=\frac{\sum_{i=1}^n \tilde x_i\tilde y_i}{\sum_{i=1}^n \tilde x_i^2}.
\end{equation}
In other words, equation \eqref{eq:RESAPLE-contrib} provides a direct decomposition of $\hat\rho_{RESAPLE}$ into unit-level contributions $\tilde x_i\tilde y_i$, with weights determined by $\tilde x_i^2$ (Figure \ref{fig:RESAPLE-scatter-schematic}).

\begin{rem}
Although the formula uses an orthonormal basis $H$ for the residual space, the resulting plotted coordinates are not a function of a choice of basis. Indeed, if $\tilde H=HQ$ for an orthogonal $r\times r$ matrix $Q$, then $\tilde e=Q^\top e$, $\tilde A_r=Q^\top A_rQ$, and $\tilde B_r=Q^\top B_rQ$. Taking $B_r^{1/2}$ and $B_r^{-1/2}$ to denote the principal symmetric matrix square roots gives $\tilde B_r^{1/2}=Q^\top B_r^{1/2}Q$ and $\tilde B_r^{-1/2}=Q^\top B_r^{-1/2}Q$. Hence the transformed residual-space coordinates rotate as $\tilde x=Q^\top x$ and $\tilde y=Q^\top y$, while the mapped unit-level coordinates satisfy $\tilde H\tilde x=Hx$ and $\tilde H\tilde y=Hy$. Thus the scatterplot is basis-invariant. In computation, we take $H$ to be the orthonormal complement of $X$ obtained from a QR decomposition.
\end{rem}

It is worth noting that RESAPLE scatterplot provides information that Moran and APLE scatterplots do not provide when covariates are present. First, the construction in \eqref{eq:RESAPLE-tilde-xy} depends on $(X,W)$ through the restricted operators $(W_r,K_r)$, so the plot reflects the interaction between the spatial topology and the residual space after adjusting for the covariates, $X$. This differs from the Moran scatterplot computed on raw data, and (more importantly) it also differs from the practice of plotting a residual against a spatial lag of that residual, which does not incorporate the restricted operators that come from the restricted likelihood. Second, the denominator $e^\top B_r e$ in \eqref{eq:RESAPLE-contrib} is obtained from the RESAPLE construction and connects to the curvature quantities such as $\mathcal{I}_r(0)$. The terms in the denominator of \eqref{eq:RESAPLE-contrib} (i.e. $\tilde x_i^2$), therefore identify spatial units that contribute strongly to this curvature in the residual space. This yields a diagnostic for where the information about $\rho$ concentrates after adjusting for $X$, which the Moran and APLE scatterplots do not directly display.

\begin{figure}[pos=htp]
\centering
\resizebox{\linewidth}{!}{%
\begin{tikzpicture}[
  x=1mm, y=-1mm,
  font=\small,
  >={Latex[length=2.2mm]},
  arr/.style={->, thick, draw=blue!65!black},
  carr/.style={->, thick, draw=purple!70!black},
  frame/.style={draw=gray!35, fill=gray!2, rounded corners=2mm, thick},
  box/.style={rounded corners=2mm, draw, thick, align=center, inner sep=5pt, fill=gray!4},
  op/.style={rounded corners=2mm, draw, thick, align=center, inner sep=5pt, fill=blue!6},
  note/.style={rounded corners=2mm, draw, thick, align=left, inner sep=4.5pt, fill=white, font=\scriptsize},
  grid/.style={draw=gray!18},
  axis/.style={draw=gray!65, thick},
  quad/.style={draw=gray!45, thick, dashed},
  pt/.style={circle, draw=blue!65!black, fill=blue!6, thick, inner sep=0.9pt},
  lfit/.style={draw=blue!65!black, very thick}
]

\path[frame] (0,0) rectangle (180,74);
\node[font=\scriptsize, text=gray!55] at (90,6) {RESAPLE scatterplot diagnostic};

\node[op, anchor=north west, text width=54mm] (def) at (6,12) {%
\textbf{Definition}\\[0.5mm]
$\displaystyle \hat\rho_{RESAPLE}=\frac{e^\top A_r e}{e^\top B_r e}$
};

\node[box, anchor=north west, text width=54mm] (coords) at (6,30) {%
\textbf{Plotted points}\\[0.5mm]
$\tilde x=HB_r^{1/2}e,\quad \tilde y=HB_r^{-1/2}A_r e$
};

\node[box, anchor=north west, text width=54mm] (contrib) at (6,48) {%
\textbf{Contribution}\\[0.5mm]
$C_i=\tilde x_i\tilde y_i$\qquad (weight $\tilde x_i^2$)
};

\begin{scope}[shift={(68,12)}]
  \path[frame] (0,0) rectangle (72,54);
  \node[font=\bfseries, text=gray!65] at (36,10) {RESAPLE scatterplot};

  \def\xL{8}   \def\xR{66}
  \def\yT{14}  \def\yB{48}
  \def\x0{37}
  \def\y0{31}

  \foreach \x in {12,20,...,64} \draw[grid] (\x,\yT) -- (\x,\yB);
  \foreach \y in {18,24,...,46} \draw[grid] (\xL,\y) -- (\xR,\y);

  \draw[axis, -Latex] (\xL,\y0) -- (\xR+3,\y0);
  \draw[axis, -Latex] (\x0,\yB) -- (\x0,\yT-3);

  \draw[quad] (\x0,\yT) -- (\x0,\yB);
  \draw[quad] (\xL,\y0) -- (\xR,\y0);

  \node[font=\scriptsize, text=gray!65, anchor=north] at (\x0,\yB+8) {$\tilde x_i$};
  \node[font=\scriptsize, text=gray!65, rotate=90, anchor=south] at (\xL-8,\y0) {$\tilde y_i$};

  \draw[lfit]
    (14,45) -- (60,19)
    node[
      pos=0.60, sloped, below,
      font=\scriptsize,
      text=blue!65!black,
      fill=white, fill opacity=0.75, text opacity=1,
      inner sep=1pt
    ] {$\text{slope}=\hat\rho_{\mathrm{RESAPLE}}$};

  \node[pt] (pL) at (62,36) {};

  \foreach \p in {(18,22),(24,28),(30,24),(34,32),(38,28),
                 (46,26),(50,22),(54,30),(58,30),
                 (28,42),(48,40),(60,42)}
    \node[pt] at \p {};

  \begin{scope}[
    qnote/.style={
      note,
      align=center,
      inner sep=2.0pt,
      fill=white, fill opacity=0.85, text opacity=1,
      font=\scriptsize
    }
  ]
    \node[qnote, anchor=north west] at (\xL+0.2,\yT+0.2) {\textbf{QII}\;$C_i<0$};
    \node[qnote, anchor=north east] at (\xR-0.2,\yT+0.2) {\textbf{QI}\;$C_i>0$};
    \node[qnote, anchor=south west] at (\xL+0.2,\yB-0.2) {\textbf{QIII}\;$C_i>0$};
    \node[qnote, anchor=south east] at (\xR-0.2,\yB-0.2) {\textbf{QIV}\;$C_i<0$};
  \end{scope}
\end{scope}

\node[note, anchor=north west, text width=34mm] (lev) at (142,54) {%
\textbf{Leverage}\\large $|\tilde x_i|$
};
\draw[carr] (pL.east) -- (lev.west);

\end{tikzpicture}%
}
\caption{Schematic RESAPLE scatterplot. The slope equals $\hat\rho_{RESAPLE}$. Unit contributions are $C_i=\tilde x_i\tilde y_i$ with leverage proportional to $\tilde x_i^2$.}
\label{fig:RESAPLE-scatter-schematic}
\end{figure}

\subsection{Local Testing}\label{subsec:RESAPLE-local-testing}
So far we have been working with global ESDA summaries, which are primarily designed to detect or quantify dependence aggregated over the entire spatial domain. In practice, dependence in the data is usually spatially heterogeneous, with clustering that often is not well captured by global statistics. This motivates the use of local indicators of spatial association, where each spatial unit is assigned a local statistic whose extremes identify candidate hot spots in the data. The classical example is the local Moran statistic in \citet{anselin1995}.

Just as one can construct a local Moran statistic, one can do the same for the RESAPLE. However the key distinction from classical local Moran diagnostics (as with local APLE) is conceptual. Local Moran statistics provide a local assessment of spatial association in the original data or residual scale \citep{anselin1995}. In contrast, the RESAPLE is a ratio in the residual space.

Recall that the RESAPLE can be written as a weighted sum of local contributions, as in \eqref{eq:RESAPLE-contrib}. This motivates a local contribution statistic in the form
\begin{equation}\label{eq:RESAPLE-local-ci}
C_i \defeq \tilde x_i\tilde y_i, \qquad i\in\{1,\dots,n\}.
\end{equation}
The sign of $C_i$ indicates whether unit $i$ contributes positively or negatively to the global numerator $e^\top A_r e$, and its magnitude determines the strength of the contribution. Since $B_r\succ0$ whenever $K_r\neq0$, the denominator in \eqref{eq:RESAPLE-contrib} is positive. Units with large positive $C_i$ push $\hat\rho_{RESAPLE}$ upward, while units with large negative $C_i$ push $\hat\rho_{RESAPLE}$ downward.

Further, if we normalise the local contributions by the global denominator and define
\begin{equation}\label{eq:RESAPLE-local-si}
S_i \defeq \frac{C_i}{\sum_{j=1}^n \tilde x_j^2} = \frac{\tilde x_i\tilde y_i}{e^\top B_r e},
\end{equation}
this normalisation yields $\sum_{i=1}^n S_i=\hat\rho_{RESAPLE}$, so $S_i$ provides a direct unit-level decomposition of the RESAPLE itself. 

Clearly, the local statistics have a close connection to the RESAPLE scatterplot. This can be seen by partitioning the scatterplot units into four quadrants according to the signs of $(\tilde x_i,\tilde y_i)$, essentially like the Moran scatterplot \citep{anselin1996}. Units in the first and third quadrants satisfy $C_i>0$ and contribute positively to RESAPLE, just as units in the second and fourth quadrants satisfy $C_i<0$ and contribute negatively. 

Beyond the interpretation of the local RESAPLE statistics in terms of the scatterplot, we can also define local tests (just as there are local tests defined based on the local Moran statistics). The procedure of testing by permutation is essentially identical to the procedure for the global RESAPLE statistic. The key distinction is that to control the family-wise error rate or the false discovery rate (generally the latter in ESDA), one must conduct multiple testing corrections.  

\section{Impact of the Choice of $W$}\label{sec:impact-W}
Throughout this paper, we have been ignoring the important role that the choice of $W$ plays. Indeed, a central practical difficulty in spatial regression is that the spatial weight matrix $W$ is rarely determined uniquely by the research problem. In many applied analyses, one can envision the data analyst beginning with a family of plausible neighbourhood structures, and then selecting one weight matrix as a compromise between interpretability and power \citep{anselinLagrangeMultiplierTest1988}. We can formalise this by defining a candidate class of plausible weight matrices that faithfully describe the spatial topology, $$\mathcal W \subseteq \R^{n\times n},$$ where $\mathcal W$ may include, for example, contiguity-based weights such as rook or queen adjacency, $k$ nearest neighbour weights, or distance-band weights.

Very often, there are two considerations the hypothetical data analyst has to consider. First, there is the question of neighbourhood size. Increasing the number of neighbours reduces variance in the spatial lag, but it also changes the scale of $W$ under standard normalisations, and it can affect boundary behaviour. In particular, for row-standardised weights, boundary units with fewer neighbours receive larger per-neighbour weights, which may amplify edge effects. Second, there is the question of regularity. Weight matrices arising from regular lattices often yield more stable spectral behaviour, while non-regular adjacency structures can produce more uneven residualised spectra. This issue is emphasised in \citet{rederMoreAPLEStatistic2009}, who examine maximally connected planar structures from the B-series with fixed size $n=8$ and fixed overall connectivity, and report that for certain small irregular structures, the APLE loses its empirical advantage over Moran-type diagnostics at moderate dependence levels, exhibiting both appreciable bias and increased variance when $\rho$ is not close to $0$. We use B07 in the same spirit: it is an adverse finite-sample benchmark inherited from the APLE literature, with the understanding that it is not a representative model for all possible irregular spatial graphs.

The statistic that one needs to understand the local behaviour of the RESAPLE is the restricted information at the null, which, as shown in Lemma \ref{lem:restricted-info}, satisfies $$\mathcal I_r(0) = \Tr(W_r^\top W_r)+\Tr(W_r^2)=2\Tr(K_r^2).$$ It follows from local asymptotic theory that the local asymptotic power of likelihood based tests tracks the Fisher information \citep{vaartAsymptoticStatistics1998}. Within a candidate class $\mathcal W$, weight matrices with larger $\mathcal I_r(0)$ yield greater local power. In other words, $\mathcal I_r(0)$ gives an analytic measure of detectability for spatial dependence using RESAPLE.

It is also true that neighbourhood size and edge effects can be related analytically to $\mathcal I_r(0)$ for common choices of weight matrix. One widely used choice is row-standardised contiguity weights derived from a binary adjacency matrix. The following result expresses $\mathcal I_r(0)$ in terms of the out-degrees of the areal units. 

\begin{thm}\label{thm:W-local-info}
Let $G\in\{0,1\}^{n\times n}$ be a directed adjacency matrix with zero diagonal and out-degrees $d_i\defeq\sum_{j=1}^n g_{ij}>0$. Let $D=\diag(d_1,\dots,d_n)$ and $W=D^{-1}G$. Let $M=I_n-X(X^\top X)^{-1}X^\top$ and let $K=(W+W^\top)/2$. Then the restricted information at $\rho=0$ satisfies
\begin{equation}\label{eq:Ir-basis-free}
\mathcal I_r(0)=2\Tr(K_r^2)=2\Tr(MKMK),
\end{equation}
and, in particular,
\begin{equation}\label{eq:Ir-upper}
\mathcal I_r(0)\leq 2\Tr(K^2)=\Tr(W^\top W)+\Tr(W^2).
\end{equation}
Moreover, for row-standardised adjacency weights,
\begin{equation}\label{eq:degree-identities}
\Tr(W^\top W)=\sum_{i=1}^n \frac{1}{d_i},
\qquad
\Tr(W^2)=\sum_{i=1}^n\sum_{j=1}^n \frac{g_{ij}g_{ji}}{d_i d_j}.
\end{equation}
If $G$ is symmetric, then $\Tr(W^2)=2\sum_{\{i,j\}\in E} (d_i d_j)^{-1}$, with $E$ the set of edges. If, further, the graph is $d$-regular, then $\Tr(W^\top W)+\Tr(W^2)=2n/d$.
\end{thm}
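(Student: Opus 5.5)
The plan is to prove each identity in turn, starting from the formula $\mathcal I_r(0)=2\Tr(K_r^2)$ of Lemma \ref{lem:restricted-info}. Because $W_r=H^\top WH$, we have $W_r^\top=H^\top W^\top H$ and hence $K_r=H^\top KH$. Then
$$\Tr(K_r^2)=\Tr(H^\top KHH^\top KH)=\Tr(HH^\top K HH^\top K)=\Tr(MKMK).$$
The middle step uses cyclicity of the trace, and the last uses $HH^\top=M$. This gives \eqref{eq:Ir-basis-free}, and it also shows directly that $\mathcal I_r(0)$ does not depend on the choice of $H$.

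For the upper bound \eqref{eq:Ir-upper}, I will compare Frobenius norms. Since $K$ and $M$ are symmetric, $\Tr(MKMK)=\|MKM\|_F^2$. The operator $A\mapsto MAM$ is the composition of left and right multiplication by an orthogonal projector, so it is itself an orthogonal projection on $\R^{n\times n}$ with the Frobenius inner product. It therefore cannot increase the norm: $\|MKM\|_F\le\|K\|_F$. An alternative route is $\|MKM\|_F\le\|M\|_2\|KM\|_F\le\|KM\|_F\le\|K\|_F$. Either way, $\Tr(MKMK)\le\Tr(K^2)$.

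The identity $2\Tr(K^2)=\Tr(W^\top W)+\Tr(W^2)$ follows by expanding $\tfrac12\Tr\big((W+W^\top)^2\big)=\tfrac12\big(\Tr(W^2)+2\Tr(W^\top W)+\Tr(W^{\top2})\big)$ and noting that $\Tr(W^{\top 2})=\Tr(W^2)$.

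Next I compute the degree identities \eqref{eq:degree-identities} entrywise, using $w_{ij}=g_{ij}/d_i$. For the first,
$$\Tr(W^\top W)=\sum_{i,j}w_{ij}^2=\sum_i d_i^{-2}\sum_j g_{ij}^2=\sum_i d_i^{-2}d_i=\sum_i 1/d_i,$$
where $g_{ij}^2=g_{ij}$ because the entries are binary. For the second, $\Tr(W^2)=\sum_{i,j}w_{ij}w_{ji}=\sum_{i,j}g_{ij}g_{ji}/(d_id_j)$.

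If $G$ is symmetric, then $g_{ij}g_{ji}=g_{ij}$. Each undirected edge $\{i,j\}$ then appears exactly twice in the ordered double sum, and the zero diagonal of $G$ excludes the terms with $i=j$. This gives $\Tr(W^2)=2\sum_{\{i,j\}\in E}(d_id_j)^{-1}$.

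For the $d$-regular case, $\Tr(W^\top W)=n/d$. The edge count is $|E|=nd/2$, so $\Tr(W^2)=2\cdot(nd/2)/d^2=n/d$, and the sum is $2n/d$.

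I expect no step to be a real obstacle. The only point needing care is the inequality step, where one must confirm that the conjugation $A\mapsto MAM$ is a Frobenius-contractive projection.
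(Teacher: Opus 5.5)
Your proposal is correct and follows the paper's proof essentially step for step. Both arguments use the cyclic-trace reduction $\Tr(K_r^2)=\Tr(MKMK)$, the Frobenius-contraction bound $\|MKM\|_F\le\|K\|_F$, and the entrywise computation with $w_{ij}=g_{ij}/d_i$ for the degree identities; your justification that $A\mapsto MAM$ is a Frobenius-orthogonal projection is just a slightly more explicit version of the paper's contractivity remark.
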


The bound \eqref{eq:Ir-upper} and the identities \eqref{eq:degree-identities} give us two immediate qualitative consequences for row-standardised weights. First, the term $\sum_i d_i^{-1}$ increases when the graph contains many low-degree nodes, which occurs naturally at boundaries for contiguity graphs. If the inequality \eqref{eq:Ir-upper} is relatively tight, then this shows that the boundary areal units disproportionately impact the unrestricted information and hence RESAPLE statistic, compared to interior nodes. This is an analytic proof of what is known as \emph{edge effects} in spatial statistics. In general, the identities \eqref{eq:degree-identities} also show that uniformly increasing degrees $d_i$ decrease the Fisher information, which generally decreases detectability of spatial dependence, making the test less powerful. 

Beyond just changing the information, the topology encoded by $W$ can also change the shape of the sampling distribution of the RESAPLE. To study this phenomenon, suppose for a moment that the residual contrasts $e$ are Gaussian in $\R^r$. In this case, noting that the RESAPLE is a Rayleigh coefficient, one can show that its distribution is determined by the spectrum of the corresponding residual-space operators. When $r$ is small, or when this spectrum is highly uneven, the resulting distribution can deviate strongly from a Gaussian approximation. This spectral mechanism provides a theoretical explanation for the type of phenomena reported in \citet{rederMoreAPLEStatistic2009} for the small B-series graphs, and it motivates the use of either the exact Gaussian null reference distribution or Freedman-Lane permutation testing when the effective residual dimension is small or the residualised spectrum is concentrated. A heuristic rationale for this is given by Theorem \ref{thm:moran-spectrum-law} in Appendix \ref{app:impact-W}.

In summary, the choice of $W$ affects RESAPLE in two ways. It affects the restricted information $\mathcal I_r(0)$, which controls local power and local precision near $\rho=0$. It also affects its distributional shape through the spectrum of residualised operators such as $K_r$, which becomes particularly relevant in small samples and irregular neighbourhood structures. The key idea here is that when $\rho$ is expected to be close to $0$ and $r$ is moderate to large, Theorem \ref{thm:W-local-info} suggests that $\mathcal I_r(0)$ provides a principled way to compare candidate weights for local detectability. In settings where $r$ is small with high eigenvalue multiplicities, exact testing is a better option than approximate $z$-testing with RESAPLE. 

\section{Simulation Study}\label{sec:simulations}

We conduct a simulation study to evaluate RESAPLE under controlled spatial error data generating mechanisms. The simulations address four questions. First, they examine how the restricted information $\mathcal I_r(0)$ guides the choice of the lattice weight matrix. Second, they compare the bias, standard deviation, and RMSE of RESAPLE with residual Moran's $\mathcal I_M$, residual APLE, MAPLE, and REML. Third, they assess empirical size and power for testing $H_0:\rho=0$ against $H_1:\rho>0$. Fourth, they examine robustness under non-Gaussian error distributions.

Across simulations, the response is generated from the spatial error model
$$Z=X\beta+U,\qquad U=\rho WU+\epsilon.$$
Unless stated otherwise, $\epsilon\sim N(0,I_n)$, $\rho\in\{0,0.05,\ldots,0.95\}$, and all spatial weights are row-standardised. The nominal level is fixed at $\alpha=0.05$. Permutation tests use $199$ Freedman-Lane residual permutations. Each design point uses $1000$ Monte Carlo replicates.

The design matrix $X$ is fixed within each design. It always contains an intercept. The first non-intercept covariate is constructed from the first spatial coordinate with a small independent Gaussian perturbation. The second is constructed analogously from the second spatial coordinate. Further covariates are independent Gaussian variables. All non-intercept covariates are standardised column-wise. The regression coefficients are fixed as $\beta_1=1$ and $\beta_j=0.6/\sqrt{j-1}$ for $j\geq 2$. This keeps the systematic component comparable as the number of covariates changes.

We consider three graph classes. The first is a regular $m\times m$ lattice, with $n=m^2$ and $n\in\{25,100,400\}$. For the lattice experiments, the candidate set of weights is
$$\mathcal W =\{\mathrm{rook},\mathrm{queen},\mathrm{knn4},\mathrm{knn6},\mathrm{knn8}\}.$$
The second graph is B07, a fixed eight-node binary adjacency structure that is known to be difficult for some spatial autocorrelation diagnostics. The B07 matrix is row-standardised, and its coordinates are assigned on the unit circle only for covariate construction. The third graph is a fixed large irregular planar graph with $n=128$. It is generated by repeated triangular face subdivision. At each step, a triangular face is sampled with probability proportional to the square root of its area. A new point is placed inside that face using Dirichlet barycentric weights with common shape parameter $2.5$, and the new vertex is connected to the three vertices of the selected face. This gives a connected irregular graph with average degree $5.91$, minimum degree $3$, and maximum degree $42$.

Figure~\ref{fig:sim-graphs} shows representative graphs used in the simulations. The lattice panels show the selected rook graph and the denser knn8 candidate on the $10\times 10$ lattice. The remaining panels show B07 and the large irregular graph. Table~\ref{tab:sim-weight-info} reports the lattice candidate weights and their restricted information. The complete graph diagnostics for all $n$, $p$, and $W$ are given in Appendix~\ref{app:simulations}.

\begin{figure*}[pos=!htpb]
\centering
\begin{subfigure}[t]{0.245\textwidth}
\centering
\includegraphics[width=\linewidth]{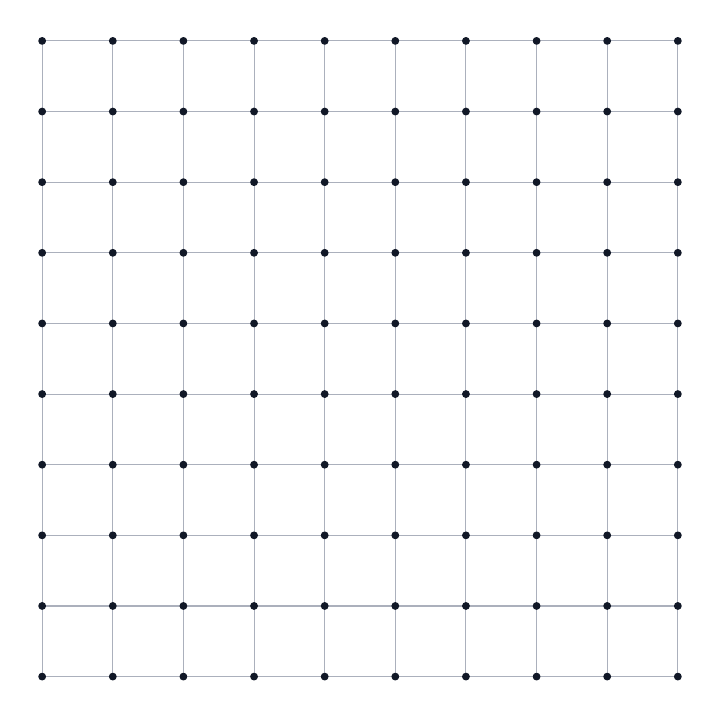}
\caption{\footnotesize Lattice, rook.}
\label{subfig:sim-graph-rook}
\end{subfigure}\hfill
\begin{subfigure}[t]{0.245\textwidth}
\centering
\includegraphics[width=\linewidth]{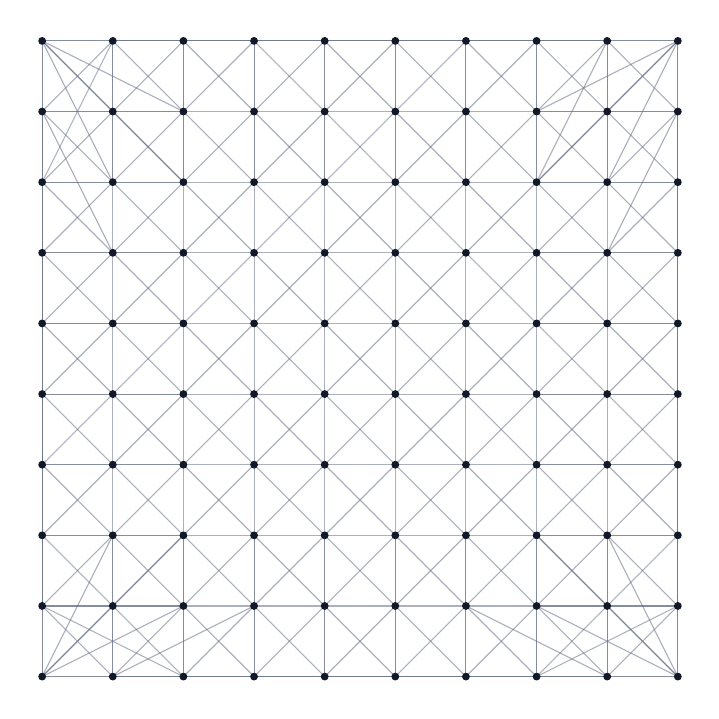}
\caption{\footnotesize Lattice, knn8.}
\label{subfig:sim-graph-knn8}
\end{subfigure}\hfill
\begin{subfigure}[t]{0.245\textwidth}
\centering
\includegraphics[width=\linewidth]{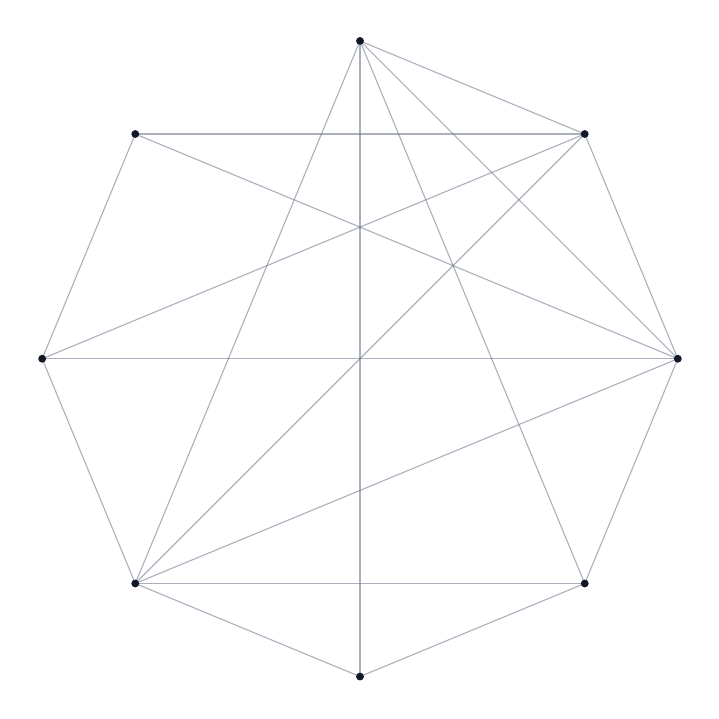}
\caption{\footnotesize B07.}
\label{subfig:sim-graph-b07}
\end{subfigure}\hfill
\begin{subfigure}[t]{0.245\textwidth}
\centering
\includegraphics[width=\linewidth]{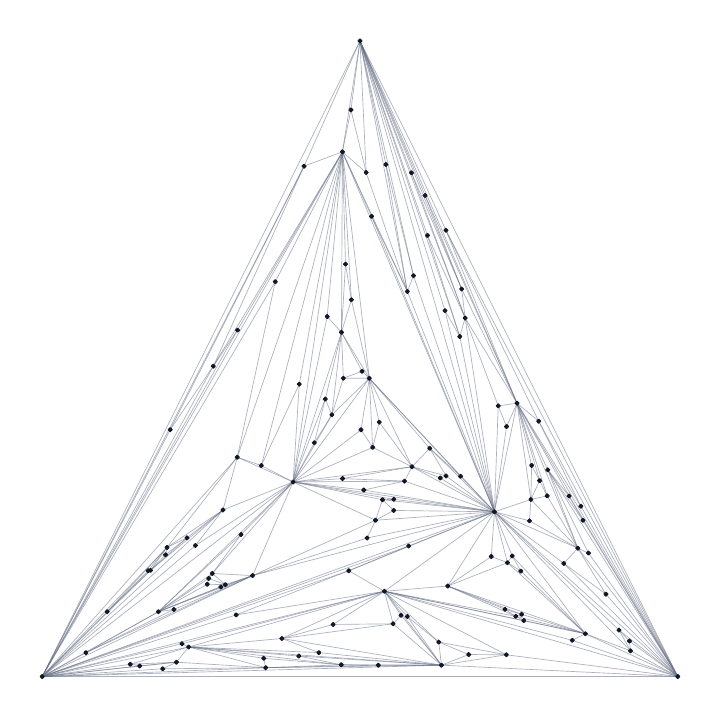}
\caption{\footnotesize Large irregular graph.}
\label{subfig:sim-graph-irregular128}
\end{subfigure}
\caption{Representative spatial graphs used in the simulation study. The lattice candidates are compared through $\mathcal I_r(0)$ before estimation and testing are assessed. B07 and the large irregular graph use their fixed adjacency matrices.}
\label{fig:sim-graphs}
\end{figure*}

\begin{table*}[pos=!htpb]
\centering
\caption{Lattice candidate weights and restricted information. Degree summaries and density are shown for the $10\times 10$ lattice. The fifth column reports $\mathcal I_r(0)$ for $n=100$ and $p=5$. The mean is taken over $n\in\{25,100,400\}$ and $p\in\{1,5,20\}$.}
\footnotesize
\renewcommand{\arraystretch}{1.15}
\setlength{\tabcolsep}{6pt}
\begin{tabular*}{\textwidth}{@{\extracolsep{\fill}} lrrrrrrc @{}}
\toprule
$W$ & Avg. degree & Min. degree & Max. degree & Density & $\mathcal I_r(0)$, $n=100,p=5$ & Mean $\mathcal I_r(0)$ & Selected \\
\midrule
rook  & 3.60 & 2 & 4  & 0.0364 & 49.03 & 85.54 & Yes \\
knn4  & 4.38 & 4 & 7  & 0.0442 & 38.56 & 74.19 & No \\
queen & 6.84 & 3 & 8  & 0.0691 & 23.85 & 43.27 & No \\
knn6  & 7.64 & 6 & 10 & 0.0772 & 19.91 & 39.05 & No \\
knn8  & 8.52 & 8 & 12 & 0.0861 & 17.20 & 36.17 & No \\
\bottomrule
\end{tabular*}
\label{tab:sim-weight-info}
\end{table*}

\subsection{Simulation 1: Selecting the Lattice Weight Matrix}\label{subsec:sim-W-selection}

We first examine whether $\mathcal I_r(0)$ gives a useful ordering of the candidate lattice weights. This experiment uses the Gaussian SEM described above. For each lattice design, we compute $\mathcal I_r(0)$ for $W\in\mathcal W_{\mathrm{lat}}$ and then evaluate the RESAPLE exact test at $\rho=0.3$. The power calculation is not used to choose $W$. It is used only to check whether the information ordering corresponds to empirical detectability.

For each $(n,p)$, the selected weight is the one that maximises $\mathcal I_r(0)$. Table~\ref{tab:sim-W-selection} shows that rook contiguity is selected for every lattice design. This is also the best weight under the average-information summary in Table~\ref{tab:sim-weight-info}. We therefore use rook contiguity as the lattice weight in the main estimation, testing, and robustness experiments. The remaining candidate weights are reported in the appendix.

\begin{table}[pos=!htpb]
\centering
\caption{Selected lattice weights by restricted information. Each entry gives the selected $W$ and the corresponding $\mathcal I_r(0)$.}
\footnotesize
\renewcommand{\arraystretch}{1.12}
\setlength{\tabcolsep}{5pt}
\begin{tabular*}{\columnwidth}{@{\extracolsep{\fill}} rrrr @{}}
\toprule
$n$ & $p=1$ & $p=5$ & $p=20$ \\
\midrule
25  & rook, $14.17$  & rook, $9.19$   & rook, $0.74$ \\
100 & rook, $54.88$  & rook, $49.03$  & rook, $35.05$ \\
400 & rook, $211.27$ & rook, $205.34$ & rook, $190.23$ \\
\bottomrule
\end{tabular*}

\label{tab:sim-W-selection}
\end{table}

Figure~\ref{fig:sim-W-selection} compares $\mathcal I_r(0)$ with empirical power at $\rho=0.3$. The ordering is consistent across the lattice designs. The selected rook weight gives the largest $\mathcal I_r(0)$ and the highest power in each $(n,p)$ panel. The separation is clearest for $n=100$. For $n=25$, the power is limited by the residual dimension when $p$ is large. For $n=400$, power is close to one for several weights, so the ordering is partly compressed.

\begin{figure*}[pos=!htpb]
\centering
\begin{subfigure}[t]{0.7\textwidth}
\centering
\includegraphics[width=0.88\linewidth]{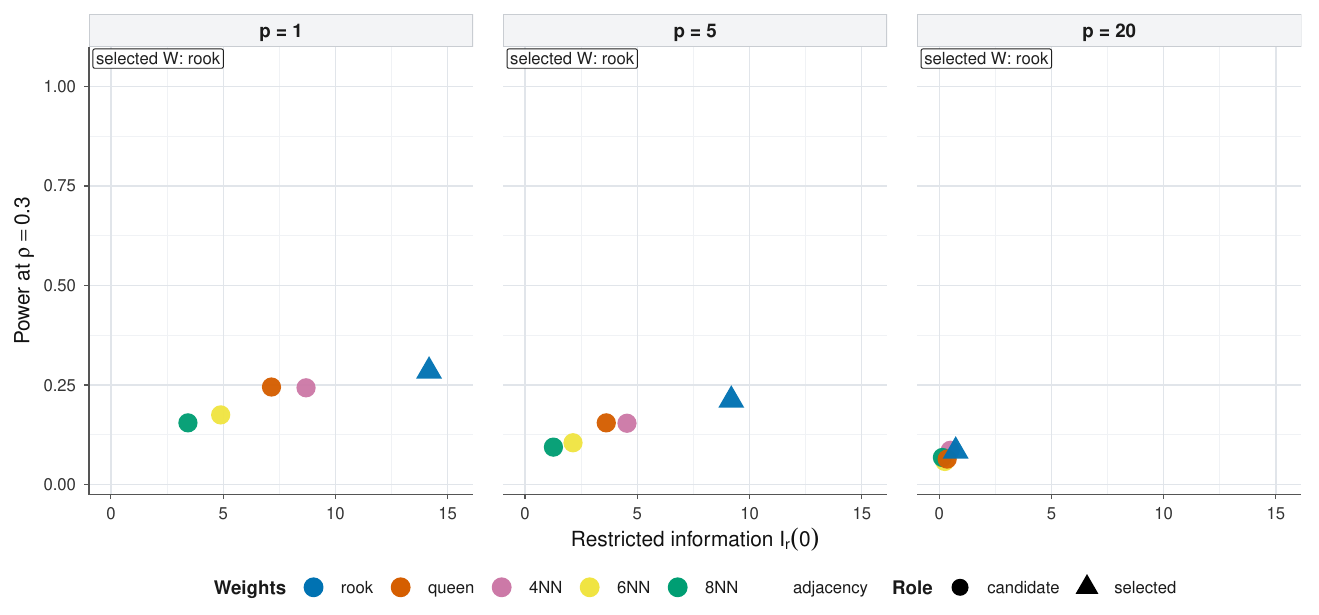}
\caption{\footnotesize $n=25$.}
\label{subfig:sim-W-selection-n25}
\end{subfigure}

\vspace{0.8em}

\begin{subfigure}[t]{0.7\textwidth}
\centering
\includegraphics[width=0.88\linewidth]{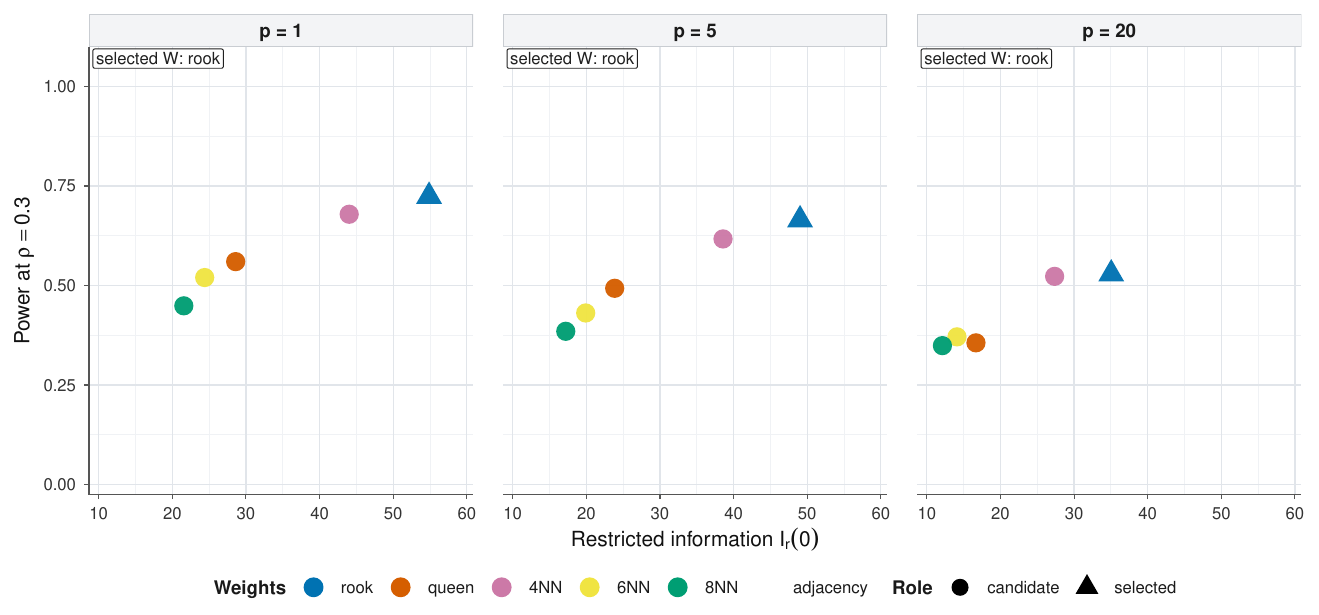}
\caption{\footnotesize $n=100$.}
\label{subfig:sim-W-selection-n100}
\end{subfigure}

\vspace{0.8em}

\begin{subfigure}[t]{0.7\textwidth}
\centering
\includegraphics[width=0.88\linewidth]{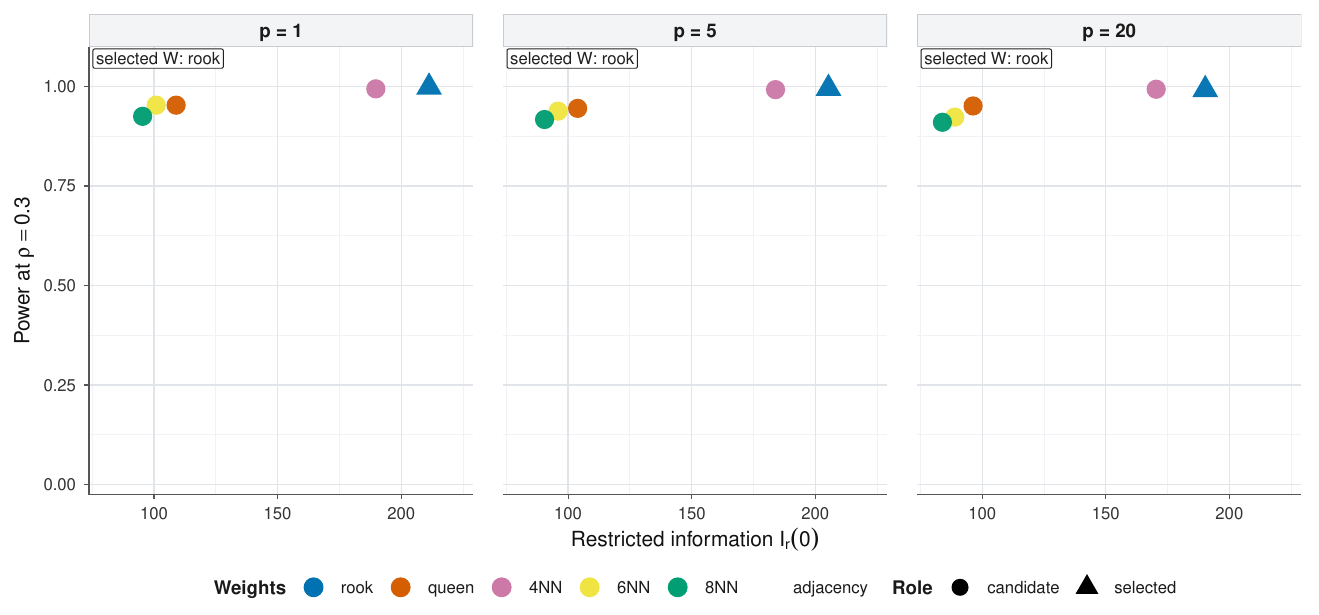}
\caption{\footnotesize $n=400$.}
\label{subfig:sim-W-selection-n400}
\end{subfigure}
\caption{Empirical power at $\rho=0.3$ against restricted information $\mathcal I_r(0)$ for the candidate lattice weights. The selected weight maximises $\mathcal I_r(0)$ within each $(n,p)$ design.}
\label{fig:sim-W-selection}
\end{figure*}

The size results at $\rho=0$ are summarised in Table~\ref{tab:sim-W-size}. The exact test is close to the nominal level for the selected rook designs. The largest departure among the selected designs is $0.013$, occurring at $n=400$ and $p=1$. The full size table, including all candidate weights and the permutation and $z$ reference procedures, is given in Appendix~\ref{app:simulations}.

\begin{table}[pos=!htpb]
\centering
\caption{Empirical size for the RESAPLE exact test using the selected lattice weight. The nominal level is $\alpha=0.05$.}
\footnotesize
\renewcommand{\arraystretch}{1.12}
\setlength{\tabcolsep}{5pt}
\begin{tabular*}{\columnwidth}{@{\extracolsep{\fill}} rrrrrr @{}}
\toprule
$n$ & $p$ & $r$ & $W$ & Size & $|\widehat{\alpha}-0.05|$ \\
\midrule
25  & 1  & 24  & rook & 0.051 & 0.001 \\
25  & 5  & 20  & rook & 0.067 & 0.017 \\
25  & 20 & 5   & rook & 0.058 & 0.008 \\
100 & 1  & 99  & rook & 0.047 & 0.003 \\
100 & 5  & 95  & rook & 0.046 & 0.004 \\
100 & 20 & 80  & rook & 0.043 & 0.007 \\
400 & 1  & 399 & rook & 0.063 & 0.013 \\
400 & 5  & 395 & rook & 0.055 & 0.005 \\
400 & 20 & 380 & rook & 0.058 & 0.008 \\
\bottomrule
\end{tabular*}

\label{tab:sim-W-size}
\end{table}

These results support using $\mathcal I_r(0)$ as the lattice weight-selection criterion in the following simulations. The criterion is fixed before estimation and testing are compared. The subsequent lattice results therefore use rook contiguity in the main text, while the appendix reports the remaining candidate weights.

\subsection{Simulation 2: Gaussian Estimation Performance}\label{subsec:sim-gaussian-estimation}

We next study RESAPLE as a point estimator of $\rho$ under the Gaussian SEM. The lattice experiments use the selected rook weight from Simulation 1. B07 and the large irregular graph use their fixed adjacency matrices. We compare RESAPLE with residual Moran's $\mathcal I_M$, residual APLE, MAPLE, and REML over $\rho\in\{0,0.05,\ldots,0.95\}$. For each estimator $\hat\rho$, we report bias, standard deviation, and RMSE, where $$\operatorname{RMSE}(\hat\rho)=\{\mathbb E(\hat\rho-\rho)^2\}^{1/2}.$$ Further numerical summaries are given in Appendix~\ref{app:simulations}.

Figure~\ref{fig:sim-estimation-rmse} reports RMSE for the selected lattice design, B07, and the large irregular graph. On the lattice, RESAPLE has the lowest RMSE for most values of $\rho$ across all three covariate dimensions. It wins against the ESDA comparators for 16 of 20 values of $\rho$ when $p=1$, 16 of 20 when $p=5$, and 15 of 20 when $p=20$. The remaining wins are attained by residual Moran's $\mathcal I_M$ at small values of $\rho$.

\begin{figure*}[pos=!htpb]
\centering
\begin{subfigure}[t]{0.7\textwidth}
\centering
\includegraphics[width=0.86\linewidth]{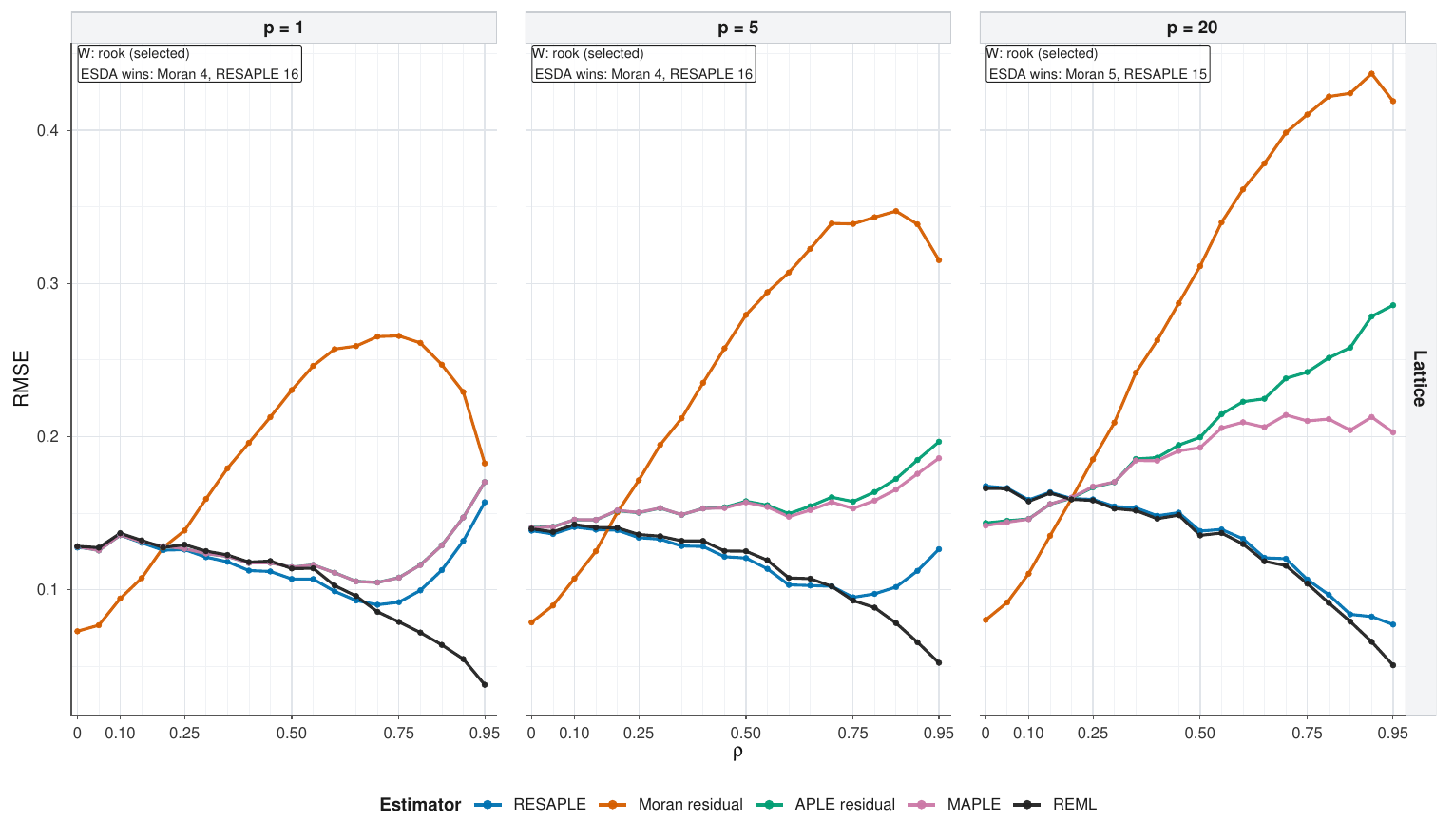}
\caption{\footnotesize Lattice with selected rook weights, $n=100$.}
\label{subfig:sim-estimation-rmse-lattice}
\end{subfigure}

\vspace{0.8em}

\begin{subfigure}[t]{0.7\textwidth}
\centering
\includegraphics[width=0.86\linewidth]{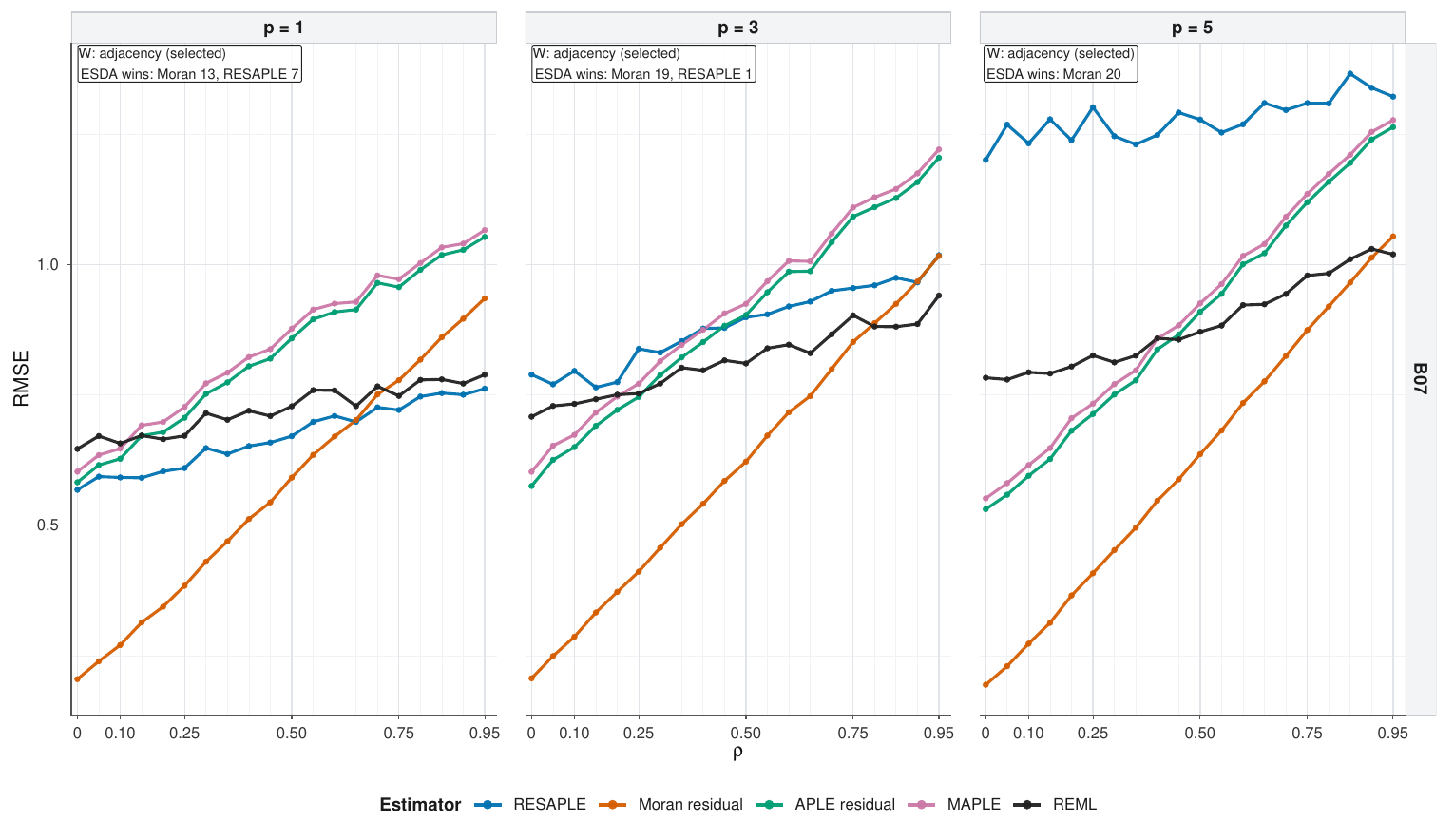}
\caption{\footnotesize B07 adjacency.}
\label{subfig:sim-estimation-rmse-b07}
\end{subfigure}

\vspace{0.8em}

\begin{subfigure}[t]{0.7\textwidth}
\centering
\includegraphics[width=0.86\linewidth]{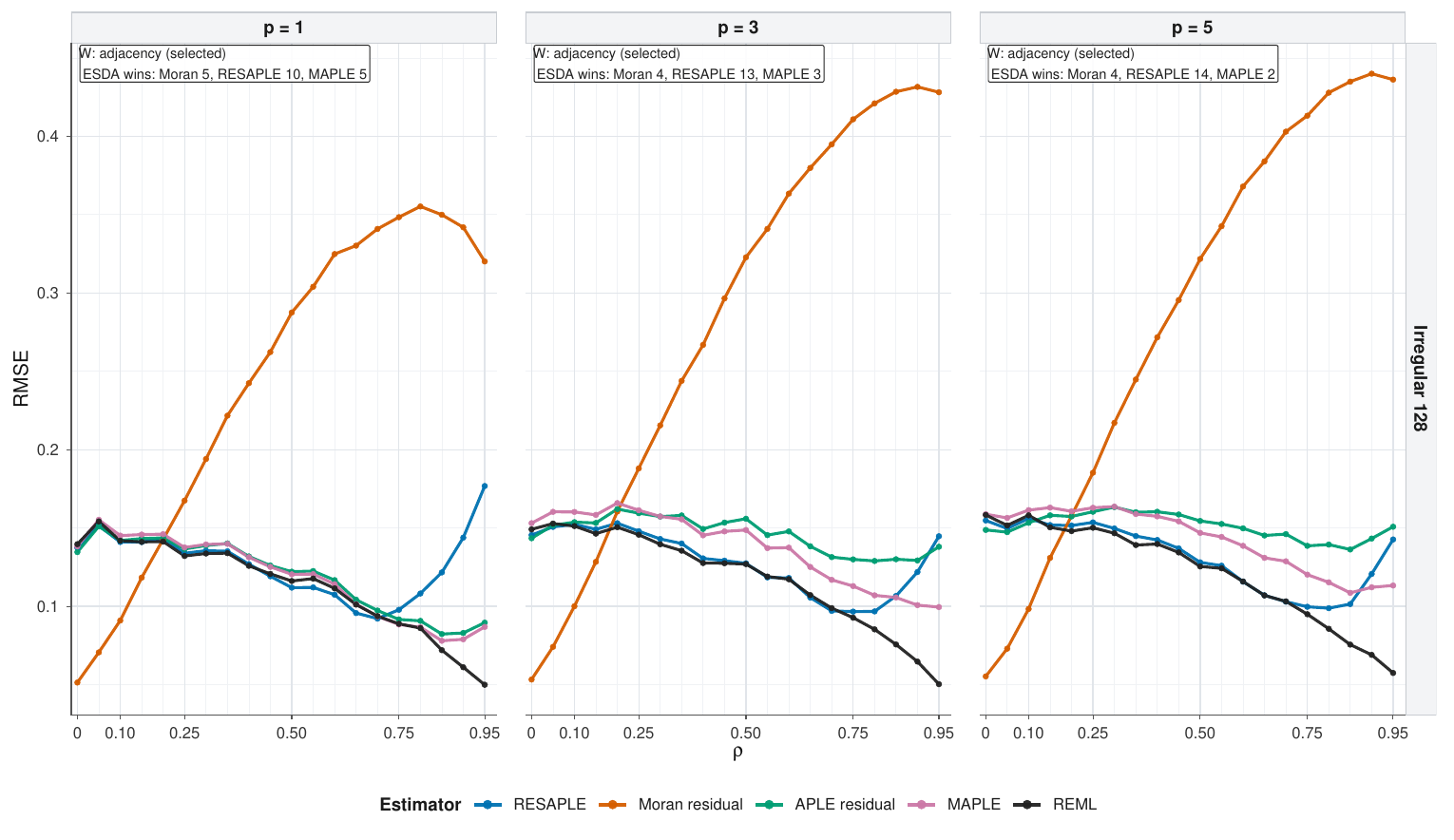}
\caption{\footnotesize Large irregular adjacency graph, $n=128$.}
\label{subfig:sim-estimation-rmse-irregular128}
\end{subfigure}
\caption{RMSE for estimating $\rho$ under the Gaussian SEM. The annotations count the number of $\rho$ values, out of 20, at which each ESDA estimator has the lowest RMSE. REML is shown as a likelihood benchmark, but is not included in the ESDA win count.}
\label{fig:sim-estimation-rmse}
\end{figure*}

The B07 results are different. Residual Moran's $\mathcal I_M$ has the lowest RMSE for most values of $\rho$, especially when $p=3$ or $p=5$. This is a finite-sample effect. B07 has only eight nodes, and the residual dimension falls to $r=3$ when $p=5$. In this setting, the RMSE is dominated by sampling variation. Residual Moran's $\mathcal I_M$ has low variance, but it is strongly biased downward. RESAPLE has much smaller bias, but its variance is larger in this very small residual space.

The large irregular graph gives a more representative irregular setting. It keeps the irregular topology but avoids the extreme residual dimensions of B07. In this case, RESAPLE is again competitive in RMSE. It has the lowest RMSE for 10 of 20 values of $\rho$ when $p=1$, 13 of 20 when $p=3$, and 14 of 20 when $p=5$. The remaining wins are split between residual Moran's $\mathcal I_M$ and MAPLE.

Figure~\ref{fig:sim-estimation-bias} reports the corresponding bias curves. These curves explain the main source of RESAPLE's estimation advantage. On the lattice, RESAPLE has the smallest absolute bias for all plotted values of $\rho$ and all three values of $p$. The same pattern is largely present on the large irregular graph. On B07, RESAPLE also has the smallest absolute bias in most panels, despite not always having the lowest RMSE. Thus, where RESAPLE loses in RMSE, the loss is mainly due to higher sampling variation, not systematic bias.

\begin{figure*}[pos=!htpb]
\centering
\begin{subfigure}[t]{0.7\textwidth}
\centering
\includegraphics[width=0.86\linewidth]{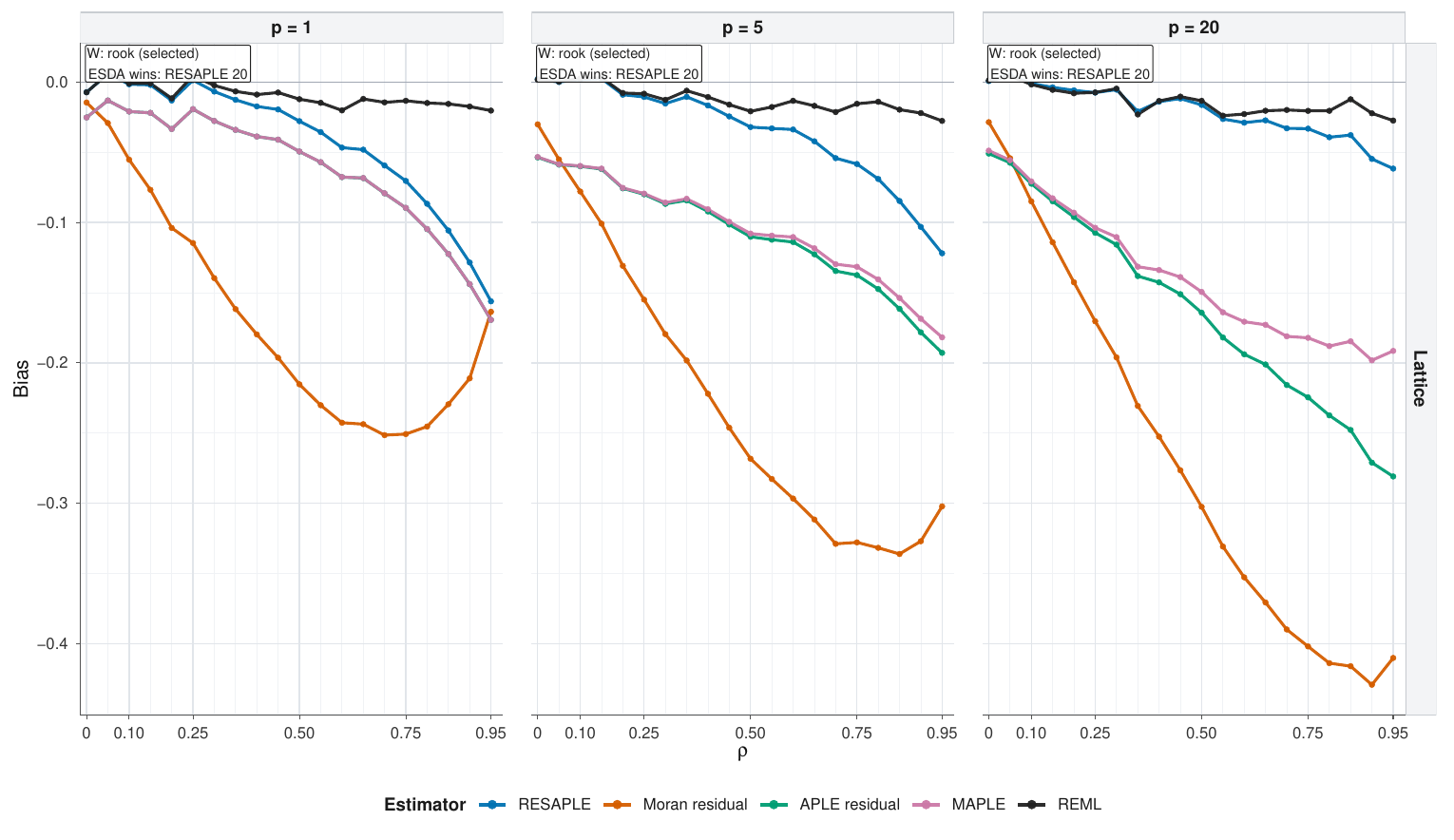}
\caption{\footnotesize Lattice with selected rook weights, $n=100$.}
\label{subfig:sim-estimation-bias-lattice}
\end{subfigure}

\vspace{0.8em}

\begin{subfigure}[t]{0.7\textwidth}
\centering
\includegraphics[width=0.86\linewidth]{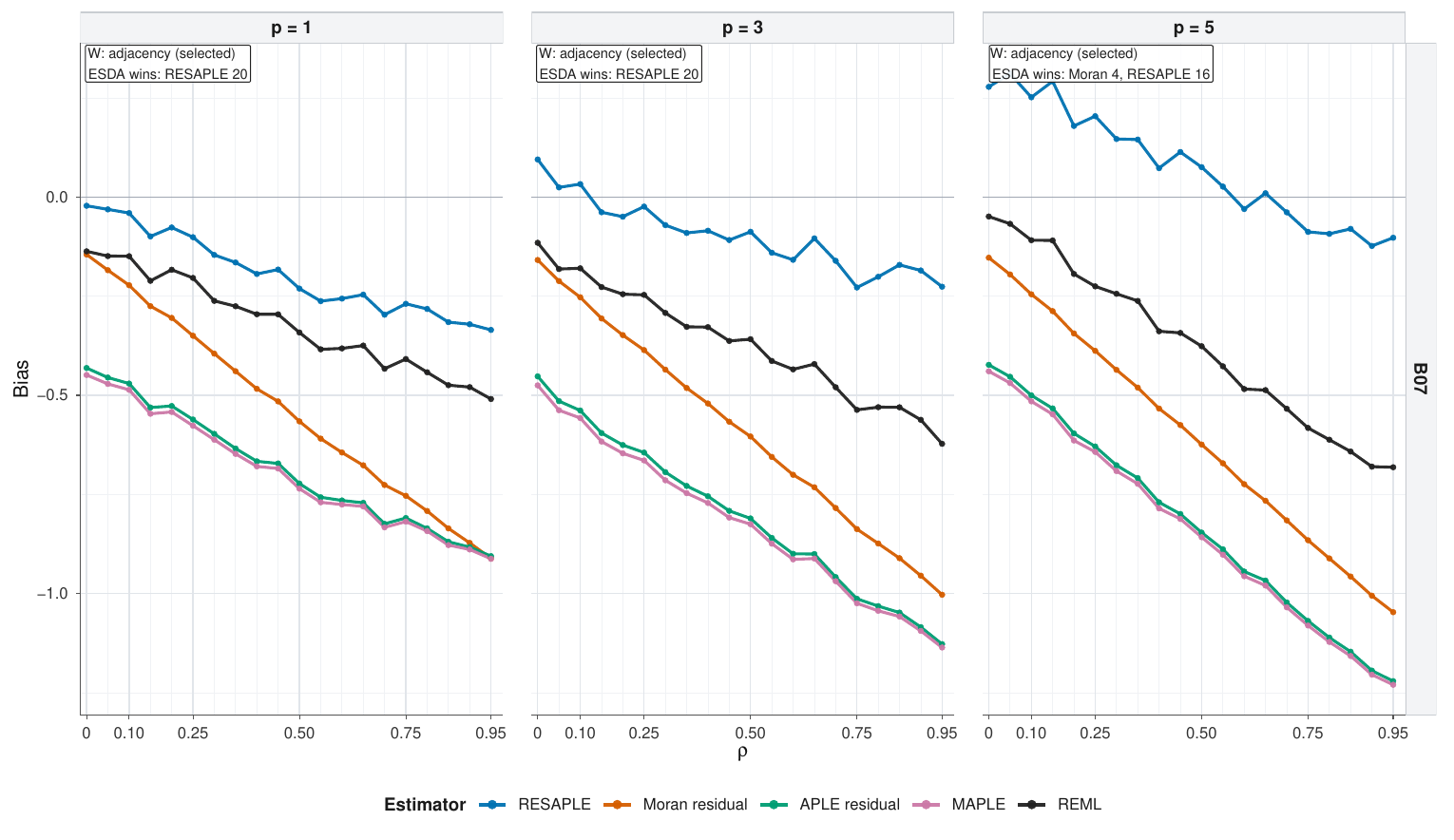}
\caption{\footnotesize B07 adjacency.}
\label{subfig:sim-estimation-bias-b07}
\end{subfigure}

\vspace{0.8em}

\begin{subfigure}[t]{0.7\textwidth}
\centering
\includegraphics[width=0.86\linewidth]{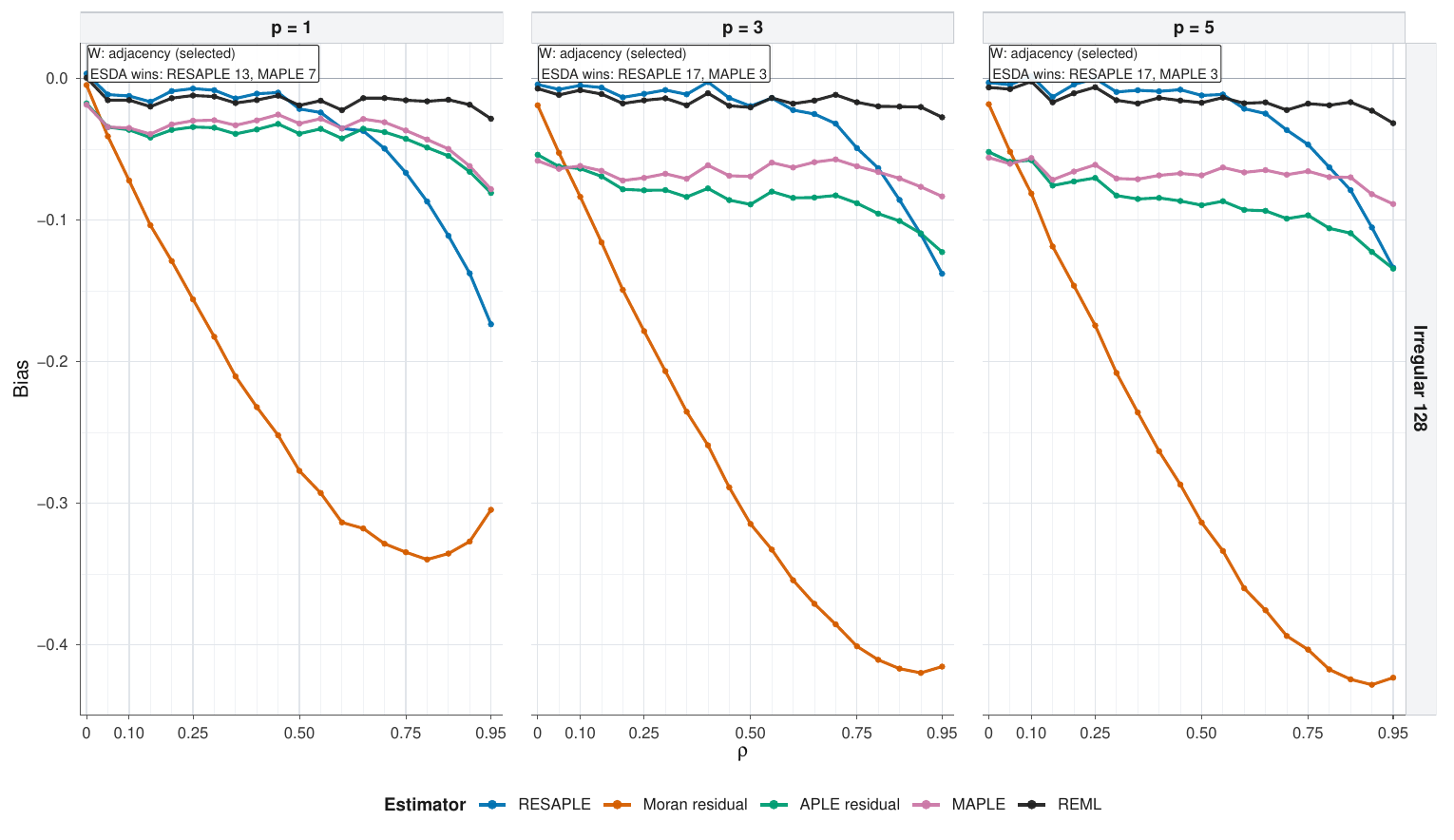}
\caption{\footnotesize Large irregular adjacency graph, $n=128$.}
\label{subfig:sim-estimation-bias-irregular128}
\end{subfigure}
\caption{Bias for estimating $\rho$ under the Gaussian SEM. The annotations count the number of $\rho$ values, out of 20, at which each ESDA estimator has the smallest absolute bias. REML is shown as a likelihood benchmark, but is not included in the ESDA win count.}
\label{fig:sim-estimation-bias}
\end{figure*}

Overall, the estimation results support the intended role of RESAPLE. It is designed to reduce the restricted-likelihood bias of residual spatial autocorrelation diagnostics. This is visible across the lattice and the large irregular graph, and also in the B07 stress test. Its main limitation is increased sampling variation in very small residual spaces. This limitation is most visible for B07, where $n=8$ and $r$ can be as small as $3$. In ordinary lattice and larger irregular settings, this variance cost is less severe, and the reduced bias usually gives RESAPLE the best or near-best RMSE.

\subsection{Simulation 3: Gaussian Testing Performance}\label{subsec:sim-gaussian-testing}

We next study RESAPLE as a test statistic under the Gaussian SEM. The null hypothesis is $H_0:\rho=0$, and the alternative is $H_1:\rho>0$. The lattice experiments use the selected rook weight from Simulation 1. B07 and the large irregular graph use their fixed adjacency matrices. We separate two comparisons. First, we compare RESAPLE with residual Moran's $\mathcal I_M$, residual APLE, and MAPLE using the same Freedman-Lane permutation reference procedure. Second, we compare the exact, permutation, and $z$ reference procedures for RESAPLE itself.

Figure~\ref{fig:sim-testing-permutation} gives the first comparison. On the lattice, the four permutation tests have almost indistinguishable power curves. All methods have correct size at $\rho=0$, and power increases rapidly once $\rho$ moves away from zero. This shows that RESAPLE is competitive as a test statistic, but it does not give the same separation from the residual ESDA alternatives as it does for bias in Simulation 2.

\begin{figure*}[pos=!htpb]
\centering
\begin{subfigure}[t]{0.7\textwidth}
\centering
\includegraphics[width=0.86\linewidth]{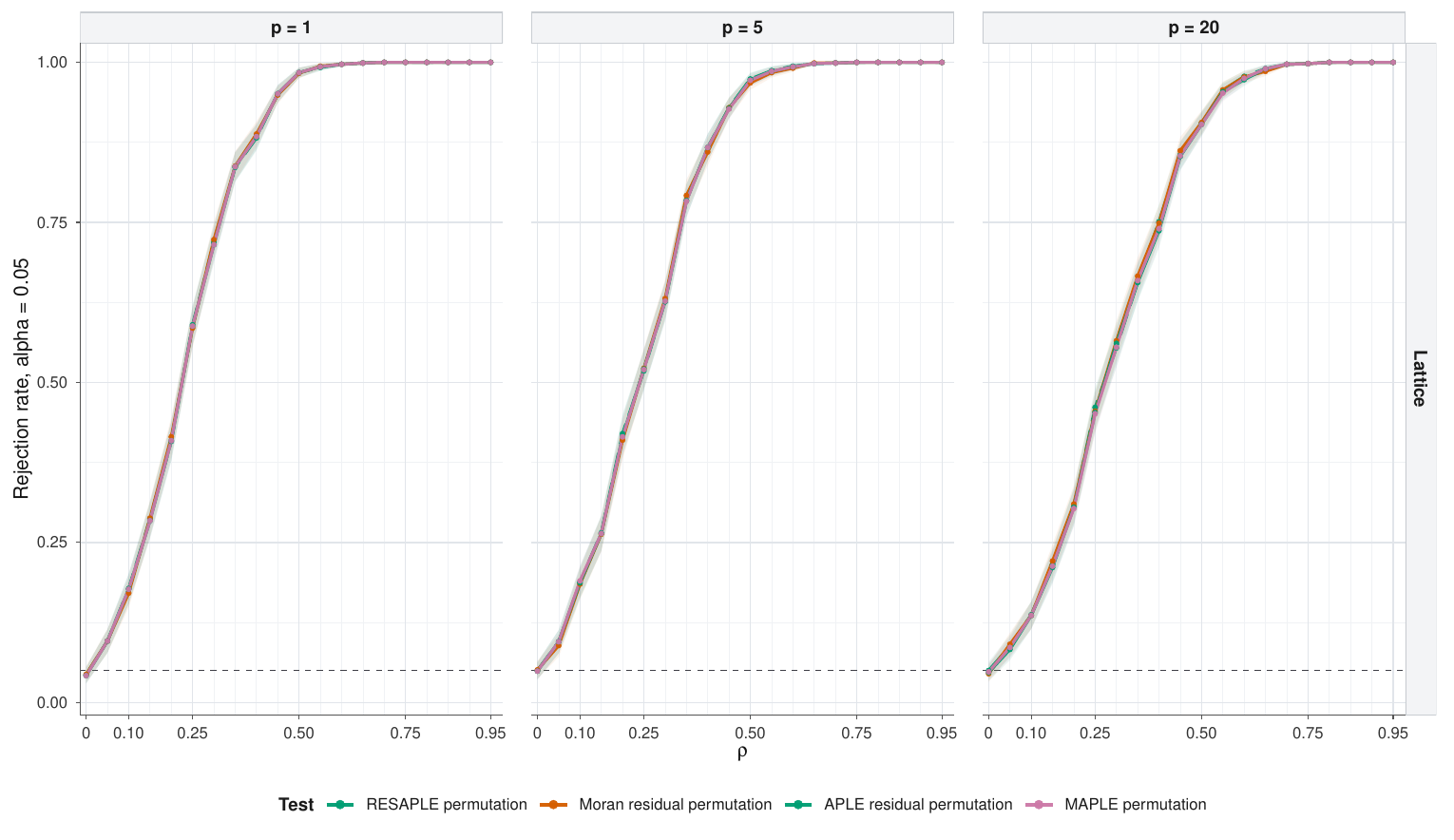}
\caption{\footnotesize Lattice with selected rook weights, $n=100$.}
\label{subfig:sim-testing-permutation-lattice}
\end{subfigure}

\vspace{0.8em}

\begin{subfigure}[t]{0.7\textwidth}
\centering
\includegraphics[width=0.86\linewidth]{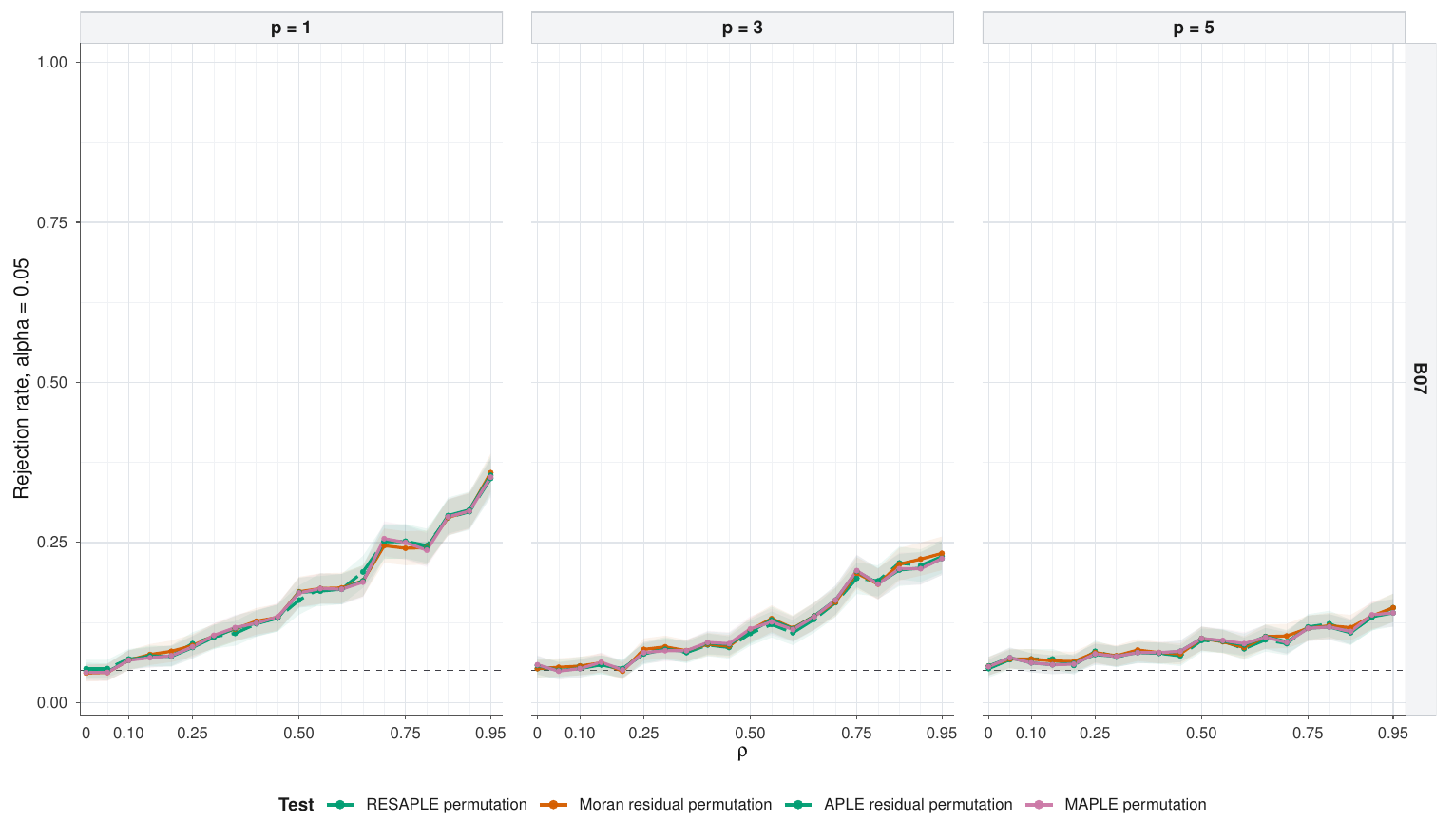}
\caption{\footnotesize B07 adjacency.}
\label{subfig:sim-testing-permutation-b07}
\end{subfigure}

\vspace{0.8em}

\begin{subfigure}[t]{0.7\textwidth}
\centering
\includegraphics[width=0.86\linewidth]{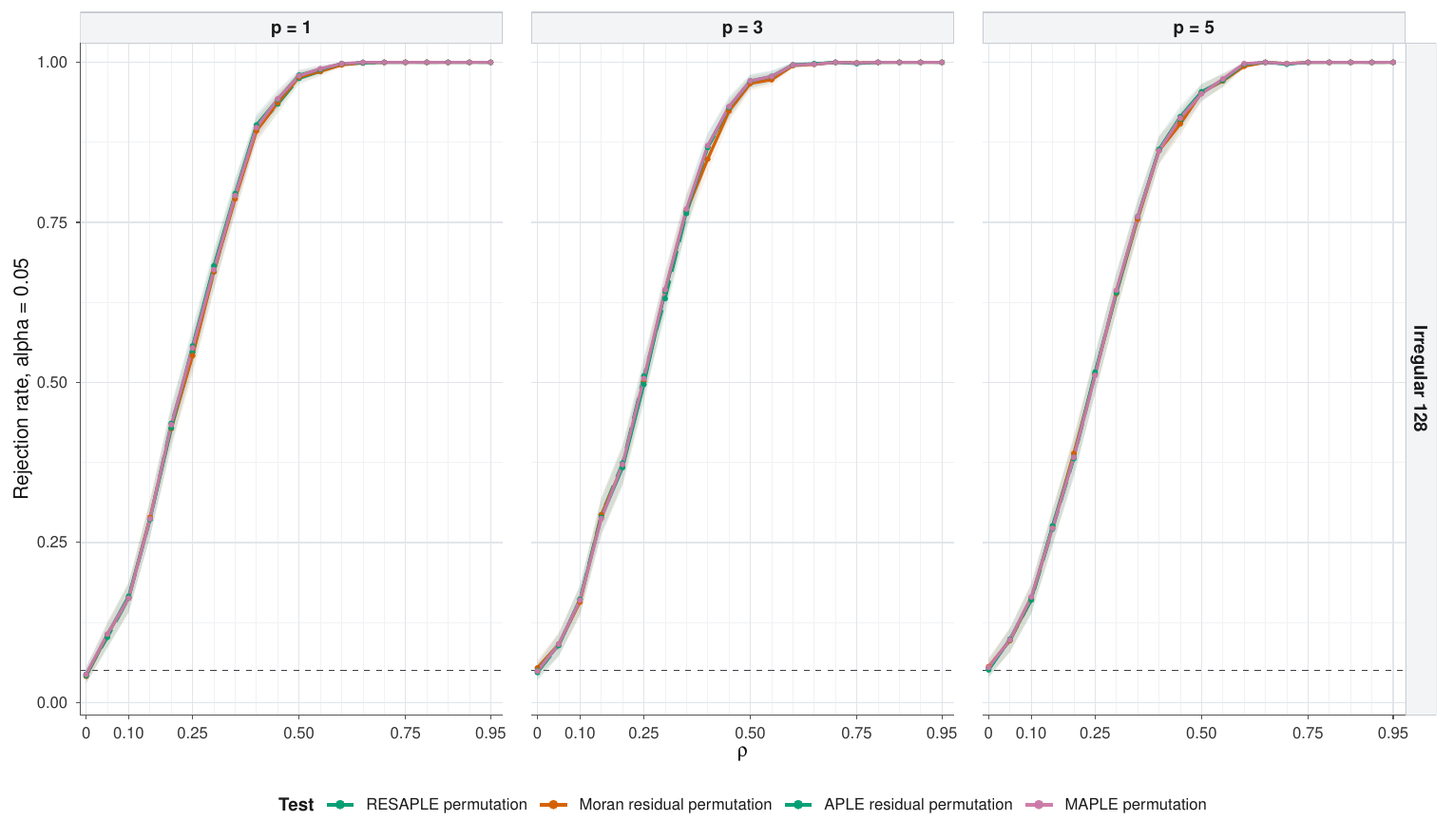}
\caption{\footnotesize Large irregular adjacency graph, $n=128$.}
\label{subfig:sim-testing-permutation-irregular128}
\end{subfigure}
\caption{Permutation rejection rates under the Gaussian SEM. All tests use the same Freedman-Lane residual permutation procedure. The horizontal dashed line marks $\alpha=0.05$.}
\label{fig:sim-testing-permutation}
\end{figure*}

The B07 results are again a stress test. When $p=1$, power increases with $\rho$, but the increase is slow. When $p=3$ or $p=5$, the residual dimension is only $r=5$ or $r=3$, and all four permutation tests have low power. This is not a RESAPLE-specific failure. It reflects the limited residual information available on this graph after covariate adjustment. The large irregular graph gives a different picture. With $n=128$, all four tests have high and very similar power. Thus, the weak B07 performance is mainly a consequence of its small residual space, not irregularity alone.

Figure~\ref{fig:sim-testing-reference} compares the three RESAPLE reference procedures. On the lattice and the large irregular graph, the exact and permutation procedures are very close. The $z$ procedure is also close in these moderate-dimensional settings. On B07, the $z$ procedure can be much more conservative, especially when the residual spectrum is highly discrete. The exact and permutation procedures remain close to each other.

\begin{figure*}[pos=!htpb]
\centering
\begin{subfigure}[t]{0.7\textwidth}
\centering
\includegraphics[width=0.86\linewidth]{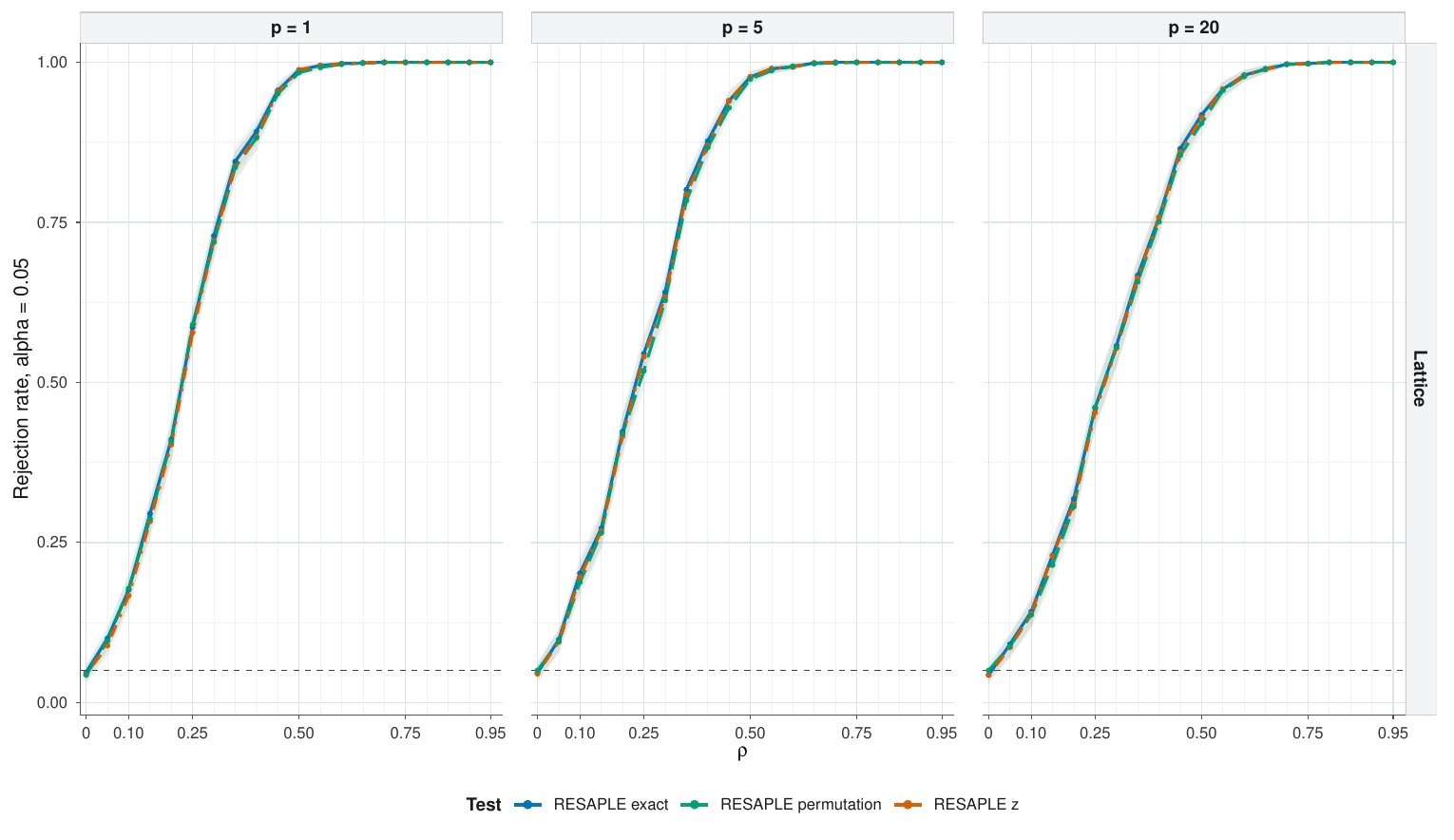}
\caption{\footnotesize Lattice with selected rook weights, $n=100$.}
\label{subfig:sim-testing-reference-lattice}
\end{subfigure}

\vspace{0.8em}

\begin{subfigure}[t]{0.7\textwidth}
\centering
\includegraphics[width=0.86\linewidth]{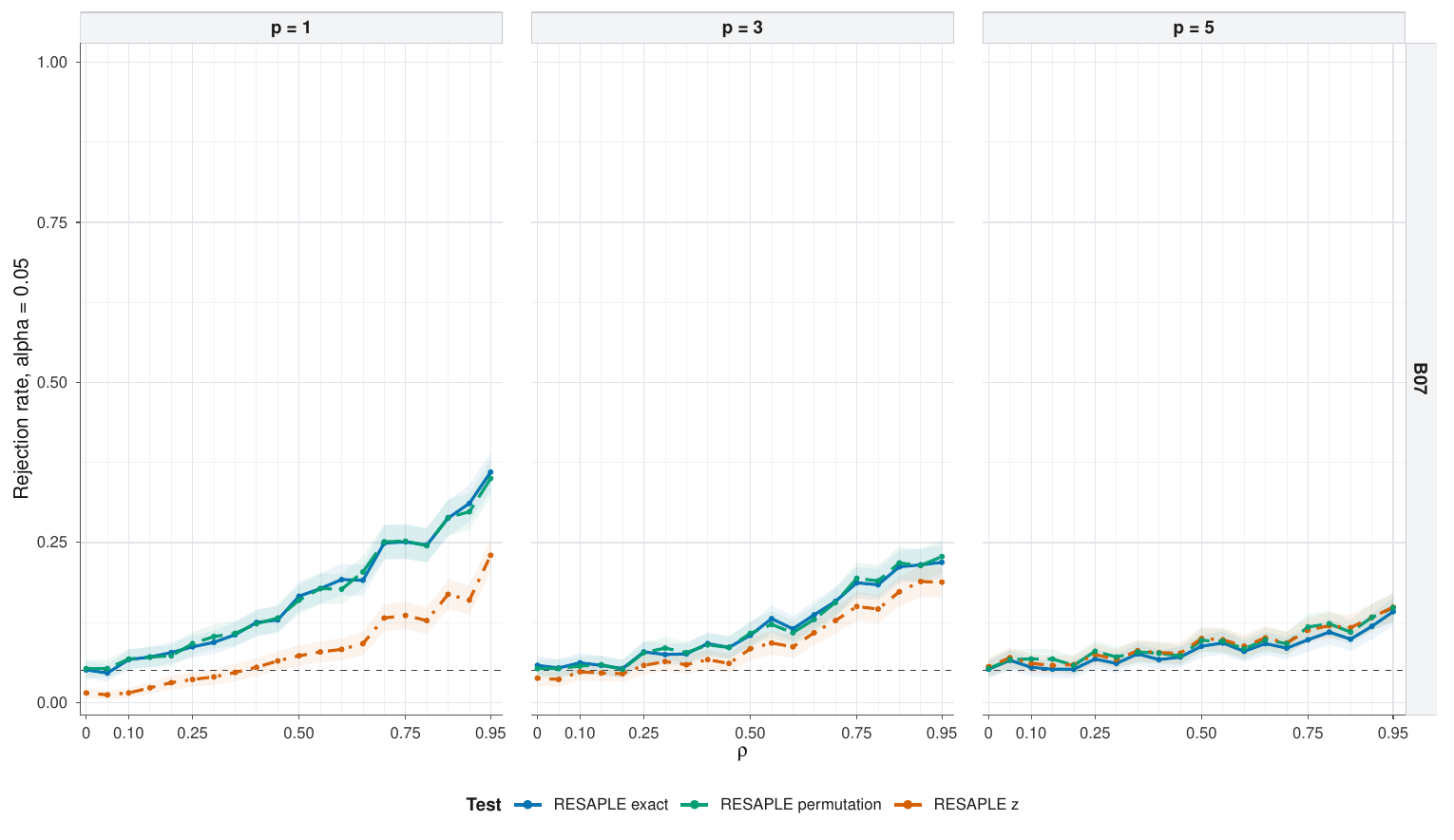}
\caption{\footnotesize B07 adjacency.}
\label{subfig:sim-testing-reference-b07}
\end{subfigure}

\vspace{0.8em}

\begin{subfigure}[t]{0.7\textwidth}
\centering
\includegraphics[width=0.86\linewidth]{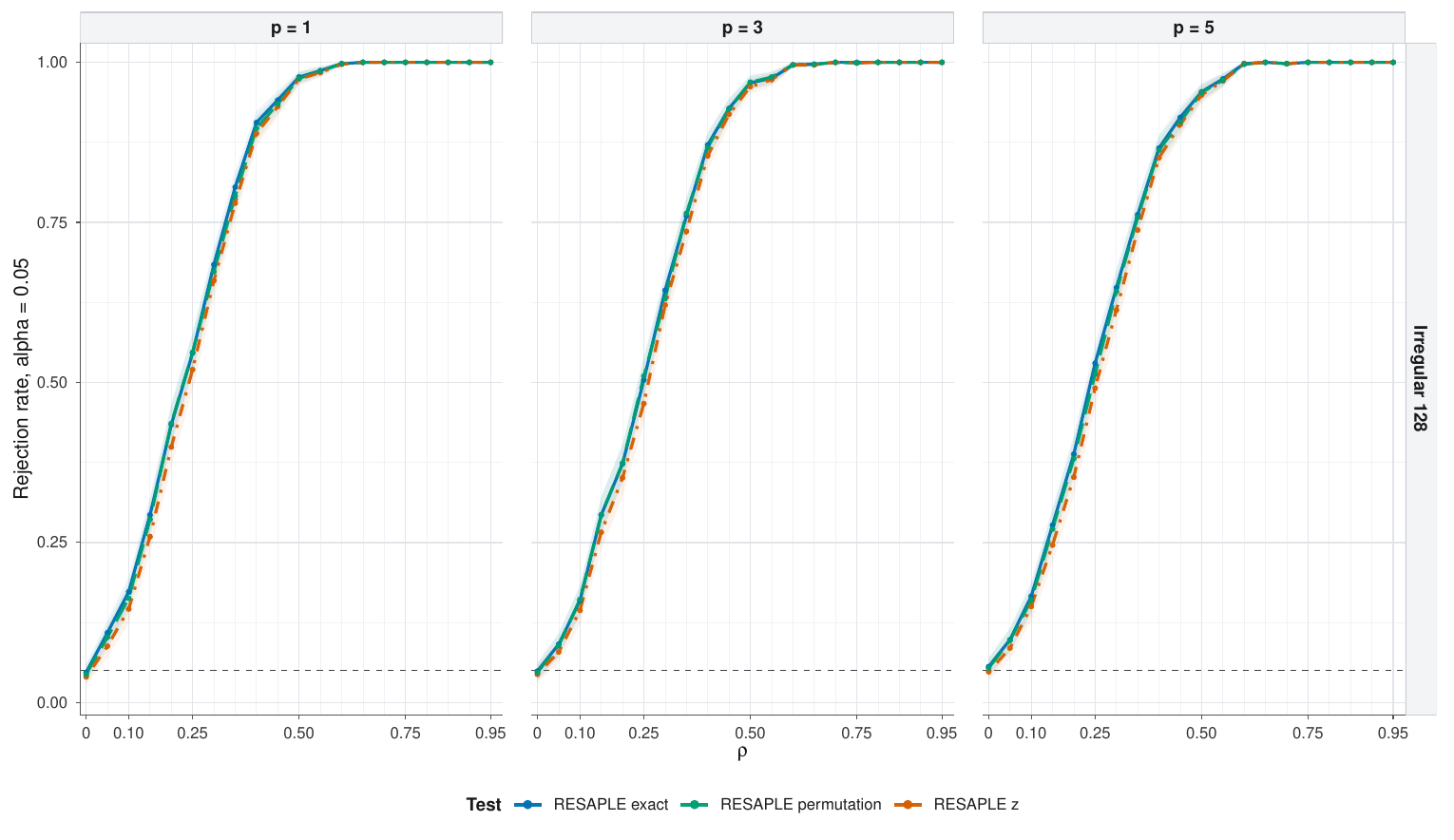}
\caption{\footnotesize Large irregular adjacency graph, $n=128$.}
\label{subfig:sim-testing-reference-irregular128}
\end{subfigure}
\caption{RESAPLE rejection rates under the Gaussian SEM using exact, permutation, and $z$ reference procedures. The horizontal dashed line marks $\alpha=0.05$.}
\label{fig:sim-testing-reference}
\end{figure*}

Table~\ref{tab:sim-testing-size} reports worst absolute size distortion over the plotted covariate dimensions. The permutation tests remain close to the nominal level across all three graph classes. The largest departure in the table is for the RESAPLE $z$ test on B07, where the empirical size is $0.015$ at $p=1$. This supports using the exact or permutation procedure, particularly in small or spectrally irregular residual spaces. 

\begin{table*}[pos=!htpb]
\centering
\caption{Worst absolute size distortion, $\max_p |\widehat{\alpha}-0.05|$, under the Gaussian SEM for the plotted testing designs.}
\footnotesize
\renewcommand{\arraystretch}{1.12}
\setlength{\tabcolsep}{5pt}
\begin{tabular*}{\textwidth}{@{\extracolsep{\fill}} llrrrrrr @{}}
\toprule
Graph & $W$ & RESAPLE exact & RESAPLE perm. & RESAPLE $z$ & Moran perm. & APLE perm. & MAPLE perm. \\
\midrule
Lattice, $n=100$ & rook      & 0.007 & 0.007 & 0.007 & 0.006 & 0.008 & 0.008 \\
B07, $n=8$       & adjacency & 0.008 & 0.003 & 0.035 & 0.007 & 0.008 & 0.009 \\
Irregular, $n=128$ & adjacency & 0.006 & 0.006 & 0.010 & 0.009 & 0.008 & 0.006 \\
\bottomrule
\end{tabular*}

\label{tab:sim-testing-size}
\end{table*}

Overall, the main story here is that RESAPLE gives a clear bias reduction as an estimator, supporting accurate understanding of residual spatial autocorrelation in the data, while simultaneously maintaining comparable power as a test statistic to the residual Moran's $\mathcal I_M$, residual APLE, and MAPLE when all are tested by the same permutation procedure. This latter point is of course expected because the tests are all residual quadratic-form diagnostics for positive spatial dependence. The main practical distinction is therefore the testing procedure used for RESAPLE. The exact and permutation procedures behave reliably across the plotted designs, while the $z$ approximation is less reliable in the smallest residual spaces.

\subsection{Simulation 4: Robustness to Non-Gaussian Errors}\label{subsec:sim-nongaussian}

We examine whether the Gaussian estimation and testing conclusions persist when the error law is changed. The design is the same as in Simulations 2 and 3. We use the selected rook lattice weight with $n=100$ in the main text. The remaining lattice sizes and candidate weights are reported in Appendix~\ref{app:simulations}.

We consider three error laws. The first is the Gaussian baseline, $\epsilon\sim N(0,I_n)$. The second is a spherical multivariate $t$ error with $\nu=5$ degrees of freedom,
$$\epsilon=\sqrt{\frac{\nu-2}{Q}}Z,\qquad Z\sim N(0,I_n),\qquad Q\sim\chi^2_\nu.$$
The same scalar $Q$ is shared across coordinates, so the error vector remains spherical and has marginal variance one. The third is an independent skewed error,
$$\epsilon_i=\frac{Y_i-3}{\sqrt{6}},\qquad Y_i\sim\chi^2_3,$$
which has mean zero and variance one, but is not spherical. In all cases, the spatial error is generated as $U=(I-\rho W)^{-1}\epsilon$.

Figure~\ref{fig:sim-nongaussian-rmse} reports RMSE for the three error laws. The Gaussian and spherical $t$ results are almost identical, consistent with the shared spherical structure of the error law. The skewed case also gives the same qualitative ordering. RESAPLE has the lowest RMSE for most values of $\rho$ in every panel. Residual Moran's $\mathcal I_M$ can have lower RMSE near $\rho=0$ because it has low sampling variation in that local region. As $\rho$ increases, its downward bias dominates. RESAPLE then becomes the more stable estimator.

\begin{figure*}[pos=!htpb]
\centering
\begin{subfigure}[t]{0.7\textwidth}
\centering
\includegraphics[width=0.86\linewidth]{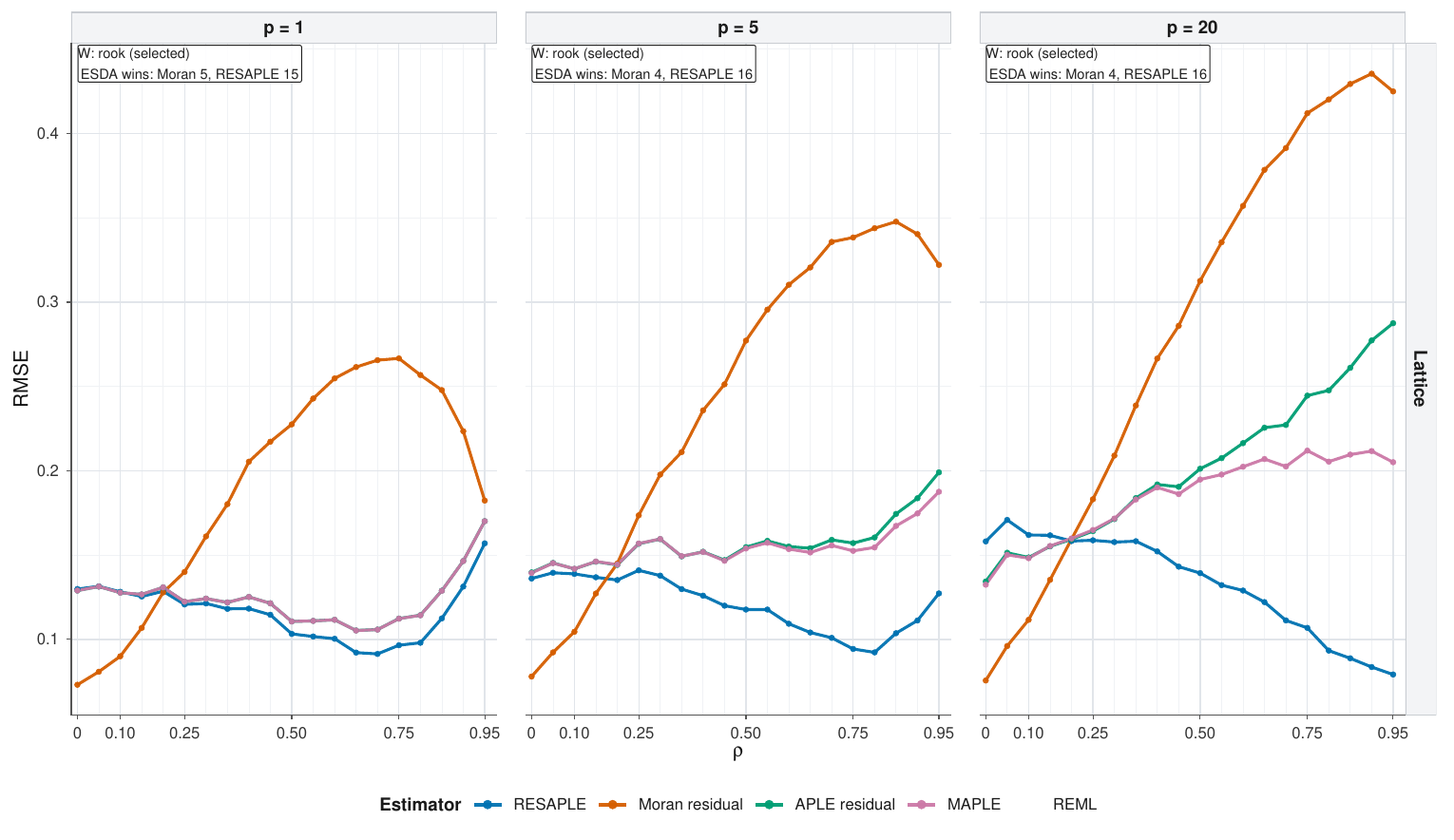}
\caption{\footnotesize Gaussian errors.}
\label{subfig:sim-nongaussian-rmse-gaussian}
\end{subfigure}

\vspace{0.8em}

\begin{subfigure}[t]{0.7\textwidth}
\centering
\includegraphics[width=0.86\linewidth]{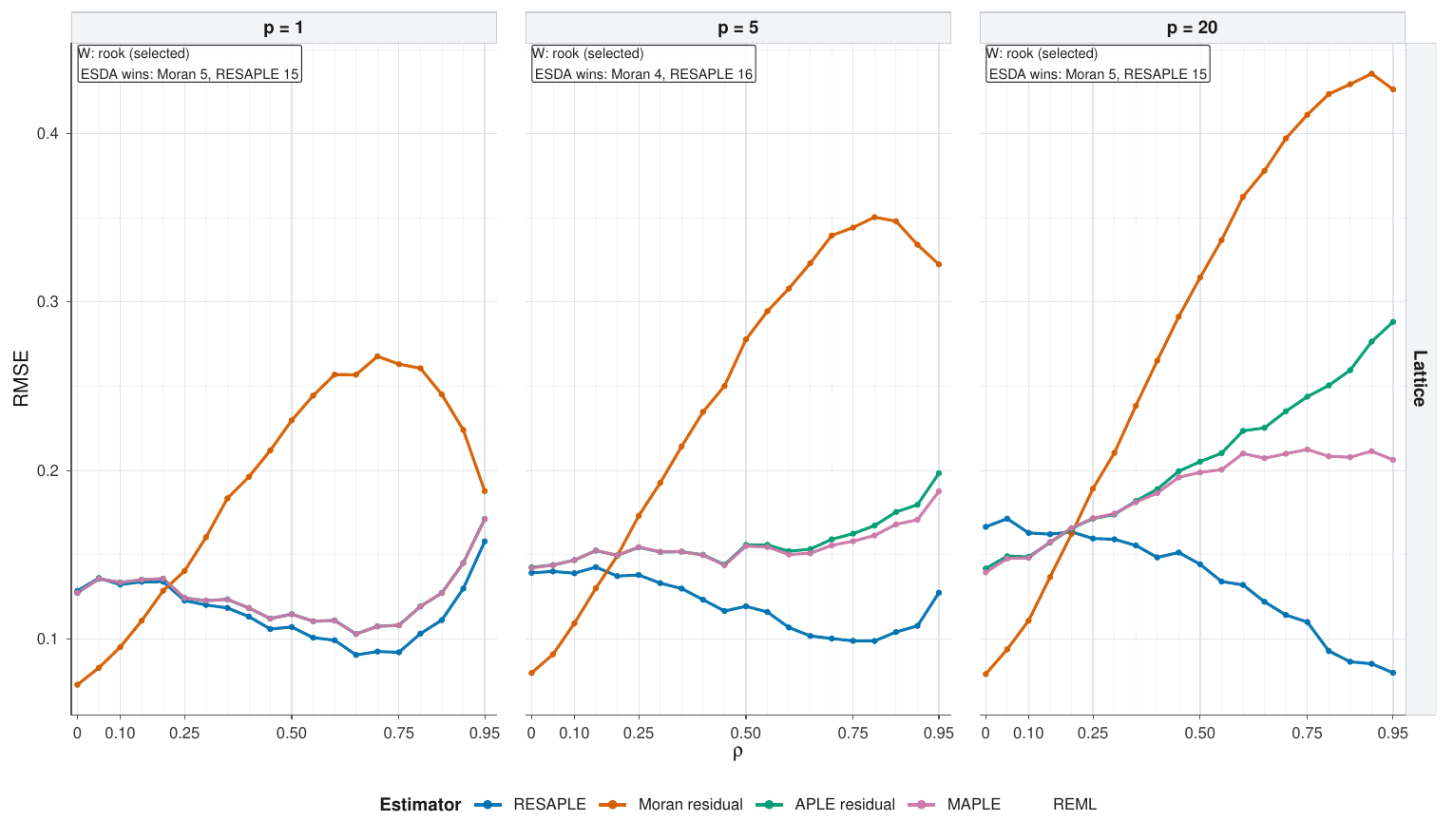}
\caption{\footnotesize Spherical multivariate $t$ errors.}
\label{subfig:sim-nongaussian-rmse-mvt}
\end{subfigure}

\vspace{0.8em}

\begin{subfigure}[t]{0.7\textwidth}
\centering
\includegraphics[width=0.86\linewidth]{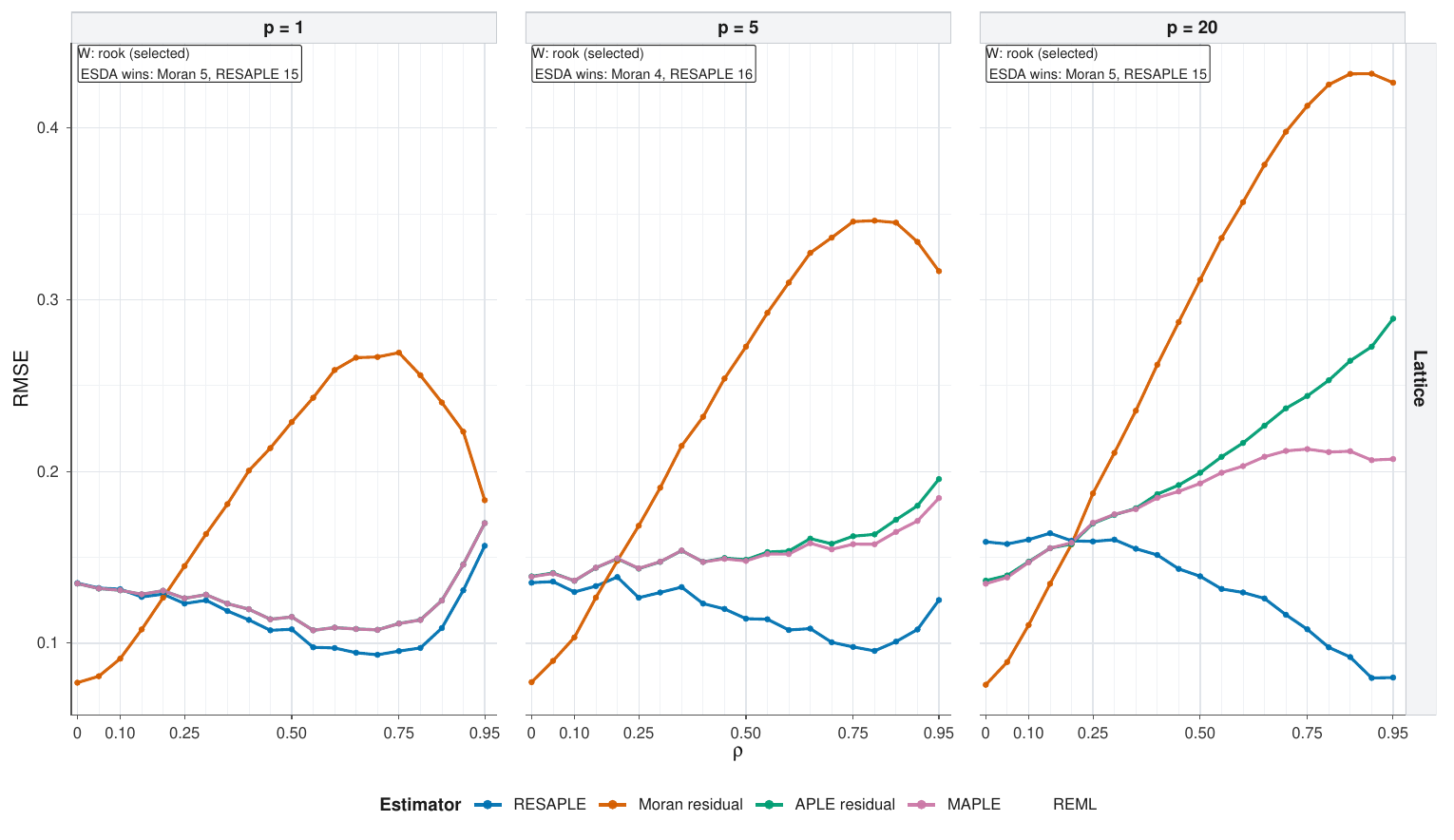}
\caption{\footnotesize Independent skewed errors.}
\label{subfig:sim-nongaussian-rmse-skewed}
\end{subfigure}
\caption{RMSE for estimating $\rho$ under Gaussian, spherical multivariate $t$, and skewed errors. The lattice uses the selected rook weight with $n=100$.}
\label{fig:sim-nongaussian-rmse}
\end{figure*}

The corresponding bias results show the same pattern observed in Simulation 2. RESAPLE has the smallest absolute bias across the plotted $\rho$ values under all three error laws. The remaining RMSE differences near $\rho=0$ are therefore variance effects, not bias effects. This distinction is important for interpreting the skewed case. Skewness changes finite-sample variability, but it does not remove the bias advantage of RESAPLE.

Figure~\ref{fig:sim-nongaussian-power} reports permutation rejection rates. All four ESDA statistics are tested using the same Freedman-Lane residual permutation procedure. The power curves are close under all three error laws. Thus, the non-Gaussian simulations reinforce the conclusions from Simulation 3. RESAPLE is competitive as a test statistic, but its main advantage is estimation bias reduction rather than a large power gain over other residual quadratic-form diagnostics.

\begin{figure*}[pos=!htpb]
\centering
\begin{subfigure}[t]{0.7\textwidth}
\centering
\includegraphics[width=0.86\linewidth]{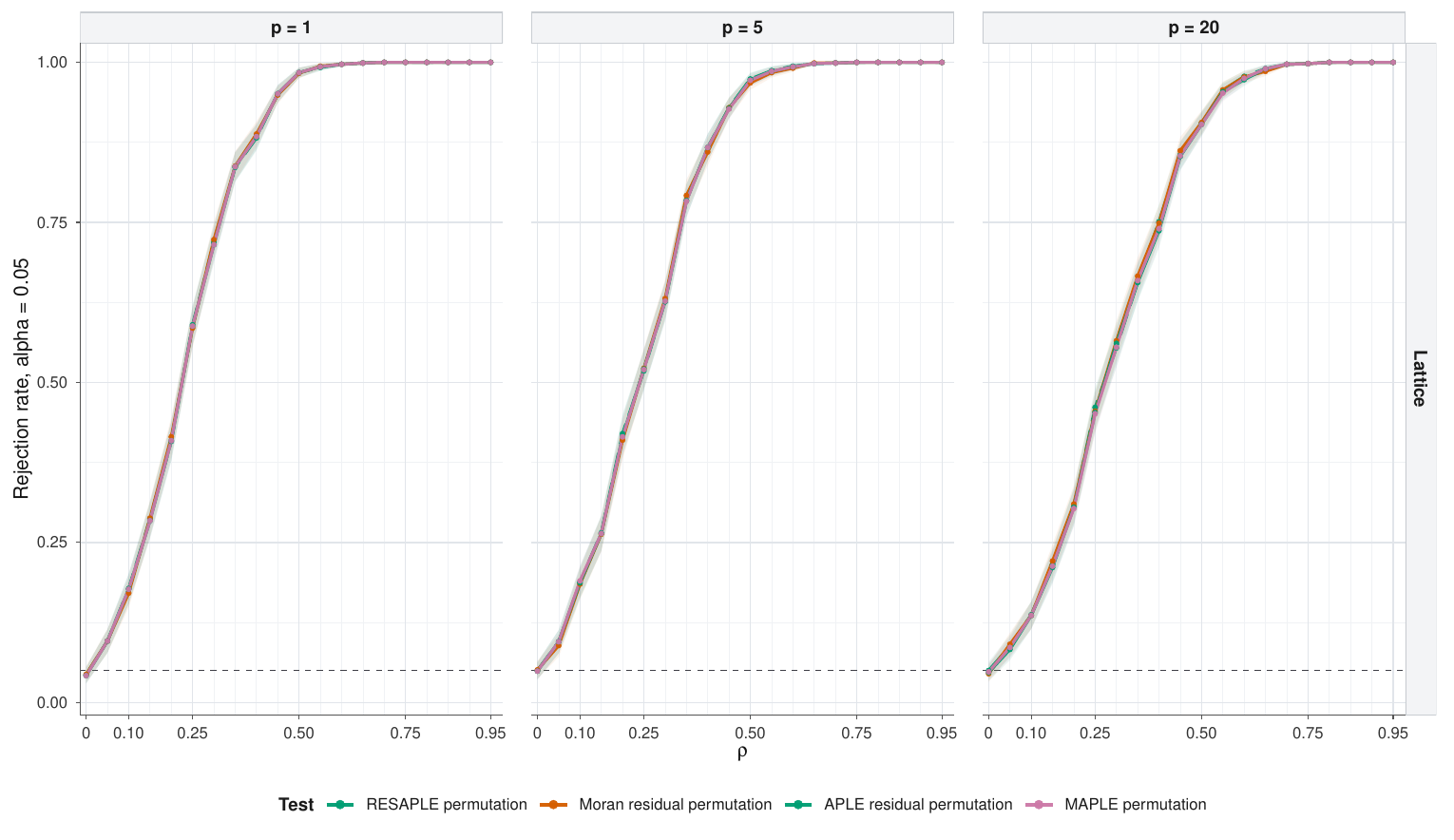}
\caption{\footnotesize Gaussian errors.}
\label{subfig:sim-nongaussian-power-gaussian}
\end{subfigure}

\vspace{0.8em}

\begin{subfigure}[t]{0.7\textwidth}
\centering
\includegraphics[width=0.86\linewidth]{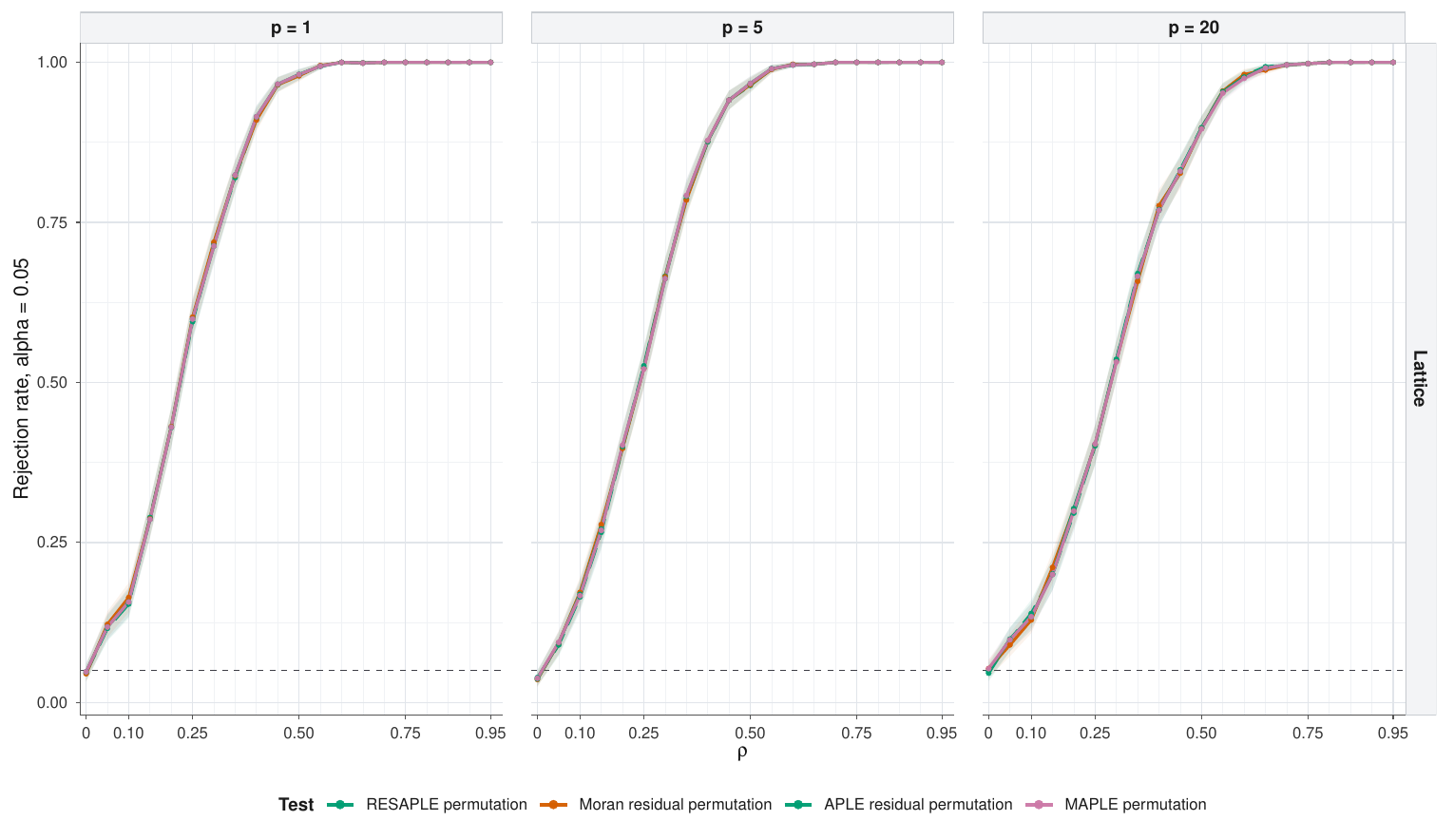}
\caption{\footnotesize Spherical multivariate $t$ errors.}
\label{subfig:sim-nongaussian-power-mvt}
\end{subfigure}

\vspace{0.8em}

\begin{subfigure}[t]{0.7\textwidth}
\centering
\includegraphics[width=0.86\linewidth]{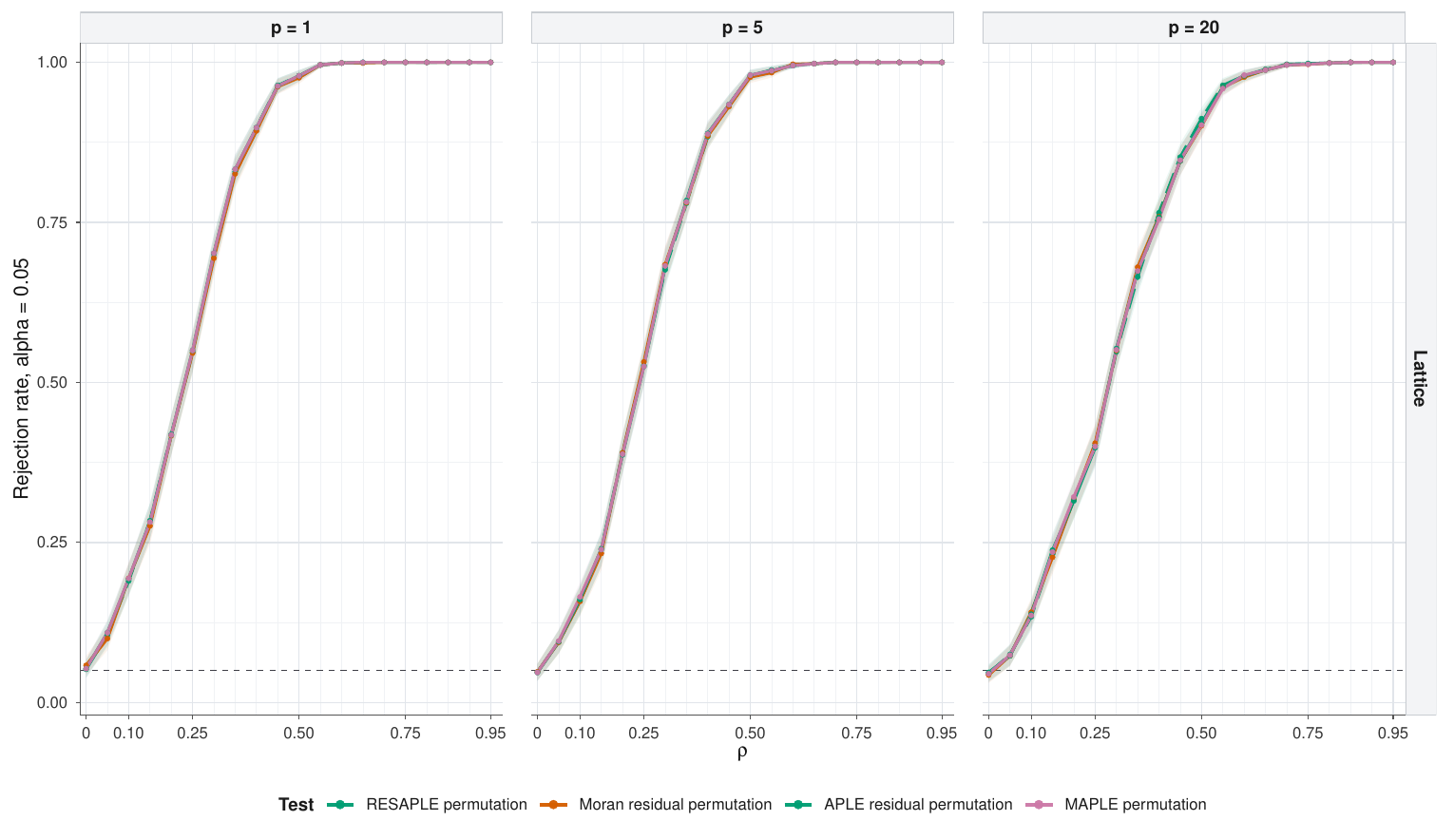}
\caption{\footnotesize Independent skewed errors.}
\label{subfig:sim-nongaussian-power-skewed}
\end{subfigure}
\caption{Permutation rejection rates under Gaussian, spherical multivariate $t$, and skewed errors. All tests use the same Freedman-Lane residual permutation procedure. The horizontal dashed line marks $\alpha=0.05$.}
\label{fig:sim-nongaussian-power}
\end{figure*}

Table~\ref{tab:sim-nongaussian-size} reports worst absolute size distortion over $p\in\{1,5,20\}$ for the permutation tests in Figure~\ref{fig:sim-nongaussian-power}. The empirical sizes remain close to the nominal level. The largest departure is $0.014$, for residual Moran's $\mathcal I_M$ under spherical $t$ errors. For RESAPLE, the largest departure is $0.011$ under spherical $t$ errors and $0.005$ under skewed errors.

\begin{table*}[pos=!htpb]
\centering
\footnotesize
\renewcommand{\arraystretch}{1.12}
\setlength{\tabcolsep}{6pt}
\begin{tabular*}{\textwidth}{@{\extracolsep{\fill}} lrrrr @{}}
\toprule
Error law & RESAPLE perm. & Moran perm. & APLE perm. & MAPLE perm. \\
\midrule
Gaussian & 0.007 & 0.006 & 0.008 & 0.008 \\
Spherical $t$ & 0.011 & 0.014 & 0.013 & 0.012 \\
Skewed & 0.005 & 0.008 & 0.003 & 0.005 \\
\bottomrule
\end{tabular*}
\caption{Worst absolute size distortion, $\max_p |\widehat{\alpha}-0.05|$, for the permutation tests under the non-Gaussian robustness designs. The lattice uses the selected rook weight with $n=100$.}
\label{tab:sim-nongaussian-size}
\end{table*}

Overall, the robustness experiment supports the main simulation conclusions. The spherical $t$ case behaves like the Gaussian case, as expected from the shared spherical structure. The skewed case changes the error law more substantially, but the same broad pattern remains. RESAPLE provides the most consistent bias reduction and generally the lowest RMSE, while permutation power remains close across the residual ESDA statistics.

\section{Real Data Case Study}\label{sec:real-data}
We illustrate the use of RESAPLE on a real areal dataset consisting of U.S. Census tracts in King County, Washington (USA). For each tract, we obtained 2019--2023 American Community Survey (ACS) 5-year estimates for median household income and tract-level demographic and socioeconomic covariates\footnote{\url{https://www.census.gov/data/developers/data-sets/acs-5year.html}}, and we merged these attributes with tract polygons\footnote{\url{https://www.census.gov/geographies/mapping-files/time-series/geo/tiger-line-file.html}}. Both the attributes and geometries were obtained from the \texttt{tidycensus} package in \textsf{R} \citep{walkerTidycensusLoadUS2017}. The ACS 5-year product provides model-based small-area estimates with associated margins of error; in this case study we treat the published estimates as observed inputs for the purpose of demonstrating spatial-residual diagnostics.

We examine natural-log-transformed median household income (MHHI) $Z_i = \log(\text{MHHI}_i)$ for census tract $i=1,\dots,n$ ($n=494$) in King County, constructed from ACS table B19013. A map is provided in Appendix \ref{app:case-study}. Candidate covariates include: 1) the share of adults with at least a bachelor's degree, constructed from educational-attainment counts, 2) unemployment rate, 3) renter share, 4) median age, 5) poverty rate, and 6) household vacancy rate. Variables were joined by tract GEOID.

The objective of our analysis is to assess the extent of residual spatial dependence in log median household income after progressively richer mean model adjustment. Recall that our model assumption is that $$Z = X\beta + U,\quad U = \rho WU + \epsilon.$$ For each choice of $X$, the SEM treats the remaining variation after the mean structure as the spatial error component. We then assess whether residual spatial autocorrelation persists in this component. Because RESAPLE is computed conditional on the chosen design matrix $X$, changing $X$ changes what is attributed to the mean structure versus the residual spatial component. Because there is no known uniquely correct detrending specification in practice, we used a ``ladder'' approach of increasing linear mean model flexibility to assess sensitivity of spatial residual diagnostics to the extent of covariate adjustment. Specifically, we consider the sequence of linear predictors described in Table \ref{tab:model-ladder}, ranging from intercept only (M0), through socioeconomic covariates (M1--M2), and finally adding spatial trend with centroid coordinates (M3--M4). This lets us compare how RESAPLE behaves relative to alternatives as $p$ changes, and hence as $r=n-p$ changes.

\begin{table}[t]
\centering
\caption{Model ladder used in the case study. $\texttt{cx\_s},\texttt{cy\_s}$ denote standardised tract-centroid $(x, y)$ coordinates.}
\small
\setlength{\tabcolsep}{7pt}
\renewcommand{\arraystretch}{1.15}
\begin{tabular}{@{}ll@{}}
\toprule
Model & Mean structure for $Z_i$ (additive terms in $\eta_i = x_i^\top\beta$) \\
\midrule
M0 &
Intercept only: $1$ \\
M1 &
$1 + \texttt{bach\_plus} + \texttt{unemp} + \texttt{renter} + \texttt{age}$ \\
M2 &
M1 $+\ \texttt{poverty} + \texttt{vacancy}$ \\
M3 &
M2 $+\ \texttt{cx\_s} + \texttt{cy\_s}$ \\
M4 &
M3 $+\ \texttt{cx\_s}^2 + \texttt{cy\_s}^2 + (\texttt{cx\_s}\cdot \texttt{cy\_s})$ \\
\bottomrule
\end{tabular}

\label{tab:model-ladder}
\end{table}

For each specification of design matrix $X$, we computed the OLS residuals $\hat u = M_X Z$, where $M_X=I_n-X(X^\top X)^{-1}X^\top$, and applied global and local spatial diagnostics to assess whether residual spatial autocorrelation persisted after adjustment. This required specifying a spatial weights matrix $W$, which is typically not uniquely determined (Section \ref{sec:impact-W}). We considered the candidate set $\mathcal W = \{\texttt{rook}, \texttt{queen}, \texttt{knn4}, \texttt{knn6}, \texttt{knn8}\}$. For each $(X,W)$ pair we computed $\mathcal I_r(0)$ and selected $W$ by maximising the information over $\mathcal W$. Using the selected $W$, we then report Moran's $\mathcal I_M$, APLE, MAPLE, RESAPLE, and the REML SEM estimate of $\rho$. The results are summarised in Table \ref{tab:ladder-results}, and visualised in Figure \ref{fig:rho-indices}.

\begin{table}[t]
\centering
\caption{Global dependence summaries across the model ladder. $p$ is the number of columns in $X$ (including the intercept). $W$ is the chosen candidate weights matrix, here \texttt{knn4}. $I_r(0)$ denotes the restricted null information used for $W$ selection. Moran and APLE are computed on OLS residuals; MAPLE is the covariate-adjusted ML one-step estimator; RESAPLE is the restricted residual-space one-step estimator; REML is the maximiser of the restricted profile likelihood over $\rho$.}
\small
\setlength{\tabcolsep}{6pt}
\renewcommand{\arraystretch}{1.15}
\begin{tabular}{@{}l r l r r r r r r@{}}
\toprule
Model & $p$ & $W$ & $I_{r}(0)$ & Moran & APLE & MAPLE & RESAPLE & REML \\
\midrule
M0 & 1  & knn4 & 215.3634 & 0.5837950 & 0.6823412 & 0.6850960 & 0.6755014 & 0.7031647 \\
M1 & 5  & knn4 & 209.2293 & 0.3800045 & 0.5113108 & 0.5349882 & 0.5254836 & 0.5517835 \\
M2 & 7  & knn4 & 206.5113 & 0.1871191 & 0.3261134 & 0.3596274 & 0.3518962 & 0.4250109 \\
M3 & 9  & knn4 & 202.9122 & 0.1462524 & 0.2724848 & 0.3112651 & 0.3104243 & 0.4039483 \\
M4 & 12 & knn4 & 198.0635 & 0.1260997 & 0.2427351 & 0.2803249 & 0.2925572 & 0.3906871 \\
\bottomrule
\end{tabular}

\label{tab:ladder-results}
\end{table}

\begin{figure}[pos=!htp]
    \centering
    \includegraphics[width=\linewidth]{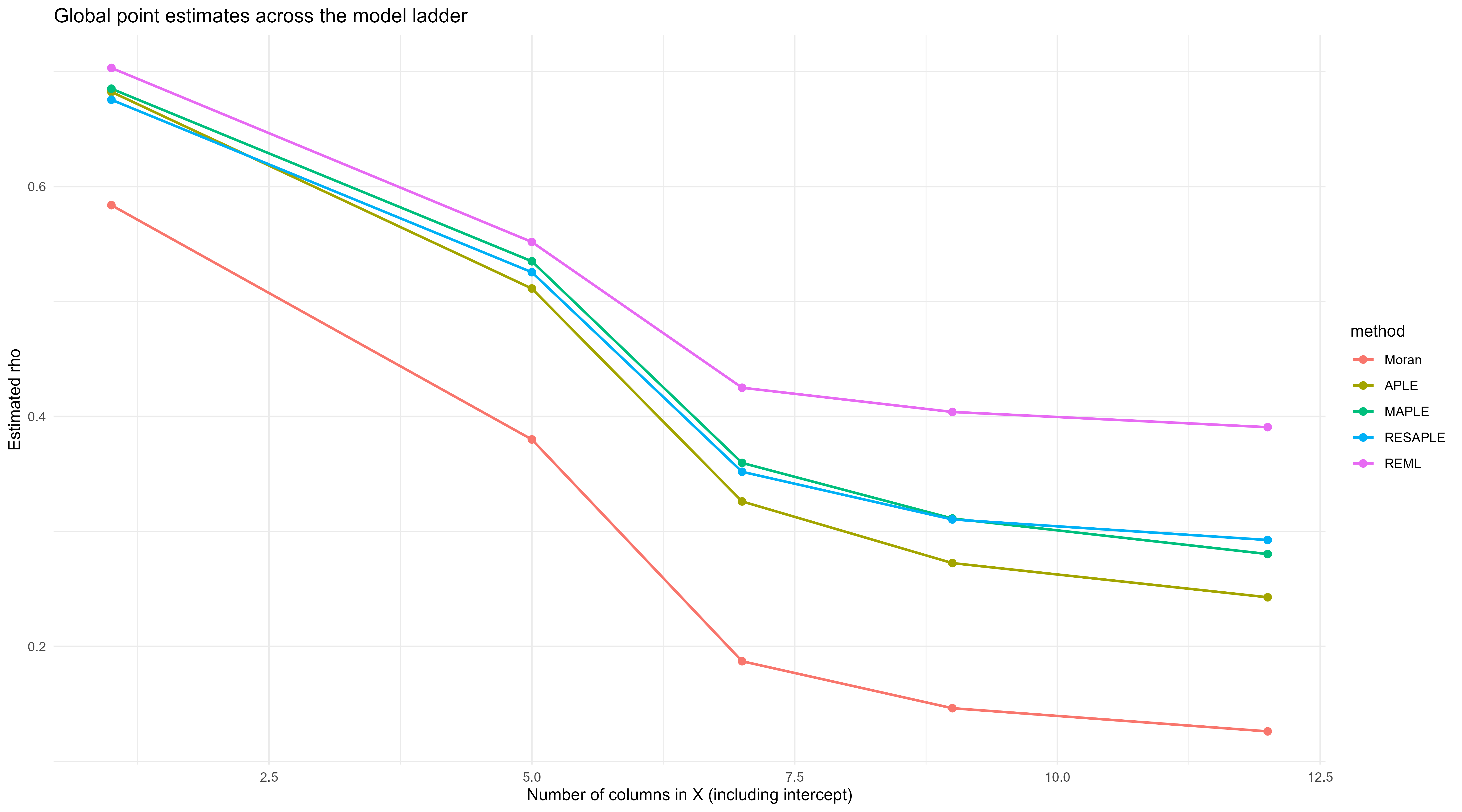}
    \caption{Global estimates of $\rho$ across the model ladder for the selected 4-nearest-neighbour weights.}
    \label{fig:rho-indices}
\end{figure}

Because RESAPLE is derived as a one-step approximation to the REML score equation, it is natural to compare it with the REML estimate as a likelihood-based reference. Table \ref{tab:ladder-results} shows that all methods estimate positive residual spatial dependence under every specification. The largest reduction occurs when socioeconomic covariates are introduced (M0 to M2), after which the estimates decrease more gradually as smooth spatial trend terms are added. Thus, while observed socioeconomic characteristics explain an important component of the spatial pattern in log MHHI, appreciable residual spatial dependence remains even under the richest mean specification (M4).

The one-step estimators all follow the same qualitative pattern as REML. MAPLE is closest to REML for M0--M3, while RESAPLE is closest for M4, where the residual degrees of freedom are smallest because the mean model contains the largest amount of predictors ($p$). The difference between MAPLE and RESAPLE is small relative to the overall decline across the model ladder. Since the true $\rho$ is unknown in this application, REML is used here only as a likelihood-based reference rather than as a gold standard. The case study illustrates that RESAPLE provides a closed-form residual-space estimate that tracks the likelihood-based REML estimator closely while simultaneously supporting the ESDA diagnostics developed in this study.

\begin{figure}[pos=!htbp]
  \centering
  \begin{subfigure}[t]{0.48\textwidth}
    \centering
    \includegraphics[width=\linewidth]{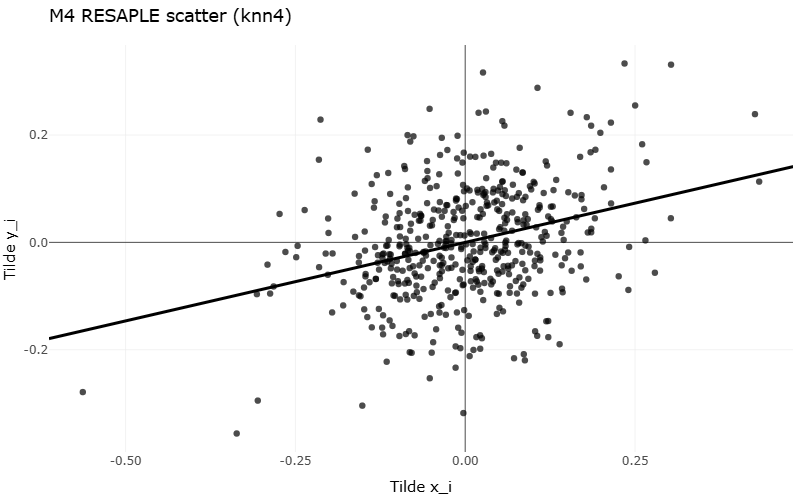}
    \caption{RESAPLE scatterplot.}
    \label{fig:RESAPLE-scatterplot}
  \end{subfigure}\hfill
  \begin{subfigure}[t]{0.48\textwidth}
    \centering
    \includegraphics[width=\linewidth]{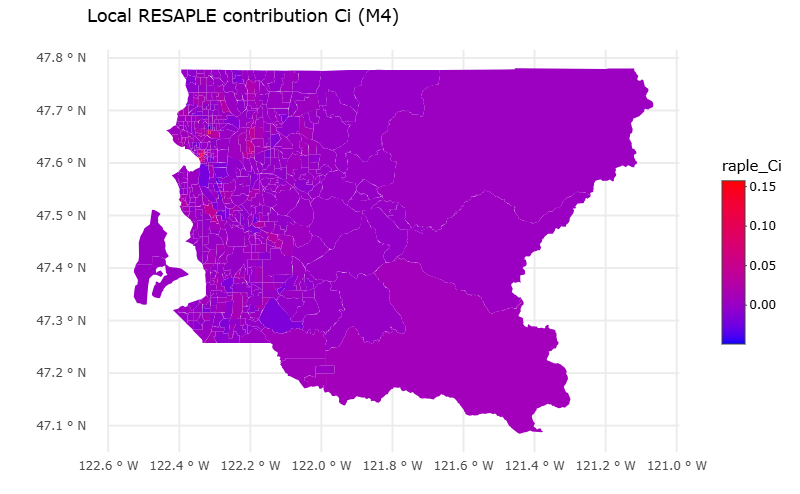}
    \caption{Map of local RESAPLE contributions $C_i$.}
    \label{fig:local-RESAPLE}
  \end{subfigure}
  \caption{Additional ESDA visualisations associated with the RESAPLE statistic for model M4.}
  \label{fig:esda-vis}
\end{figure}

Taking M4 as representative, we plotted the RESAPLE scatterplot and a map of the local RESAPLE contributions $C_i$ in Figure \ref{fig:esda-vis}. The scatterplot and the local map describe the same decomposition of the global RESAPLE statistic. Most tracts have local contributions close to zero, which is reflected in both the scatterplot and on the map by values clustered near zero. The largest positive mapped contributions are concentrated in the western and north-western urban tracts, including the Seattle area. These are contributions to residual spatial dependence rather than maps of income itself. They indicate where the fitted M4 residual structure contributes most strongly to the remaining positive spatial dependence. This shows that RESAPLE provides  not only as a global summary of residual spatial dependence, but also as a coherent framework for residual space scatterplots and local ESDA diagnostics.

\section{Conclusion}\label{sec:conclusion}
In this paper, we introduced RESAPLE, a one-step restricted likelihood estimator for the spatial dependence parameter in Gaussian SEMs with covariates. RESAPLE preserves the computational and interpretive advantages of APLE-type statistics through a closed-form ratio of quadratic forms, while remaining invariant to reparameterisations of the design matrix and to the choice of orthonormal residual basis. Most importantly, it targets the setting encountered in many applied spatial statistical workflows: diagnosing and testing residual spatial dependence \emph{after} detrending data via a mean model. 

Beyond defining the estimator, we developed theory parallel to the ML-based APLE literature in the restricted likelihood setting. This framework yields an analytic expression for the restricted Fisher information at the null, $\mathcal I_r(0)$, which provides a principled measure of local informativeness for comparing candidate spatial weight matrices after covariate adjustment. More broadly, this restricted-information perspective provides guidance for resolving a common practical ambiguity in spatial data analysis: how the choice of the spatial weights affects the detectability of residual spatial dependence. 

RESAPLE is designed as a residual-space estimator that also provides a flexible ESDA framework. In our simulation studies, RESAPLE consistently reduced estimation bias, generally achieved lowest RMSE, and maintained testing performance comparable to existing residual ESDA statistics across a range of sample sizes, covariate dimensions, and spatial topologies.  We further demonstrated how RESAPLE can be used in practice for residual diagnostics, scatterplot visualisation and mapping using a real data case study on median household income in King County, Washington, USA. 

There are several directions for future research that we may consider. First, it would be valuable to extend the restricted one-step construction to other spatial models, including SAR-lag specifications and models with additional variance components. Second, further work is needed to further characterize robustness to broader forms of distributional misspecification,  heteroscedasticity, and uncertainty in small area estimates. Finally, while $\mathcal I_r(0)$ provides an analytic criterion for comparing candidate weights locally, developing a systematic weight selection strategy that balances power, calibration, and interpretability remains an important topic for future work. 

\section*{Acknowledgements}
Dr. Meredith Franklin's contributions to this work were supported by the Natural Sciences and Engineering Research Council of Canada (NSERC) RGPIN-2024-06444.


\appendix

\section{Additional Discussion of Section \ref{sec:model-likelihood}}\label{app:model-likelihood}

\subsection{Covariance Structure Under the Gaussian SEM}\label{app:sem-cov}
Recall the SEM $$Z = X\beta + U,\qquad U = \rho WU + \epsilon,\qquad \epsilon \sim N(0,\sigma^2 I_n),$$ and write $R(\rho)=I_n-\rho W$. Under the standing assumption that $\det R(\rho)\neq 0$, we may solve the error equation to obtain $$U = R(\rho)^{-1}\epsilon.$$ It follows immediately that $$U \sim N\big(0, \sigma^2 R(\rho)^{-1}R(\rho)^{-\top}\big).$$ Consequently, the response satisfies $$ Z \sim N\big(X\beta, \sigma^2 \Sigma(\rho)\big),\qquad \Sigma(\rho)=R(\rho)^{-1}R(\rho)^{-\top}.$$

\subsection{Log-likelihood Under the Gaussian SEM}\label{app:loglik}
Let $\Sigma(\rho)=R(\rho)^{-1}R(\rho)^{-\top}$ so that $Z\sim N(X\beta,\sigma^2\Sigma(\rho))$ and $\Sigma(\rho)^{-1}=R(\rho)^\top R(\rho)$. Moreover, $$|\Sigma(\rho)| = |R(\rho)^{-1}R(\rho)^{-\top}| = |R(\rho)|^{-2}.$$
Therefore, up to the constant $C_n=-\frac{n}{2}\log(2\pi)$, the Gaussian log-likelihood is
\begin{align*}
\ell(\beta,\rho,\sigma^2)
&= C_n -\frac{1}{2}\log|\sigma^2\Sigma(\rho)| - \frac{1}{2}(Z-X\beta)^\top(\sigma^2\Sigma(\rho))^{-1}(Z-X\beta) \\
&= C_n + \log|R(\rho)| - \frac{n}{2}\log\sigma^2 - \frac{1}{2\sigma^2}(Z-X\beta)^\top R(\rho)^\top R(\rho)(Z-X\beta).
\end{align*}

\subsection{The SEM as a Gaussian LMM and the REML Viewpoint}\label{app:reml-view}
The SEM can be viewed as a Gaussian linear model with a covariance parameter. Specifically, the representation $$Z = X\beta + U,\qquad U \sim N\big(0,\sigma^2\Sigma(\rho)\big)$$
shows that $(\rho,\sigma^2)$ act as variance parameters governing the covariance structure of the errors. This motivates the usage of the standard restricted maximum likelihood (REML) framework for Gaussian models with fixed effects, as our ultimate goal is to do inference on the variance components of the model. In particular, REML eliminates $\beta$ by basing inference for $(\rho,\sigma^2)$ on linear combinations of $Z$ that remove the contribution of $X$, which aligns with the standard of detrending data before conducting ESDA.

For completeness, we recall the residual-space construction. Let $$P = X(X^\top X)^{-1}X^\top,\qquad M = I_n-P,$$ and let $H\in\R^{n\times r}$, $r=n-p$, satisfy $H^\top H=I_r$ and $HH^\top=M$. The residualised contrasts are $e=H^\top Z$.

\subsection{Proof of Lemma \ref{lem:error-dist}}\label{app:proof-lem-error}
\begin{proof}
For the first point, note that $MX=0$ by definition of $M$. Since $HH^\top=M$, we have $HH^\top X = MX = 0$. Multiplying on the left by $H^\top$ and using $H^\top H = I_r$ yields $$H^\top X = H^\top H H^\top X = H^\top(HH^\top X)=0.$$ Therefore, $$e = H^\top Z = H^\top(X\beta + U) = H^\top X\beta + H^\top U = H^\top U.$$

For the second point, noting that $U = R(\rho)^{-1}\epsilon$, we have $$e = H^\top U = H^\top R(\rho)^{-1}\epsilon.$$ Since $\epsilon \sim N(0,\sigma^2 I_n)$ and $H^\top R(\rho)^{-1}$ is deterministic conditional on $(X,W)$, it follows that $e$ is Gaussian with mean zero and covariance
\begin{align*}
\Cov(e)
&= \sigma^2 H^\top R(\rho)^{-1} I_n R(\rho)^{-\top} H \\
&= \sigma^2 H^\top R(\rho)^{-1}R(\rho)^{-\top} H
= \sigma^2 \Sigma_r(\rho),
\end{align*}
where $\Sigma_r(\rho)=H^\top R(\rho)^{-1}R(\rho)^{-\top}H$.
\end{proof}

\section{Additional Discussion of Section \ref{sec:RESAPLE-estimator}}\label{app:RESAPLE-estimator}

\subsection{Restricted Likelihood and Profiling}
\begin{lem}\label{lem:restricted-likelihood}
Under the SEM model, the restricted likelihood for $(\rho, \sigma^2)$ up to an additive constant is $$\ell_r(\rho, \sigma^2;Z) = \log|R(\rho)| - \frac{1}{2}\log|X^\top R(\rho)^\top R(\rho)X| - \frac{r}{2}\log \sigma^2 - \frac{1}{2\sigma^2}Z^\top P_{R^\top(\rho)R(\rho)}Z,$$
where
\begin{align*}
P_{R^\top(\rho)R(\rho)}
&= R(\rho)^\top R(\rho)
- R(\rho)^\top R(\rho) X\big(X^\top R(\rho)^\top R(\rho)X\big)^{-1} X^\top R(\rho)^\top R(\rho).
\end{align*}
Equivalently in terms of the residual contrasts $e=H^\top Z$, we may write
\begin{align*}
\ell_r(\rho, \sigma^2;e) &= C_r - \frac{r}{2}\log \sigma^2 - \frac{1}{2}\log\big|\Sigma_r(\rho)\big| - \frac{1}{2\sigma^2}e^\top \Sigma_r(\rho)^{-1}e,
\end{align*}
where $\Sigma_r(\rho)=H^\top R(\rho)^{-1}R(\rho)^{-\top}H$ as in Lemma \ref{lem:error-dist}, and $C_r$ is an additive constant that does not depend on $(\rho,\sigma^2)$.
\end{lem}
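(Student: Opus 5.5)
The plan is to prove the contrast form first, then deduce the Harville-type form. By Lemma \ref{lem:error-dist}, $e=H^\top Z\sim N(0,\sigma^2\Sigma_r(\rho))$ with $\Sigma_r(\rho)=H^\top\Sigma(\rho)H$, and this law does not involve $\beta$. The restricted likelihood is, by definition, the likelihood of these $r$ error contrasts. Writing the $r$-variate Gaussian log-density gives
$$-\tfrac r2\log(2\pi)-\tfrac r2\log\sigma^2-\tfrac12\log|\Sigma_r(\rho)|-\tfrac1{2\sigma^2}e^\top\Sigma_r(\rho)^{-1}e.$$
This is the second display with $C_r=-\tfrac r2\log(2\pi)$. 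Positive definiteness of $\Sigma_r(\rho)$ holds because $R(\rho)^{-1}$ is nonsingular on $\Theta$ and $H$ has full column rank.

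To pass to the first form I need two identities, both for $\Sigma=\Sigma(\rho)$ with $\Sigma^{-1}=R^\top R$.

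\textbf{Quadratic form.} Show $H(H^\top\Sigma H)^{-1}H^\top=\Sigma^{-1}-\Sigma^{-1}X(X^\top\Sigma^{-1}X)^{-1}X^\top\Sigma^{-1}=P_{R^\top R}$.
\begin{enumerate}
\item Set $T=[H,\ \Sigma^{-1}X]$. Since $H^\top X=0$ (Lemma \ref{lem:error-dist}) and the columns of $H$ and $X$ together span $\R^n$, $T$ is invertible.
\item Let $Q$ be either side of the claimed identity. Check that $Q\Sigma H=H$ and $Q\Sigma\Sigma^{-1}X=QX=0$ hold for both candidates.
\item Hence both sides agree on the image of $\Sigma T$, which is all of $\R^n$.
\end{enumerate}
Then $e^\top\Sigma_r^{-1}e=Z^\top H\Sigma_r^{-1}H^\top Z=Z^\top P_{R^\top R}Z$.

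\textbf{Determinant.} Show $\log|\Sigma_r(\rho)|=-2\log|R(\rho)|+\log|X^\top R^\top RX|-\log|X^\top X|$.
\begin{enumerate}
\item Take the orthogonal matrix $O=[H,\ X(X^\top X)^{-1/2}]$.
\item Apply the Schur complement identity to $O^\top\Sigma^{-1}O$, which has determinant $|\Sigma|^{-1}=|R|^2$. The Schur complement of the $X$-block is the inverse of the $H$-block of $O^\top\Sigma O$, namely $\Sigma_r^{-1}$.
\item This yields $|R|^2=|(X^\top X)^{-1/2}X^\top\Sigma^{-1}X(X^\top X)^{-1/2}|\cdot|\Sigma_r|^{-1}$.
\end{enumerate}
The term $\log|X^\top X|$ depends on neither $\rho$ nor $\sigma^2$, so it is absorbed into the additive constant.

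Substituting both identities into the contrast form gives the stated $\ell_r(\rho,\sigma^2;Z)$. The main obstacle is the determinant identity. I would do it via the block-inverse (Schur) argument above, applied carefully so that no $\rho$-dependent factor is lost. The rest is bookkeeping of constants.
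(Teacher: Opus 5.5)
Your proof is correct, but it runs in the opposite direction from the paper's. The paper takes Harville's REML formula $-\tfrac12\bigl(\log|V|+\log|X^\top V^{-1}X|+Z^\top P_V Z\bigr)$ as known. It substitutes $V=\sigma^2R(\rho)^{-1}R(\rho)^{-\top}$ to read off the $Z$-form. It then gets the $e$-form by citing the standard fact that the restricted likelihood equals the likelihood of any full-rank set of error contrasts, up to a constant.

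You instead start from the Patterson--Thompson definition, the likelihood of the contrasts. This makes the $e$-form immediate from Lemma \ref{lem:error-dist}. You then prove the equivalence that the paper only cites, using the projection identity $H(H^\top\Sigma H)^{-1}H^\top=P_{R^\top R}$ and the Schur-complement determinant identity.

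Your route is longer but self-contained. It verifies the invariance claim rather than assuming it, and shows the $Z$-form does not depend on the choice of $H$. It also identifies the discarded constant explicitly: the contrast likelihood exceeds the stated $Z$-form by $\tfrac12\log|X^\top X|-\tfrac r2\log(2\pi)$. The paper's route is a two-line substitution once Harville's formula is granted.

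One small repair concerns the invertibility of $T=[H,\ \Sigma^{-1}X]$. Your stated reason, that the columns of $H$ and $X$ together span $\R^n$, is not sufficient on its own. An invertible but indefinite symmetric matrix can map the column space of $X$ into that of $H$: take $X=(1,0)^\top$, $H=(0,1)^\top$, and the swap matrix. Use positive definiteness instead. If $Ha+\Sigma^{-1}Xb=0$, premultiplying by $X^\top$ and using $X^\top H=0$ gives $X^\top R^\top R X\,b=0$. Hence $b=0$ because $X^\top R^\top RX\succ0$, and then $a=0$ because $H$ has full column rank. With that fixed, the argument is complete.
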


\begin{proof}
Let $V(\rho,\sigma^2)=\sigma^2\Sigma(\rho)=\sigma^2R(\rho)^{-1}R(\rho)^{-\top}$. For a Gaussian linear model $Z\sim N(X\beta,V)$ with $V$ positive definite and $X$ full rank, the restricted log-likelihood for the covariance parameters, up to an additive constant, is $$\ell_R(V;Z) = -\frac{1}{2}\Big(\log|V|+\log|X^\top V^{-1}X|+Z^\top P_V Z\Big),$$ where $$P_V = V^{-1} - V^{-1}X\big(X^\top V^{-1} X\big)^{-1}X^\top V^{-1}.$$ In our setting, $V^{-1}=\sigma^{-2}\Sigma(\rho)^{-1}=\sigma^{-2}R(\rho)^\top R(\rho)$, and $|V|=(\sigma^2)^n|\Sigma(\rho)|.$ Since $|\Sigma(\rho)|=|R(\rho)^{-1}R(\rho)^{-\top}|=|R(\rho)|^{-2}$, we have $$\log|V| = n\log\sigma^2 - 2\log|R(\rho)|.$$ Moreover, $$X^\top V^{-1}X = \sigma^{-2}X^\top R(\rho)^\top R(\rho)X, \qquad \log|X^\top V^{-1}X| = -p\log\sigma^2 + \log|X^\top R(\rho)^\top R(\rho)X|.$$ Finally, $P_V=\sigma^{-2}P_{R^\top(\rho)R(\rho)}$. Substituting these identities into $\ell_R(V;Z)$ and simplifying yields our asserted form for $\ell_r(\rho,\sigma^2;Z)$, up to an additive constant. For the representation in terms of $e$, note from Lemma \ref{lem:error-dist} that $e=H^\top Z$ is Gaussian with mean zero and covariance $\sigma^2\Sigma_r(\rho)$, where $\Sigma_r(\rho)=H^\top R(\rho)^{-1}R(\rho)^{-\top}H$. The restricted likelihood is invariant (up to an additive constant not depending on $(\rho,\sigma^2)$) to replacing $Z$ by any full-rank set of error contrasts that annihilate $X$; hence it may be written as the Gaussian log-likelihood of $e$, which yields the stated form for $\ell_r(\rho,\sigma^2;e)$.
\end{proof}

\begin{lem}\label{lem:profile-likelihood}
For the restricted log-likelihood in $e$ given by $$\ell_r(\rho, \sigma^2;e) = -\frac{r}{2}\log \sigma^2 - \frac{1}{2}\log|\Sigma_r(\rho)|- \frac{1}{2\sigma^2}e^\top \Sigma_r(\rho)^{-1}e,$$ the profile log-likelihood (up to additive constants) is $$\ell_r^p(\rho;e) = -\frac{1}{2}\log|\Sigma_r(\rho)| -\frac{r}{2}\log\Big(\frac{1}{r}e^\top \Sigma_r(\rho)^{-1}e\Big).$$
\end{lem}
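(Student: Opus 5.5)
The plan is the standard profiling argument for a scale parameter. For fixed $\rho$, I will maximise $\ell_r(\rho,\sigma^2;e)$ over $\sigma^2>0$ in closed form and substitute the maximiser back. Write $q(\rho)\defeq e^\top\Sigma_r(\rho)^{-1}e$. By Lemma \ref{lem:error-dist}, $\Sigma_r(\rho)=H^\top R(\rho)^{-1}R(\rho)^{-\top}H$ with $H$ of full column rank and $R(\rho)$ invertible on $\Theta$, so $\Sigma_r(\rho)\succ0$. Hence $q(\rho)>0$ whenever $e\neq0$, and $e\neq0$ holds almost surely because $e$ has a nondegenerate Gaussian law.

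First, differentiate in $\sigma^2$:
$$\partial_{\sigma^2}\ell_r=-\frac{r}{2\sigma^2}+\frac{q(\rho)}{2\sigma^4}.$$
This vanishes uniquely at $\hat\sigma_r^2(\rho)=q(\rho)/r$.

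Second, check that this stationary point is the global maximiser. The derivative is positive for $\sigma^2<\hat\sigma_r^2(\rho)$ and negative for $\sigma^2>\hat\sigma_r^2(\rho)$. Alternatively, in the variable $\tau=\log\sigma^2$ the objective $-\frac r2\tau-\frac q2e^{-\tau}$ is strictly concave and tends to $-\infty$ at both ends.

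Third, substitute $\hat\sigma_r^2(\rho)$ back. The term $-\frac{1}{2\hat\sigma_r^2}q(\rho)$ becomes the constant $-r/2$, which is absorbed into the additive constant. The term $-\frac r2\log\hat\sigma_r^2(\rho)$ becomes $-\frac r2\log\big(\frac1r e^\top\Sigma_r(\rho)^{-1}e\big)$. The log-determinant term is untouched. This gives exactly the stated $\ell_r^p(\rho;e)$.

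There is no real obstacle here; the only point needing care is justifying $q(\rho)>0$, so that the logarithm and the maximiser are well defined. I handle this through the positive definiteness of $\Sigma_r(\rho)$ noted above, together with the almost-sure nonvanishing of $e$.
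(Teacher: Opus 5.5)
Your proposal is correct and follows the same route as the paper's proof: solve $\partial_{\sigma^2}\ell_r=0$ to get $\hat\sigma_r^2(\rho)=\frac1r e^\top\Sigma_r(\rho)^{-1}e$, then substitute back and absorb the constant $-r/2$. You also check that this stationary point is the global maximiser and that $e^\top\Sigma_r(\rho)^{-1}e>0$ almost surely; the paper leaves both points implicit, and your handling of them is correct.
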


\begin{proof}
Differentiating $\ell_r(\rho,\sigma^2;e)$ with respect to $\sigma^2$ yields $$\pdv{\sigma^2}\ell_r(\rho, \sigma^2;e) = -\frac{r}{2\sigma^2} + \frac{1}{2\sigma^{4}}e^\top\Sigma_r(\rho)^{-1}e.$$ Setting this derivative to zero and solving yields $$\hat \sigma_r^2(\rho) = \frac{1}{r}e^\top\Sigma_r(\rho)^{-1}e.$$ Substituting $\hat\sigma_r^2(\rho)$ into $\ell_r(\rho,\sigma^2;e)$ and collecting terms gives the stated profile likelihood, up to an additive constant.
\end{proof}

\subsection{Proof of Lemma \ref{lem:score}}
\begin{proof}
For the first point, at $\rho=0$ we have $R(0)=I_n$ and therefore $\Sigma(0)=I_n$. Hence $\Sigma_r(0)=H^\top I_n H=I_r$.

For the second point, note that $(I_n-\rho W)^{-1}=I_n+\rho W+\bigO(\rho^2)$ as $\rho\to 0$. Therefore,
\begin{align*}
\Sigma(\rho)&= R(\rho)^{-1}R(\rho)^{-\top} = \big(I_n+\rho W+\bigO(\rho^2)\big)\big(I_n+\rho W^\top+\bigO(\rho^2)\big) \\
&= I_n + \rho(W+W^\top) + \bigO(\rho^2),
\end{align*}
so that $\Sigma'(0)=W+W^\top$. It follows that $$\Sigma_r'(0)=H^\top\Sigma'(0)H = H^\top(W+W^\top)H = W_r + W_r^\top = 2K_r.$$

For the third point, write $Q(\rho)=e^\top\Sigma_r(\rho)^{-1}e$, so that $$\ell_r^p(\rho;e) = -\frac{1}{2}\log|\Sigma_r(\rho)| - \frac{r}{2}\log Q(\rho) + \text{const}.$$ Then $$S_r(\rho) = -\frac{1}{2}\pdv{\rho}\log|\Sigma_r(\rho)| - \frac{r}{2}\pdv{\rho}\log Q(\rho).$$ Using $\pdv{\rho}\log|\Sigma_r(\rho)|=\Tr\big(\Sigma_r(\rho)^{-1}\Sigma_r'(\rho)\big)$ gives $\pdv{\rho}\log|\Sigma_r(\rho)|\big|_{\rho=0}=\Tr(2K_r).$ Moreover, we have that $\pdv{\rho}\Sigma_r(\rho)^{-1}=-\Sigma_r(\rho)^{-1}\Sigma_r'(\rho)\Sigma_r(\rho)^{-1},$ so $$Q'(0) = -e^\top \Sigma_r'(0)e = -2e^\top K_r e, \qquad Q(0)=e^\top e,$$ and therefore $$ \pdv{\rho}\log Q(\rho)\big|_{\rho=0} = \frac{Q'(0)}{Q(0)} = -2\frac{e^\top K_r e}{e^\top e}.$$ Substituting these identities into the expression for $S_r(\rho)$ yields $$S_r(0) = -\Tr(K_r) + r\frac{e^\top K_r e}{e^\top e}.$$
\end{proof}

\subsection{Proof of Theorem \ref{thm:RESAPLE-def}}
\begin{proof}
We prove each point in turn.

For the first point, let $\tilde H$ be another choice satisfying $\tilde H\tilde H^\top=M$, $\tilde H^\top \tilde H=I_r$, and $\tilde H^\top X=0$. Then there exists an orthogonal matrix $Q\in\R^{r\times r}$ such that $\tilde H = HQ$. Define
$$\tilde e = \tilde H^\top Z = Q^\top e,\qquad \tilde W_r = \tilde H^\top W\tilde H = Q^\top W_r Q,\qquad \tilde K_r = \frac{1}{2}(\tilde W_r+\tilde W_r^\top)=Q^\top K_r Q.$$
The trace is cyclic, so it immediately follows that $\Tr(\tilde K_r)=\Tr(K_r)$ and $\Tr(\tilde K_r^2)=\Tr(K_r^2)$, so $\mu_r$ and $\omega_r$ are unchanged. Moreover, $\tilde B_r=\tilde K_r^2+\omega_r I_r=Q^\top B_r Q$. Therefore,
\begin{align*}
\tilde e^\top(\tilde K_r-\mu_r I_r)\tilde e &= e^\top Q(Q^\top K_r Q-\mu_r I_r)Q^\top e = e^\top(K_r-\mu_r I_r)e,
\end{align*}
and similarly $\tilde e^\top \tilde B_r \tilde e = e^\top B_r e.$ This proves invariance to the choice of $H$.

For the second point, under $\rho=0$ the model reduces to $Z=X\beta+\epsilon$ with $\epsilon\sim N(0,\sigma^2 I_n)$, so $e=H^\top\epsilon\sim N(0,\sigma^2 I_r)$. Note $\expect_0[e^\top A e]=\sigma^2\Tr(A)$. Taking $A=K_r-\mu_r I_r$ and using $\mu_r=\Tr(K_r)/r$ gives $\Tr(A)=0$ and hence $\expect_0[e^\top(K_r-\mu_r I_r)e]=0$. If $K_r\neq0$, then $\Tr(K_r^2)>0$, so $\omega_r>0$ and $B_r=K_r^2+\omega_r I_r\succ0$. Taking $A=B_r$ gives
$$\expect_0[e^\top B_r e] =\sigma^2\Tr(B_r)=\sigma^2\{\Tr(K_r^2)+r\omega_r\}=2\sigma^2\Tr(K_r^2).$$
By Lemma \ref{lem:restricted-info}, $2\Tr(K_r^2)=\Tr(W_r^\top W_r)+\Tr(W_r^2)$.

For the third and fourth points, consider the one-step approximation to the restricted score equation at $\rho=0$, $$S_r(0)+\mathcal I_r^A(0)\rho=0,$$ with $\mathcal I_r^A(0)$ as defined in Theorem \ref{thm:RESAPLE-def}. By Lemma \ref{lem:score}, $$S_r(0)=\frac{r}{e^\top e}e^\top(K_r-\mu_r I_r)e.$$ Hence,
\begin{align*}
-\frac{S_r(0)}{\mathcal I_r^A(0)} &= -\frac{\frac{r}{e^\top e}e^\top(K_r-\mu_r I_r)e}{-\frac{r}{e^\top e}e^\top B_r e} = \frac{e^\top(K_r-\mu_r I_r)e}{e^\top B_r e} =\hat\rho_{RESAPLE},
\end{align*}
which proves (3) and (4).
\end{proof}

\subsection{Proof of Corollary \ref{cor:design-invariance}}
\begin{proof}
Let $\tilde X=XQ$ with $Q$ invertible. Then $ \Im(\tilde X)= \Im(X)$, so the orthogonal projectors satisfy $\tilde P=P$ and $\tilde M=M$. Any admissible $\tilde H$ with $\tilde H\tilde H^\top=M$ and $\tilde H^\top\tilde H=I_r$ differs from $H$ by an orthogonal transformation. The conclusion follows from the first point of Theorem \ref{thm:RESAPLE-def}.
\end{proof}

\subsection{Motivating the Approximate Curvature and Restricted Fisher Information}
To motivate $\mathcal I_r^A(0)$, it is helpful to discuss the Fisher information for a covariance parameter in a mean zero Gaussian model.

\begin{lem}\label{lem:fisher-formula}
Let $y \sim N(0, C(\theta))$ where $C(\theta)$ is positive definite and twice continuously differentiable in $\theta \in \R$. Then the Fisher information for $\theta$ is $$\mathcal I(\theta) = \frac{1}{2}\Tr\Big(C(\theta)^{-1}C_\theta(\theta)C(\theta)^{-1}C_\theta(\theta)\Big),$$ where $C_\theta(\theta)=\pdv{\theta}C(\theta)$.
\end{lem}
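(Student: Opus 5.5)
The plan is the classical route: write the Fisher information as the variance of the score, using Gaussian quadratic-form moments. The log-density of $y$ is
$$\ell(\theta;y) = -\tfrac12\log|C(\theta)| - \tfrac12 y^\top C(\theta)^{-1}y + \text{const}.$$
I first differentiate it once in $\theta$, using $\pdv{\theta}\log|C| = \Tr(C^{-1}C_\theta)$ (Jacobi's formula, as in the proof of Lemma \ref{lem:score}) and $\pdv{\theta}C^{-1} = -C^{-1}C_\theta C^{-1}$. This gives the score
$$s(\theta;y) = -\tfrac12\Tr(C^{-1}C_\theta) + \tfrac12 y^\top A y,\qquad A \defeq C^{-1}C_\theta C^{-1}.$$
Here $A$ is symmetric because $C$ and $C_\theta$ are symmetric.

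Next I verify that $\expect[s]=0$ via $\expect[y^\top A y] = \Tr(AC) = \Tr(C^{-1}C_\theta)$. I then compute $\mathcal I(\theta)=\Var(s) = \tfrac14\Var(y^\top A y)$. For this I use the standard Gaussian identity $\Var(y^\top A y) = 2\Tr(ACAC)$ for symmetric $A$ and $y\sim N(0,C)$. To justify it, write $y = C^{1/2}\xi$ with $\xi\sim N(0,I)$ and diagonalise $C^{1/2}AC^{1/2}=\sum_k\lambda_k u_ku_k^\top$. Then $y^\top A y=\sum_k\lambda_k\chi^2_{1,k}$ with independent terms, so its variance is $2\sum_k\lambda_k^2 = 2\Tr\big((C^{1/2}AC^{1/2})^2\big)=2\Tr(ACAC)$. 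Substituting $A$ gives $ACAC = C^{-1}C_\theta C^{-1}C_\theta$, so $\mathcal I(\theta)=\tfrac12\Tr(C^{-1}C_\theta C^{-1}C_\theta)$.

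As a cross-check, I would also take the route $\mathcal I(\theta) = -\expect[\partial_\theta^2\ell]$, which is where the twice-differentiability hypothesis enters. Differentiating the score again gives
$$-\tfrac12\Tr(C^{-1}C_{\theta\theta}) + \tfrac12\Tr(C^{-1}C_\theta C^{-1}C_\theta) + \tfrac12 y^\top \partial_\theta A\, y,$$
with $\partial_\theta A = -2C^{-1}C_\theta C^{-1}C_\theta C^{-1} + C^{-1}C_{\theta\theta}C^{-1}$. Taking expectations via $\expect[y^\top B y]=\Tr(BC)$, the $C_{\theta\theta}$ terms cancel and the result reduces to $-\tfrac12\Tr(C^{-1}C_\theta C^{-1}C_\theta)$, consistent with the first route.

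The only real obstacle is the fourth-moment identity $\Var(y^\top Ay)=2\Tr(ACAC)$. I would handle it by the whitening and diagonalisation argument above rather than through Isserlis' theorem, since that reduces it to independent $\chi^2_1$ variances. Everything else is routine matrix calculus.
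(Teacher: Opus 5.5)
Your proposal is correct and takes essentially the same route as the paper: compute the score, note that it has mean zero, and obtain the information as $\frac14\Var(y^\top C^{-1}C_\theta C^{-1}y)$ via the Gaussian identity $\Var(y^\top A y)=2\Tr(ACAC)$. Your whitening-and-diagonalisation proof of that identity, which the paper simply quotes, and your expected-Hessian cross-check are both correct additions.
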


\begin{proof}
The log-likelihood is $\ell(\theta;y) = -\frac{1}{2}\log|C(\theta)| - \frac{1}{2}y^\top C(\theta)^{-1}y + \text{const}$. The score is
\begin{align*}
S(\theta;y)
&= \pdv{\theta}\ell(\theta;y) = -\frac{1}{2}\Tr\big(C^{-1}C_\theta\big) + \frac{1}{2}y^\top C^{-1}C_\theta C^{-1}y,
\end{align*}
and $\expect_\theta[S(\theta;y)]=0$. For a zero mean Gaussian vector, $\Var(y^\top Ay)=2\Tr(ACA C)$ for any deterministic, symmetric $A$. Taking $A=\frac{1}{2}C^{-1}C_\theta C^{-1}$ yields
\begin{align*}
\Var\big(S(\theta;y)\big) &= \frac{1}{4}\Var\big(y^\top C^{-1}C_\theta C^{-1}y\big)
= \frac{1}{2}\Tr\big(C^{-1}C_\theta C^{-1}C_\theta\big),
\end{align*}
which equals the Fisher information.
\end{proof}

\begin{lem}\label{lem:restricted-info}
For $e=H^\top Z$ under the SEM model, let $C(\rho)=\sigma^2\Sigma_r(\rho)$. Then the Fisher information for $\rho$ at $\rho=0$ is $$\mathcal I_r(0) = 2\Tr(K_r^2)=\Tr(W_r^\top W_r) + \Tr(W_r^2).$$
\end{lem}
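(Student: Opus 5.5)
The plan is to apply Lemma \ref{lem:fisher-formula} directly to the mean-zero Gaussian vector $e\sim N(0,\sigma^2\Sigma_r(\rho))$ from Lemma \ref{lem:error-dist}, taking $C(\rho)=\sigma^2\Sigma_r(\rho)$ with $\sigma^2$ held fixed. First we check the regularity hypotheses of Lemma \ref{lem:fisher-formula}. In a neighbourhood of $\rho=0$, $R(\rho)$ is invertible. Its inverse is smooth (indeed analytic) in $\rho$, so $\Sigma_r(\rho)=H^\top R(\rho)^{-1}R(\rho)^{-\top}H$ is twice continuously differentiable. It is also positive definite there, because $H$ has full column rank $r$ and $R(\rho)^{-1}R(\rho)^{-\top}\succ0$.

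Next we evaluate the two ingredients at $\rho=0$. By Lemma \ref{lem:score}, $\Sigma_r(0)=I_r$ and $\Sigma_r'(0)=2K_r$. Hence $C(0)=\sigma^2 I_r$ and $C_\rho(0)=2\sigma^2K_r$. Substituting into Lemma \ref{lem:fisher-formula}, the $\sigma^2$ factors cancel, $C(0)^{-1}C_\rho(0)=2K_r$, and so
$$\mathcal I_r(0)=\tfrac12\Tr\big((2K_r)(2K_r)\big)=2\Tr(K_r^2).$$

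It remains to show $2\Tr(K_r^2)=\Tr(W_r^\top W_r)+\Tr(W_r^2)$, which is pure trace algebra. Expanding $K_r^2=\tfrac14(W_r^2+W_rW_r^\top+W_r^\top W_r+(W_r^\top)^2)$, we use cyclicity to get $\Tr(W_rW_r^\top)=\Tr(W_r^\top W_r)$, and invariance of trace under transposition to get $\Tr((W_r^\top)^2)=\Tr(W_r^2)$. Summing gives $4\Tr(K_r^2)=2\Tr(W_r^\top W_r)+2\Tr(W_r^2)$, which is the claim.

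We expect no real obstacle here. The only subtle point is interpretive: the Fisher information is for $\rho$ with $\sigma^2$ treated as known. This is legitimate because at $\rho=0$ the cross information between $\rho$ and $\sigma^2$ equals $\tfrac12\Tr(C^{-1}C_\rho C^{-1}C_{\sigma^2})=\tfrac{1}{2\sigma^2}\Tr(2K_r)$. In general this is nonzero, since $\Tr(K_r)$ need not vanish after residualisation. So if the profile or efficient information were intended, a correction term $-2\Tr(K_r)^2/r$ would appear. We would state explicitly that the lemma concerns the information for $\rho$ in the model with $C(\rho)=\sigma^2\Sigma_r(\rho)$ at fixed $\sigma^2$, exactly as Lemma \ref{lem:fisher-formula} is applied.
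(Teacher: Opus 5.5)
Your proof is correct and is essentially the paper's argument: apply Lemma~\ref{lem:fisher-formula} with $C(0)=\sigma^2 I_r$ and $C_\rho(0)=2\sigma^2K_r$ to get $2\Tr(K_r^2)$, then expand $\Tr\big((W_r+W_r^\top)^2\big)$ using cyclicity and transpose-invariance of the trace. Two refinements the paper leaves implicit are accurate: your regularity check, and your remark that this is the information at fixed $\sigma^2$ (the efficient information would be $2\Tr(K_r^2)-2\Tr(K_r)^2/r$).
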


\begin{proof}
By Lemma \ref{lem:fisher-formula}, $$\mathcal I_r(0) = \frac{1}{2}\Tr\Big(C(0)^{-1}C_\rho(0)C(0)^{-1}C_\rho(0)\Big).$$
Using $C(0)=\sigma^2 I_r$ and $C_\rho(0)=\sigma^2\Sigma_r'(0)=\sigma^2(2K_r)$ gives
\begin{align*}
\mathcal I_r(0) &= \frac{1}{2}\Tr\big((2K_r)^2\big)= 2\Tr(K_r^2).
\end{align*}
Since $K_r=\frac{1}{2}(W_r+W_r^\top)$,
\begin{align*}
4\Tr(K_r^2) &= \Tr\big((W_r+W_r^\top)^2\big)
= \Tr(W_r^2)+\Tr(W_rW_r^\top)+\Tr(W_r^\top W_r)+\Tr((W_r^\top)^2) \\
&= 2\Tr(W_r^2)+2\Tr(W_r^\top W_r),
\end{align*}
and dividing by two yields $\mathcal I_r(0)=\Tr(W_r^2)+\Tr(W_r^\top W_r)$.
\end{proof}

The expression $\mathcal I_r(0)=\Tr(W_r^\top W_r)+\Tr(W_r^2)=2\Tr(K_r^2)$ provides an analytic characterization of the population curvature at $\rho=0$. Denote the denominator of $\hat\rho_{RESAPLE}$ by $$D(e)=e^\top B_r e,\qquad B_r=K_r^2+\omega_r I_r,\qquad \omega_r=\frac{\Tr(K_r^2)}{r}.$$ From Theorem \ref{thm:RESAPLE-def}, $\expect_0[D(e)]=\sigma^2\mathcal I_r(0)$. Since $\sigma^2$ is generally unknown, a natural estimator at $\rho=0$ is $\hat\sigma_r^2(0)=\frac{1}{r}e^\top e$, and this motivates the stabilised curvature approximation $$\mathcal I_r^A(0) = -\frac{r}{e^\top e}D(e),$$ which is approximately proportional to $-\mathcal I_r(0)$ after substituting $\hat\sigma_r^2(0)=\frac1r e^\top e$ for $\sigma^2$, since $\frac{r}{e^\top e}\approx \frac{1}{\sigma^2}$ under $\rho=0$ and $D(e)\approx \sigma^2\mathcal I_r(0)$.

Having now discussed the approximate curvature in detail, there is also the natural question of why this quantity is used at all in the formula for the RESAPLE, rather than the exact restricted information directly. There are two related reasons. First, the exact restricted Fisher information at $\rho=0$ is a scalar. Using it directly would give the one-step statistic
$$\frac{e^\top(K_r-\mu_r I_r)e}{(e^\top e/r)\mathcal I_r(0)},$$
which is locally valid but no longer has the APLE-type denominator that adapts to the realised spatially filtered residual signal. 
Second, the exact observed curvature of the restricted profile likelihood contains additional trace and quadratic-form terms from differentiating $\Sigma_r(\rho)^{-1}$ and $\log|\Sigma_r(\rho)|$. This quantity is useful for numerical REML optimisation, but it is less suitable for ESDA because it does not generally yield a simple positive Rayleigh quotient with a stable scatterplot and local contribution decomposition.

\subsection{A Law of Large Numbers for Quadratic Forms}
\begin{lem}\label{lem:quadratic-lln}
Let $e \sim N(0, \sigma^2 I_r)$ and let $A_r$ be a sequence of deterministic symmetric $r \times r$ matrices with uniformly bounded operator norm. Then, as $r \to \infty$, $$\frac{1}{r}\Big(e^\top A_r e - \sigma^2 \Tr(A_r)\Big) \pconv 0.$$
\end{lem}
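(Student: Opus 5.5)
The approach is a second-moment (Chebyshev) argument. Write $T_r \defeq e^\top A_r e - \sigma^2\Tr(A_r)$. The Gaussian moment identities already used in the proof of Lemma \ref{lem:fisher-formula} give
$$\expect[T_r]=0,\qquad \Var(T_r)=2\sigma^4\Tr(A_r^2),$$
since $\expect[e^\top A_r e]=\sigma^2\Tr(A_r)$ and $\Var(e^\top A_r e)=2\Tr(A_r\sigma^2 I_r A_r\sigma^2 I_r)$ for symmetric deterministic $A_r$. If one prefers to be self-contained, both identities can be rechecked by diagonalising $A_r=Q\Lambda Q^\top$ with $Q$ orthogonal. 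Then $Q^\top e\sim N(0,\sigma^2 I_r)$, so
$$e^\top A_r e \eqd \sigma^2\sum_{i=1}^r \lambda_i\chi^2_{1,i}$$
with independent $\chi^2_1$ variables. This has mean $\sigma^2\sum_i\lambda_i$ and variance $2\sigma^4\sum_i\lambda_i^2$.

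Next, I would bound the trace using the uniform operator-norm assumption. Let $c\defeq\sup_r\|A_r\|_{op}<\infty$. Every eigenvalue satisfies $|\lambda_i|\le c$, so
$$\Tr(A_r^2)=\sum_{i=1}^r\lambda_i^2\le r c^2.$$
Hence
$$\Var\big(T_r/r\big)\le \frac{2\sigma^4 c^2}{r}.$$

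Chebyshev's inequality then gives, for any $\varepsilon>0$,
$$\prob\big(|T_r|/r>\varepsilon\big)\le \frac{2\sigma^4c^2}{r\varepsilon^2}\to0,$$
which is exactly the claimed convergence in probability.

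There is no real obstacle here. The only step needing care is confirming that the variance formula applies to each fixed $r$, which needs only that $A_r$ is deterministic and symmetric. Symmetry matters because a non-symmetric $A$ would first have to be replaced by its symmetric part, which leaves the quadratic form unchanged. If uniform boundedness were weakened, the same argument would still go through under the weaker condition $\Tr(A_r^2)=o(r^2)$ (that is, a Frobenius-norm bound of order $o(r)$), and I would note this in passing.
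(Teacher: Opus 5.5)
Your proposal is correct and follows the paper's proof: both compute the mean $\sigma^2\Tr(A_r)$ and variance $2\sigma^4\Tr(A_r^2)$ of the quadratic form, bound $\Tr(A_r^2)\le r\sup_r\|A_r\|_{op}^2$, and conclude by Chebyshev's inequality. Your diagonalisation check of the moment identities and your remark that $\Tr(A_r^2)=o(r^2)$ suffices are correct additions.
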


\begin{proof}
We have $\expect[e^\top A_r e]=\sigma^2\Tr(A_r)$ and $\Var(e^\top A_r e)=2\sigma^4\Tr(A_r^2)$. Uniform boundedness of $\norm{A_r}_{op}$ implies $\Tr(A_r^2)\leq r\norm{A_r}_{op}^2 \leq Cr$ for some $C>0$. Therefore, $$\Var\Big(\frac{1}{r}\big(e^\top A_r e - \sigma^2 \Tr(A_r)\big)\Big) \leq \frac{2\sigma^4 C}{r}\to 0,$$
and the claim follows by Chebyshev's inequality.
\end{proof}

\begin{thm}\label{thm:curvature-consistency}
Suppose that, as $r \to \infty$, the following hold:
\begin{enumerate}
    \item $\sup_r \norm{W_r}_{op} < \infty$,
    \item $\mathcal I_r(0)/r \to k > 0$.
\end{enumerate}
Then
$$-\frac{\mathcal I_r^A(0)}{\mathcal I_r(0)} \pconv 1.$$
\end{thm}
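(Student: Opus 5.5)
Under the null $\rho=0$ we have $e\sim N(0,\sigma^2 I_r)$ by Lemma \ref{lem:error-dist}; this is the implicit setting of the statement, matching the hypothesis of Lemma \ref{lem:quadratic-lln}. By definition,
$$-\frac{\mathcal I_r^A(0)}{\mathcal I_r(0)}=\frac{r}{e^\top e}\cdot\frac{e^\top B_r e}{\mathcal I_r(0)}
=\frac{\frac{1}{r}e^\top B_r e\,/\,\sigma^2}{\frac{1}{r}e^\top e\,/\,\sigma^2}\cdot\frac{r}{\mathcal I_r(0)}.$$
The plan is to show that the numerator and denominator of the first factor converge in probability to deterministic limits. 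Slutsky's lemma (the continuous mapping theorem for ratios) then gives the result.

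\textbf{Denominator.} Apply Lemma \ref{lem:quadratic-lln} with $A_r=I_r$, whose operator norm is trivially bounded. This gives $\frac1r e^\top e-\sigma^2\pconv 0$, so $\frac{1}{r\sigma^2}e^\top e\pconv 1$.

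\textbf{Numerator.} Apply Lemma \ref{lem:quadratic-lln} with $A_r=B_r=K_r^2+\omega_r I_r$. The step needing care is checking that $\norm{B_r}_{op}$ is uniformly bounded. Since $\norm{K_r}_{op}\le\norm{W_r}_{op}$ by the triangle inequality, we get $\norm{K_r^2}_{op}\le\norm{W_r}_{op}^2$. Also $\omega_r=\Tr(K_r^2)/r\le\norm{K_r}_{op}^2\le\norm{W_r}_{op}^2$. Hence $\norm{B_r}_{op}\le 2\sup_r\norm{W_r}_{op}^2<\infty$ by assumption 1. Lemma \ref{lem:quadratic-lln} then yields
$$\frac1r e^\top B_r e-\frac{\sigma^2}{r}\Tr(B_r)\pconv0.$$
From the proof of Theorem \ref{thm:RESAPLE-def}(2) and Lemma \ref{lem:restricted-info}, $\Tr(B_r)=2\Tr(K_r^2)=\mathcal I_r(0)$. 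Assumption 2 gives $\Tr(B_r)/r\to k$, so $\frac{1}{r\sigma^2}e^\top B_r e\pconv k$.

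\textbf{Combining.} The numerator of the first factor converges in probability to $k$ and the denominator to $1$. The remaining deterministic factor satisfies $r/\mathcal I_r(0)\to 1/k$, which is finite because $k>0$. The product therefore converges in probability to $k\cdot 1\cdot k^{-1}=1$.

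The main obstacle is the operator-norm bound on $B_r$. It requires controlling $\omega_r$ by $\norm{K_r}_{op}^2$ and $K_r$ by $W_r$. Both are elementary but essential for invoking Lemma \ref{lem:quadratic-lln}. Positivity of $k$ is used only to keep the normalisation nondegenerate.
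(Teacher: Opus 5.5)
Your proof is correct and follows the paper's argument essentially step for step. Both proofs apply Lemma \ref{lem:quadratic-lln} with $A_r=I_r$ and with $A_r=B_r$, use $\Tr(B_r)=2\Tr(K_r^2)=\mathcal I_r(0)$, and combine the limits by Slutsky. The only minor difference is that you bound $\omega_r\le\norm{K_r}_{op}^2$ directly, so the uniform bound on $\norm{B_r}_{op}$ needs only assumption 1, whereas the paper uses assumption 2 for that step.
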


\begin{proof}
By Lemma \ref{lem:quadratic-lln} with $A_r=I_r$, $(e^\top e)/(r\sigma^2)\pconv 1$, so $(r/e^\top e)\pconv 1/\sigma^2$.
Let $D(e)=e^\top B_r e$, where $B_r=K_r^2+\omega_r I_r$ and $\omega_r=\Tr(K_r^2)/r$. Since $\norm{K_r}_{op}\leq \norm{W_r}_{op}$, the first assumption implies $\sup_r\norm{K_r}_{op}<\infty$. The second assumption implies $\Tr(K_r^2)/r=\mathcal I_r(0)/(2r)\to k/2$, and hence $\sup_r\norm{B_r}_{op}<\infty$. By Lemma \ref{lem:quadratic-lln} with $A_r=B_r$,
$$\frac{1}{r}\Big(D(e)-\sigma^2\Tr(B_r)\Big)\pconv 0.$$
Since $\Tr(B_r)=\Tr(K_r^2)+r\omega_r=2\Tr(K_r^2)=\mathcal I_r(0)$, we have $D(e)/(\sigma^2\mathcal I_r(0))\pconv 1$, i.e. $D(e)/\mathcal I_r(0)\pconv \sigma^2$.
Therefore,
$$-\frac{\mathcal I_r^A(0)}{\mathcal I_r(0)} =\frac{r}{e^\top e}\cdot \frac{D(e)}{\mathcal I_r(0)}\pconv \frac{1}{\sigma^2}\cdot \sigma^2 = 1. $$
\end{proof}

\section{Additional Discussion of Section \ref{subsec:RESAPLE-testing}}\label{app:RESAPLE-testing}
Here we first review the exact Gaussian null distribution for $\hat\rho_{RESAPLE}$. We then show that the same null distribution holds for a spherically symmetric SEM.

Assume first the Gaussian SEM and $H_0:\rho=0$, so that $e\sim N(0,\sigma^2 I_r)$. Define $$A_r \defeq K_r-\mu_r I_r, \qquad B_r \defeq K_r^2+\omega_r I_r,$$ so that $\hat\rho_{RESAPLE} = (e^\top A_r e)/(e^\top B_r e)$. Fix $t\in\R$. If $K_r\neq0$, then $B_r\succ0$, and the event $\{\hat\rho_{RESAPLE}\geq t\}$ is equivalent to $$e^\top (A_r-tB_r)e \geq 0.$$ Let $D_t\defeq A_r-tB_r$, and write the eigendecomposition $D_t = U_t \Lambda_t U_t^\top$, where $U_t$ is orthogonal and $\Lambda_t=\diag(\lambda_{t,1},\dots,\lambda_{t,r})$. Since $U_t^\top e \eqd e$ under the Gaussian null, $$e^\top D_t e \eqd \sum_{j=1}^r \lambda_{t,j} \sigma^2 \chi_{1,j}^2,$$ where $\chi_{1,1}^2,\dots,\chi_{1,r}^2$ are independent $\chi^2_1$ random variables. Therefore, $$\prob_0\big(\hat\rho_{RESAPLE}\geq t\big)=\prob\Big(\sum_{j=1}^r \lambda_{t,j}\chi_{1,j}^2 \geq 0\Big).$$ The distribution function of such linear combinations of independent chi-square variables can be evaluated by characteristic function inversion, for instance using Imhof's method \citep{imhof1961}. This is exact under the Gaussian null.

We now extend this result to a spherically symmetric SEM. The argument follows the invariant testing logic used by \citet{kingRobustTestsSpherical1980}. It is also aligned with the invariant formulation of residual spatial autocorrelation tests in linear regression discussed by \citet{martellosio2010power}.

\begin{thm}[Exact null distribution under a spherically symmetric SEM]\label{thm:RESAPLE-spherical-null}
Let $X\in\R^{n\times p}$ have full column rank, let $r=n-p$, and let $H\in\R^{n\times r}$ satisfy $H^\top H=I_r$ and $HH^\top=M$, where $M=I_n-X(X^\top X)^{-1}X^\top$. Let $W_r=H^\top W H$, $K_r=(W_r+W_r^\top)/2$, $\mu_r=\Tr(K_r)/r$, $\omega_r=\Tr(K_r^2)/r$, $A_r=K_r-\mu_r I_r$, and $B_r=K_r^2+\omega_r I_r$. Suppose $K_r\neq0$. Consider the spherically symmetric SEM
$$Z=X\beta+U,\qquad U=\rho WU+\sigma\xi,$$
where $\beta\in\R^p$, $\sigma>0$, and $\xi$ is spherically symmetric in $\R^n$. Under $H_0:\rho=0$, assume $\prob(H^\top\xi=0)=0$. Let $G\sim N(0,I_r)$. Then
$$\frac{e^\top A_r e}{e^\top B_r e} \eqd \frac{G^\top A_r G}{G^\top B_r G}, \qquad e=H^\top Z.$$
Consequently, for every $t\in\R$, if $D_t=A_r-tB_r$ has eigenvalues $\lambda_{t,1},\dots,\lambda_{t,r}$, then
$$\prob_0\big(\hat\rho_{RESAPLE}\geq t\big)=\prob\Big(\sum_{j=1}^r \lambda_{t,j}\chi_{1,j}^2 \geq 0\Big),$$
where $\chi_{1,1}^2,\dots,\chi_{1,r}^2$ are independent $\chi^2_1$ random variables.
\end{thm}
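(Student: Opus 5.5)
Under $H_0:\rho=0$ the model collapses to $Z=X\beta+\sigma\xi$. The first point of Lemma \ref{lem:error-dist} uses only $H^\top X=0$, so it still applies and gives $e=H^\top Z=\sigma H^\top\xi$. The argument then has three steps: reduce the statistic to a function of the direction of $e$ alone, show that this direction has the same law under any spherical $\xi$ as under a Gaussian, and finally invoke the Gaussian computation already given above.

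\textbf{Step 1 (scale invariance).} Define $f(v)=v^\top A_r v/v^\top B_r v$ on $\R^r\setminus\{0\}$. Because $K_r\neq0$, point 2 of Theorem \ref{thm:RESAPLE-def} gives $B_r\succ0$, so $f$ is well defined and measurable there. It is homogeneous of degree zero: $f(cv)=f(v)$ for every $c\neq0$. Hence $\hat\rho_{RESAPLE}=f(H^\top\xi)=f(u)$ with $u=H^\top\xi/\norm{H^\top\xi}$. The hypothesis $\prob(H^\top\xi=0)=0$ ensures $u$ is defined almost surely. Likewise $f(G)=f(G/\norm{G})$, since $G\neq0$ almost surely.

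\textbf{Step 2 (uniform direction).} I will show that $u$ is uniform on the unit sphere $S^{r-1}$. Complete $H$ to an orthogonal matrix $[H\;H_\perp]\in\R^{n\times n}$. For any orthogonal $Q\in\R^{r\times r}$, set $O=HQH^\top+H_\perp H_\perp^\top$, which is orthogonal on $\R^n$ and satisfies $H^\top O=QH^\top$. Spherical symmetry means $O\xi\eqd\xi$, so
\[
QH^\top\xi=H^\top O\xi\eqd H^\top\xi .
\]
Thus $H^\top\xi$ is spherically symmetric in $\R^r$. Since it is almost surely nonzero, its normalisation $u$ is a rotation-invariant probability measure on $S^{r-1}$. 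Such a measure is unique, namely the uniform measure. The same reasoning applies to $G/\norm{G}$, because $N(0,I_r)$ is spherical. Therefore $u\eqd G/\norm{G}$, and so $f(u)\eqd f(G)$, which is the claimed distributional identity.

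\textbf{Step 3 (tail formula).} For $G$ Gaussian, the computation at the start of this appendix applies directly, with $\sigma=1$ (the scale cancels anyway since $f$ is degree-zero homogeneous). It yields $\prob(f(G)\geq t)=\prob(G^\top D_tG\geq0)=\prob(\sum_j\lambda_{t,j}\chi^2_{1,j}\geq0)$. Transferring this through the equality in law from Step 2 gives the stated formula for $\prob_0(\hat\rho_{RESAPLE}\geq t)$.

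The main technical point is Step 2, specifically lifting an $r$-dimensional rotation $Q$ to an $n$-dimensional orthogonal map $O$ that intertwines with $H^\top$; the completed basis $[H\;H_\perp]$ supplies this. The only other ingredient is the uniqueness of the rotation-invariant probability measure on the sphere, which is standard (Haar measure).
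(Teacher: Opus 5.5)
Your proposal is correct and follows the paper's proof essentially step for step. The argument reduces to $e=\sigma H^\top\xi$, uses degree-zero homogeneity of the ratio, shows $H^\top\xi$ is spherical by lifting $Q$ to an orthogonal map on $\R^n$, invokes uniqueness of the rotation-invariant measure on the sphere, and finishes with the eigendecomposition of $D_t$. Your lift $O=HQH^\top+H_\perp H_\perp^\top$ is exactly the paper's $\widetilde Q=HQH^\top+P$, because $H_\perp H_\perp^\top=I_n-HH^\top=P$.
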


\begin{proof}
Under $H_0:\rho=0$, the spherically symmetric SEM gives
$$Z=X\beta+\sigma\xi.$$
Since $HH^\top=M$ and $M$ is the orthogonal projector onto $\Im(X)^\perp$, we have $H^\top X=0$. Hence
$$e=H^\top Z=H^\top(X\beta+\sigma\xi)=\sigma H^\top\xi.$$
Write $Y=H^\top\xi$, so that $e=\sigma Y$.

We first show that $Y$ is spherically symmetric in $\R^r$. Let $Q\in\R^{r\times r}$ be orthogonal, and define $$\widetilde Q=H Q H^\top+P,$$ where $P=X(X^\top X)^{-1}X^\top$. Since $PH=0$, $H^\top P=0$, $H^\top H=I_r$, and $HH^\top+P=M+P=I_n$,
$$\widetilde Q^\top\widetilde Q=(H Q^\top H^\top+P)(H Q H^\top+P)=H Q^\top Q H^\top+P=HH^\top+P=I_n.$$
Thus $\widetilde Q$ is orthogonal. Since $\xi$ is spherically symmetric, $\widetilde Q\xi\eqd\xi$. Multiplying by $H^\top$ gives
$$H^\top\widetilde Q\xi \eqd H^\top\xi.$$
Also,
$$H^\top\widetilde Q\xi=H^\top(H Q H^\top+P)\xi=QH^\top\xi.$$
Therefore $QY\eqd Y$ for every orthogonal $Q\in\R^{r\times r}$, so $Y$ is spherically symmetric in $\R^r$.

Next, $B_r\succ0$. Since $K_r$ is symmetric, $K_r^2$ is positive semidefinite. Since $K_r\neq0$, $\Tr(K_r^2)>0$, so $\omega_r>0$. Thus, for any nonzero $v\in\R^r$,
$$v^\top B_rv=v^\top K_r^2v+\omega_r v^\top v=\norm{K_rv}^2+\omega_r\norm{v}^2>0.$$

Define $$T(y)=\frac{y^\top A_r y}{y^\top B_r y}$$ for $y\neq0$. Since $B_r\succ0$, this is well defined. For any scalar $c\neq0$,
$$T(cy)=\frac{c^2 y^\top A_r y}{c^2 y^\top B_r y}=T(y).$$
Hence $T$ is homogeneous of degree zero, and since $e=\sigma Y$ with $\sigma>0$, $T(e)=T(Y)$.

By assumption, $\prob(Y=0)=0$. Since $Y$ is spherically symmetric, $Y/\norm{Y}$ is uniformly distributed on the unit sphere in $\R^r$. This follows from the standard result that the law of $Y/\norm{Y}$ is invariant under all orthogonal transformations, and the unit sphere has a unique probability measure with this invariance property. Similarly, if $G\sim N(0,I_r)$, then $G/\norm{G}$ is uniformly distributed on the same unit sphere. Therefore,
$$\frac{Y}{\norm{Y}} \eqd \frac{G}{\norm{G}}.$$
Using the homogeneity of $T$,
$$T(Y)=T\Big(\frac{Y}{\norm{Y}}\Big) \eqd T\Big(\frac{G}{\norm{G}}\Big)=T(G).$$
Since $T(e)=T(Y)$,
$$\frac{e^\top A_r e}{e^\top B_r e} \eqd \frac{G^\top A_r G}{G^\top B_r G}.$$

It remains to express the Gaussian reference law. Since $B_r\succ0$, for every nonzero $g\in\R^r$,
$$T(g)\geq t \quad \iff \quad g^\top A_r g \geq t g^\top B_r g \quad \iff \quad g^\top(A_r-tB_r)g\geq0.$$
Let $D_t=A_r-tB_r$. Since $D_t$ is symmetric, write $D_t=U_t\Lambda_t U_t^\top$, where $U_t$ is orthogonal and $\Lambda_t=\diag(\lambda_{t,1},\dots,\lambda_{t,r})$. Since $G\sim N(0,I_r)$ and $U_t$ is orthogonal, $U_t^\top G\sim N(0,I_r)$. Writing $U_t^\top G=(G_1,\dots,G_r)^\top$, the variables $G_1,\dots,G_r$ are independent $N(0,1)$ variables. Hence
$$G^\top D_tG=G^\top U_t\Lambda_t U_t^\top G=\sum_{j=1}^r \lambda_{t,j}G_j^2 \eqd \sum_{j=1}^r \lambda_{t,j}\chi_{1,j}^2.$$
Combining this with $T(e)\eqd T(G)$ gives
$$\prob_0\big(\hat\rho_{RESAPLE}\geq t\big)=\prob\Big(\sum_{j=1}^r \lambda_{t,j}\chi_{1,j}^2 \geq 0\Big).$$
This completes the proof.
\end{proof}

\section{Additional Discussion of Section \ref{sec:impact-W}}\label{app:impact-W}

\subsection{Proof of Theorem \ref{thm:W-local-info}}
\begin{proof}
Recall that $K_r=H^\top K H$ with $HH^\top=M$ and $H^\top H=I_r$. Since $K$ is symmetric and $M$ is a symmetric idempotent projector, $$\Tr(K_r^2)=\Tr\big(H^\top K H H^\top K H\big)=\Tr\big(HH^\top K HH^\top K\big)=\Tr(MKMK),$$ which yields \eqref{eq:Ir-basis-free}.

We next prove \eqref{eq:Ir-upper}. Since $M$ is an orthogonal projector, the Frobenius norm is contractive under left and right multiplication by $M$. Hence $$\Tr(K_r^2)=\Tr\big((MKM)^2\big)=\|MKM\|_F^2\leq \|K\|_F^2=\Tr(K^2).$$ Multiplying by $2$ gives $\mathcal I_r(0)\leq 2\Tr(K^2)$. Finally, since $K=(W+W^\top)/2$, $$2\Tr(K^2)=\frac{1}{2}\Tr\big((W+W^\top)^2\big)=\Tr(W^\top W)+\Tr(W^2),$$ which proves \eqref{eq:Ir-upper}.

We now derive \eqref{eq:degree-identities} for $W=D^{-1}G$. Since $w_{ij}=g_{ij}/d_i$ and $g_{ij}^2=g_{ij}$, we have
$$\Tr(W^\top W)=\sum_{i=1}^n\sum_{j=1}^n w_{ij}^2 =\sum_{i=1}^n\sum_{j=1}^n \frac{g_{ij}}{d_i^2} =\sum_{i=1}^n \frac{d_i}{d_i^2} =\sum_{i=1}^n \frac{1}{d_i}.$$ Moreover, $$\Tr(W^2)=\sum_{i=1}^n (W^2)_{ii}=\sum_{i=1}^n\sum_{j=1}^n w_{ij}w_{ji} =\sum_{i=1}^n\sum_{j=1}^n \frac{g_{ij}g_{ji}}{d_i d_j}.$$ If $G=G^\top$, then each undirected edge $\{i,j\}$ contributes twice to the double sum, which yields $\Tr(W^2)=2\sum_{\{i,j\}\in E}(d_i d_j)^{-1}$. If, further, $d_i=d$ for all $i$, then $\sum_i d_i^{-1}=n/d$ and $2\sum_{\{i,j\}\in E}(d_i d_j)^{-1}=2|E|/d^2=(nd)/d^2=n/d$, since $|E|=nd/2$. This gives $\Tr(W^\top W)+\Tr(W^2)=2n/d$.
\end{proof}

\subsection{A Spectral Heuristic for the RESAPLE Sampling Distribution}
To discuss heuristically how spectral structure can affect the finite-sample distribution of RESAPLE, it is useful to recall the corresponding fact for an ordinary Rayleigh quotient involving a symmetric matrix. Although the null law of RESAPLE is, in general, a \emph{generalised} Rayleigh quotient rather than the ordinary quotient treated below, the result gives useful intuition for why eigenvalue dispersion and multiplicity can influence finite-sample behaviour. We therefore describe how the eigenvalues of a symmetric matrix control the sampling law in this simpler benchmark case.

\begin{thm}\label{thm:moran-spectrum-law}
Let $g\sim N(0,I_r)$ with $r\ge 3$, and let $S\in\mathbb R^{r\times r}$ be symmetric with eigenvalues $\lambda_1,\dots,\lambda_r$. Define the Rayleigh quotient $T(g)\defeq \frac{g^\top S g}{g^\top g}$. Then the distribution of $T(g)$ depends on $S$ only through $(\lambda_1,\dots,\lambda_r)$. Moreover, $$\mathbb E[T(g)] = \frac{\Tr(S)}{r},\qquad \Var\big(T(g)\big) = \frac{2}{r(r+2)}\sum_{j=1}^r(\lambda_j-\bar\lambda)^2, \quad \bar\lambda=\frac{1}{r}\sum_{j=1}^r\lambda_j.$$ In particular, $\Var(T(g))=0$ if and only if $S$ is a scalar multiple of the identity. 
\end{thm}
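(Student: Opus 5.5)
Diagonalise $S=U\Lambda U^\top$ with $U$ orthogonal and $\Lambda=\diag(\lambda_1,\dots,\lambda_r)$. Since $h\defeq U^\top g\sim N(0,I_r)$ and $g^\top g=h^\top h$,
$$T(g)=\frac{\sum_j\lambda_j h_j^2}{\sum_j h_j^2}.$$
The right-hand side is a fixed function of $(\lambda_1,\dots,\lambda_r)$ and a standard Gaussian vector. Hence the law of $T(g)$ depends on $S$ only through its spectrum, which proves the first claim.

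For the moments I will write $u_j\defeq h_j^2/\norm{h}^2$. It is standard that $u=(u_1,\dots,u_r)\sim\mathrm{Dirichlet}(1/2,\dots,1/2)$, and that $u$ is independent of $\norm{h}^2$. Then $T=\sum_j\lambda_j u_j$ is a linear functional of a Dirichlet vector. With total concentration $\alpha_0=r/2$ and each $\alpha_j=1/2$, the Dirichlet moment formulas give:
\begin{enumerate}
\item $\expect[u_j]=1/r$, which yields $\expect[T]=\Tr(S)/r$;
\item $\Var(u_j)=\frac{(1/2)(\alpha_0-1/2)}{\alpha_0^2(\alpha_0+1)}=\frac{2(r-1)}{r^2(r+2)}$;
\item $\Cov(u_j,u_k)=-\frac{1/4}{\alpha_0^2(\alpha_0+1)}=-\frac{2}{r^2(r+2)}$ for $j\neq k$.
\end{enumerate}
Both entries of the covariance matrix share the factor $\frac{2}{r(r+2)}$, so
$$\Cov(u)=\frac{2}{r(r+2)}\Big(I_r-\tfrac1r\mathbf 1\mathbf 1^\top\Big).$$
It follows that
$$\Var(T)=\lambda^\top\Cov(u)\lambda=\frac{2}{r(r+2)}\sum_j(\lambda_j-\bar\lambda)^2,$$
which is the stated formula.

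As an alternative to the Dirichlet route, independence of $T$ and $\norm{h}^2$ gives $\expect[h^\top\Lambda h]=\expect[T]\,\expect[\norm{h}^2]$ and $\expect[(h^\top\Lambda h)^2]=\expect[T^2]\,\expect[\norm{h}^4]$. Using $\expect[\norm{h}^4]=r(r+2)$ and $\expect[(h^\top\Lambda h)^2]=(\Tr\Lambda)^2+2\Tr(\Lambda^2)$, I can solve for $\expect[T^2]$ and subtract $\expect[T]^2$. Either route works; the one step needing justification is the independence of direction and radius for a Gaussian, which is standard by rotational invariance. The hypothesis $r\ge 3$ is not needed for the formulas, which hold for $r\ge1$. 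I will only note this and not rely on it.

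Finally, the variance vanishes iff $\lambda_j=\bar\lambda$ for all $j$. Since $S=U\Lambda U^\top$, that holds iff $S=\bar\lambda I_r$. Conversely, if $S=cI_r$ then $T\equiv c$ and the variance is zero. I expect no real obstacle here; the only care needed is getting the Dirichlet (co)variance constants right.
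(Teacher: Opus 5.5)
Your proposal is correct and follows essentially the same route as the paper: diagonalise $S$, write $T=\sum_j\lambda_j\pi_j$ with $(\pi_1,\dots,\pi_r)\sim\mathrm{Dirichlet}(\tfrac12,\dots,\tfrac12)$, and compute the first two moments. The only difference is cosmetic. You package the second moments as the covariance matrix $\frac{2}{r(r+2)}\big(I_r-\frac1r\mathbf 1\mathbf 1^\top\big)$, which gives the centred-eigenvalue form directly, whereas the paper uses the raw moments $\expect[\pi_i^2]=3/(r(r+2))$ and $\expect[\pi_i\pi_j]=1/(r(r+2))$ and then subtracts $\expect[T]^2$; your constants agree with these, and your remark that $r\ge 3$ is not needed is also right.
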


\begin{proof}
Write $S=Q\Lambda Q^\top$ with $\Lambda=\diag(\lambda_1,\dots,\lambda_r)$ and $Q$ orthogonal. Set $h\defeq Q^\top g$. Since $g\sim N(0,I_r)$ and $Q$ is orthogonal, we have $h\stackrel{d}{=}g$; hence $$T(g)=\frac{g^\top Q\Lambda Q^\top g}{g^\top g}=\frac{h^\top \Lambda h}{h^\top h}=\sum_{j=1}^r \lambda_j\frac{h_j^2}{\sum_{\ell=1}^r h_\ell^2}.$$ Thus the law of $T(g)$ depends on $S$ only through its eigenvalues.

Now let $$\pi_j \defeq \frac{h_j^2}{\sum_{\ell=1}^r h_\ell^2},\qquad j=1,\dots,r.$$ Because $h_1,\dots,h_r$ are iid standard Normal, the vector $(\pi_1,\dots,\pi_r)$ has a Dirichlet$\big(\frac{1}{2},\dots,\frac{1}{2}\big)$ distribution, which yields $$\mathbb E[\pi_i]=\frac1r,\qquad \mathbb E[\pi_i^2]=\frac{3}{r(r+2)},\qquad \mathbb E[\pi_i\pi_j]=\frac{1}{r(r+2)}\ (i\neq j).$$ Since $T(g)=\sum_{i=1}^r \lambda_i \pi_i$, we obtain $$\mathbb E[T(g)] = \sum_{i=1}^r \lambda_i\mathbb E[\pi_i]= \frac{1}{r}\sum_{i=1}^r\lambda_i= \frac{\Tr(S)}{r},$$ and 

\begin{align*}
    \mathbb E[T(g)^2] &= \sum_{i=1}^r \lambda_i^2\mathbb E[\pi_i^2] + \sum_{i\neq j}\lambda_i\lambda_j\mathbb E[\pi_i\pi_j] \\
    &= \frac{3}{r(r+2)}\sum_{i=1}^r \lambda_i^2 + \frac{1}{r(r+2)}\sum_{i\neq j}\lambda_i\lambda_j=\frac{1}{r(r+2)}\Big(2\sum_{i=1}^r\lambda_i^2 + \big(\sum_{i=1}^r\lambda_i\big)^2\Big) \\
    &=\frac{1}{r(r+2)}\Big(2\Tr(S^2)+\Tr(S)^2\Big).
\end{align*}
Subtracting $\mathbb E[T(g)]^2=\Tr(S)^2/r^2$ gives $$\Var(T(g))=\frac{2}{r(r+2)}\Big(\Tr(S^2)-\frac{\Tr(S)^2}{r}\Big),$$ and the equivalent eigenvalue form follows by expanding $\sum_j(\lambda_j-\bar\lambda)^2$. Finally, $\Var(T(g))=0$ holds if and only if $\lambda_1=\cdots=\lambda_r$, that is, $S$ is a scalar multiple of $I_r$.
\end{proof}

The theorem does not directly describe the null law of RESAPLE, but it does provide a useful spectral heuristic. In particular, when the relevant residualised operators have highly dispersed eigenvalues or large multiplicities, one should expect stronger finite-sample departures from a Gaussian approximation. This intuition is consistent with the behaviour we observe for small $r$ and irregular neighbourhood structures, and it helps motivate the use of exact or permutation-based calibration in such settings.

\section{Additional Simulation Details and Supplementary Results}\label{app:simulations}

\subsection{Complete Simulation Design}\label{app:sim-design}

This appendix gives a fuller count of the simulation used in Section~\ref{sec:simulations}. The response is generated from the spatial error model
$$Z=X\beta+U,\qquad U=(I_n-\rho W)^{-1}\epsilon.$$
The matrix $W$ is row-standardised in every experiment. The dependence parameter is varied over
$$\rho\in\{0,0.05,\ldots,0.95\}.$$
The noise scale is fixed at $\sigma=1$. Unless stated otherwise, each design point uses $1000$ Monte Carlo replicates.

The design matrix is fixed within each design point. It always includes an intercept. If $p\geq 2$, the second column is the first spatial coordinate plus $0.10$ times an independent standard normal perturbation, then standardised to mean zero and unit variance. If $p\geq 3$, the third column is constructed in the same way from the second spatial coordinate. If $p\geq 4$, the remaining columns are independent standard normal variables, each standardised column-wise. The design is retained only when $p<n$ and $X$ has full column rank. The coefficient vector is
$$\beta_1=1,\qquad \beta_j=\frac{0.6}{\sqrt{j-1}},\quad j\geq 2.$$

Three graph classes are used. The first is a regular $m\times m$ lattice, with $n=m^2$ and $n\in\{25,100,400\}$. The lattice candidate weights are
$$\mathcal W_{\mathrm{lat}}=\{\mathrm{rook},\mathrm{queen},\mathrm{knn4},\mathrm{knn6},\mathrm{knn8}\}.$$
Rook contiguity connects horizontal and vertical neighbours. Queen contiguity also includes diagonal neighbours. The $k$ nearest-neighbour graphs use Euclidean distance on the lattice coordinates, with $k\in\{4,6,8\}$. They are first formed as directed nearest-neighbour graphs, then symmetrised by taking the union of directed neighbour relations. The resulting adjacency matrix is row-standardised.

The second graph is B07. It is a fixed eight-node binary adjacency graph. The matrix is row-standardised before use. The display coordinates and the coordinate-based covariates are constructed from the eight equally spaced points on the unit circle. The binary adjacency rows are
\begin{table}[pos=!htpb]
\centering
\footnotesize
\renewcommand{\arraystretch}{1.08}
\setlength{\tabcolsep}{5pt}
\begin{tabular}{c l}
\toprule
Node & Adjacency row \\
\midrule
1 & 0 1 1 1 1 1 0 1 \\
2 & 1 0 1 1 1 1 0 0 \\
3 & 1 1 0 0 0 1 1 1 \\
4 & 1 1 0 0 1 0 0 0 \\
5 & 1 1 0 1 0 1 0 0 \\
6 & 1 1 1 0 1 0 1 1 \\
7 & 0 0 1 0 0 1 0 1 \\
8 & 1 0 1 0 0 1 1 0 \\
\bottomrule
\end{tabular}
\caption{Binary adjacency rows for the B07 graph.}
\label{tab:app-b07-adjacency}
\end{table}

The third graph is a fixed large irregular planar graph with $n=128$. It is generated once using a fixed seed. The initial graph is a triangle with coordinates $(0.05,0.05)$, $(0.95,0.05)$, and $(0.50,0.95)$. At each step, one triangular face is selected with probability proportional to the square root of its area. A new point is placed inside that face using three independent Gamma$(2.5,1)$ variables normalised to sum to one. The new vertex is connected to the three vertices of the selected face. The selected face is then replaced by the three induced subfaces. This procedure continues until $128$ vertices have been generated. The resulting adjacency matrix is row-standardised.

The simulation factors are summarised in Table~\ref{tab:app-simulation-design}. The lattice experiments use the full candidate set when selecting $W$ by restricted information. The selected lattice weight in the main text is rook contiguity. The appendix retains the remaining lattice weights where needed. B07 and the large irregular graph use their fixed adjacency matrices throughout.

\begin{table*}[pos=!htpb]
\centering
\footnotesize
\renewcommand{\arraystretch}{1.15}
\setlength{\tabcolsep}{5pt}
\begin{tabular*}{\textwidth}{@{\extracolsep{\fill}} lllll @{}}
\toprule
Graph class & $n$ & $p$ & Candidate or fixed $W$ & Main use \\
\midrule
Lattice & $\{25,100,400\}$ & $\{1,5,20\}$ & rook, queen, knn4, knn6, knn8 & selection, estimation, testing, robustness \\
B07 & $8$ & $\{1,3,5\}$ & adjacency & small irregular stress test \\
Large irregular & $128$ & $\{1,3,5\}$ & adjacency & larger irregular benchmark \\
\bottomrule
\end{tabular*}
\caption{Simulation design factors. The lattice candidate weights are compared using $\mathcal I_r(0)$. B07 and the large irregular graph use fixed adjacency matrices.}
\label{tab:app-simulation-design}
\end{table*}

The Gaussian simulations use $\epsilon\sim N(0,I_n)$. The robustness simulation also considers two non-Gaussian error laws. The first is a spherical multivariate $t$ error with $\nu=5$ degrees of freedom,
$$\epsilon=\sqrt{\frac{\nu-2}{Q}}Z,\qquad Z\sim N(0,I_n),\qquad Q\sim\chi^2_\nu.$$
The same scalar $Q$ is shared across all coordinates, so the error vector is spherical and has marginal variance one. The second is an independent skewed error,
$$\epsilon_i=\frac{Y_i-3}{\sqrt{6}},\qquad Y_i\sim\chi^2_3.$$
This error has mean zero and variance one, but is not spherical.

For testing, the null hypothesis is $H_0:\rho=0$, and the alternative is $H_1:\rho>0$. The nominal level is $\alpha=0.05$. The permutation tests use the Freedman-Lane residual procedure with $B=199$ random permutations per replicate. The permutation p-value is computed with the standard add-one correction. The RESAPLE $z$ procedure uses the statistic $\sqrt{\mathcal I_r(0)}\hat\rho_{\mathrm{RESAPLE}}$ and the upper standard normal tail. The exact RESAPLE procedure uses the Gaussian residual quadratic-form reference distribution described in Section~\ref{subsec:RESAPLE-testing}.

For estimation, we compare RESAPLE with residual Moran's $\mathcal I_M$, residual APLE, MAPLE, and REML. REML is included as a likelihood benchmark. It is computed by profiling the restricted Gaussian likelihood over $\rho\in[-0.95,0.95]$. For each estimator $\hat\rho$, the reported summaries are
$$\operatorname{Bias}(\hat\rho)=\mathbb E(\hat\rho-\rho),\qquad \operatorname{SD}(\hat\rho)=\{\operatorname{Var}(\hat\rho)\}^{1/2},\qquad \operatorname{RMSE}(\hat\rho)=\{\mathbb E(\hat\rho-\rho)^2\}^{1/2}.$$

\subsection{Full Weight-Selection Results}\label{app:W-selection}

This subsection gives the full lattice weight-selection results used in Simulation~\ref{subsec:sim-W-selection}. For each lattice design, we compute $\mathcal I_r(0)$ for $W\in\{\mathrm{rook},\mathrm{queen},\mathrm{knn4},\mathrm{knn6},\mathrm{knn8}\}$. The selected weight maximises $\mathcal I_r(0)$ within the design. Empirical power at $\rho=0.3$ is then used only to assess whether this ordering corresponds to detectability.

Table~\ref{tab:app-W-selection-selected} reports the selected weight for each $(n,p)$ design. Rook contiguity is selected in every case. This is because it has the largest restricted information for every design, despite having the lowest average degree among the lattice candidates. The result reflects the fact that denser weights do not necessarily give more restricted information after residualisation.

\begin{table}[pos=!htpb]
\centering
\footnotesize
\renewcommand{\arraystretch}{1.12}
\setlength{\tabcolsep}{5pt}
\begin{tabular*}{\columnwidth}{@{\extracolsep{\fill}} rrrr @{}}
\toprule
$n$ & $p=1$ & $p=5$ & $p=20$ \\
\midrule
25  & rook, $14.17$  & rook, $9.19$   & rook, $0.74$ \\
100 & rook, $54.88$  & rook, $49.03$  & rook, $35.05$ \\
400 & rook, $211.27$ & rook, $205.34$ & rook, $190.23$ \\
\bottomrule
\end{tabular*}
\caption{Selected lattice weights by restricted information. Each entry gives the selected $W$ and the corresponding $\mathcal I_r(0)$.}
\label{tab:app-W-selection-selected}
\end{table}

Table~\ref{tab:app-W-selection-ranking} reports the complete ranking of the candidate weights. The ordering by $\mathcal I_r(0)$ is stable across designs. Rook is always first, followed by knn4. The empirical power ranking at $\rho=0.3$ is generally consistent with the information ranking. The small exceptions occur when power is either very low, as for $n=25,p=20$, or close to one, as for $n=400$. In both cases, the empirical differences are small.

\begin{table*}[pos=!htpb]
\centering
\scriptsize
\renewcommand{\arraystretch}{1.12}
\setlength{\tabcolsep}{4pt}
\begin{tabular*}{\textwidth}{@{\extracolsep{\fill}} rrlllc @{}}
\toprule
$n$ & $p$ & Ranking by $\mathcal I_r(0)$ & Ranking by power at $\rho=0.3$ & Selected $W$ & Selected power \\
\midrule
25  & 1  & rook $>$ knn4 $>$ queen $>$ knn6 $>$ knn8 & rook $>$ queen $>$ knn4 $>$ knn6 $>$ knn8 & rook & 0.285 \\
25  & 5  & rook $>$ knn4 $>$ queen $>$ knn6 $>$ knn8 & rook $>$ queen $>$ knn4 $>$ knn6 $>$ knn8 & rook & 0.212 \\
25  & 20 & rook $>$ knn4 $>$ queen $>$ knn6 $>$ knn8 & knn4 $>$ rook $>$ knn8 $>$ queen $>$ knn6 & rook & 0.084 \\
100 & 1  & rook $>$ knn4 $>$ queen $>$ knn6 $>$ knn8 & rook $>$ knn4 $>$ queen $>$ knn6 $>$ knn8 & rook & 0.724 \\
100 & 5  & rook $>$ knn4 $>$ queen $>$ knn6 $>$ knn8 & rook $>$ knn4 $>$ queen $>$ knn6 $>$ knn8 & rook & 0.665 \\
100 & 20 & rook $>$ knn4 $>$ queen $>$ knn6 $>$ knn8 & rook $>$ knn4 $>$ knn6 $>$ queen $>$ knn8 & rook & 0.529 \\
400 & 1  & rook $>$ knn4 $>$ queen $>$ knn6 $>$ knn8 & rook $>$ knn4 $>$ queen $=$ knn6 $>$ knn8 & rook & 0.998 \\
400 & 5  & rook $>$ knn4 $>$ queen $>$ knn6 $>$ knn8 & rook $>$ knn4 $>$ queen $>$ knn6 $>$ knn8 & rook & 0.995 \\
400 & 20 & rook $>$ knn4 $>$ queen $>$ knn6 $>$ knn8 & knn4 $>$ rook $>$ queen $>$ knn6 $>$ knn8 & rook & 0.992 \\
\bottomrule
\end{tabular*}
\caption{Complete lattice weight rankings. Power is the empirical rejection rate of the RESAPLE exact test at $\rho=0.3$. The selected weight is chosen by $\mathcal I_r(0)$, not by empirical power.}
\label{tab:app-W-selection-ranking}
\end{table*}

Table~\ref{tab:app-W-selection-size} reports the empirical size of the selected exact tests. The nominal level is $\alpha=0.05$. The selected designs remain close to nominal size. The largest absolute departure is $0.017$, occurring at $n=25,p=5$. This is within the range expected for a small residual space with $r=20$ and $1000$ Monte Carlo replicates.

\begin{table}[pos=!htpb]
\centering
\footnotesize
\renewcommand{\arraystretch}{1.12}
\setlength{\tabcolsep}{5pt}
\begin{tabular*}{\columnwidth}{@{\extracolsep{\fill}} rrrrrr @{}}
\toprule
$n$ & $p$ & $r$ & $W$ & Size & $|\widehat{\alpha}-0.05|$ \\
\midrule
25  & 1  & 24  & rook & 0.051 & 0.001 \\
25  & 5  & 20  & rook & 0.067 & 0.017 \\
25  & 20 & 5   & rook & 0.058 & 0.008 \\
100 & 1  & 99  & rook & 0.047 & 0.003 \\
100 & 5  & 95  & rook & 0.046 & 0.004 \\
100 & 20 & 80  & rook & 0.043 & 0.007 \\
400 & 1  & 399 & rook & 0.063 & 0.013 \\
400 & 5  & 395 & rook & 0.055 & 0.005 \\
400 & 20 & 380 & rook & 0.058 & 0.008 \\
\bottomrule
\end{tabular*}
\caption{Empirical size of the RESAPLE exact test using the selected lattice weights.}
\label{tab:app-W-selection-size}
\end{table}

Figures~\ref{subfig:sim-W-selection-n25}, \ref{subfig:sim-W-selection-n100}, and \ref{subfig:sim-W-selection-n400} show the corresponding information-power plots. These figures show the same pattern as the tables. Larger $\mathcal I_r(0)$ generally corresponds to higher power at $\rho=0.3$, while the null rejection rates remain close to $\alpha=0.05$. The simulations therefore use rook contiguity for the main lattice results.

\subsection{Gaussian Estimation Supplement}\label{app:gaussian-estimation-supplement}

This subsection gives the full Gaussian estimation supplement for the selected lattice weight. The main text reports the central RMSE result for the $n=100$ lattice. We therefore report the remaining RMSE plots here, together with the full bias and standard deviation plots for $n\in\{25,100,400\}$ under rook contiguity.

The purpose of this supplement is to separate the two components of RMSE. RESAPLE has low RMSE on the lattice because it has low bias across most of the parameter range. It does not uniformly minimise sampling standard deviation. This is expected because Moran-type estimators can have low variance while remaining strongly biased for $\rho$. The relevant comparison is therefore RMSE, not variance alone.

\begin{figure*}[pos=!htpb]
\centering

\begin{subfigure}[t]{0.7\textwidth}
\centering
\includegraphics[width=0.86\textwidth]{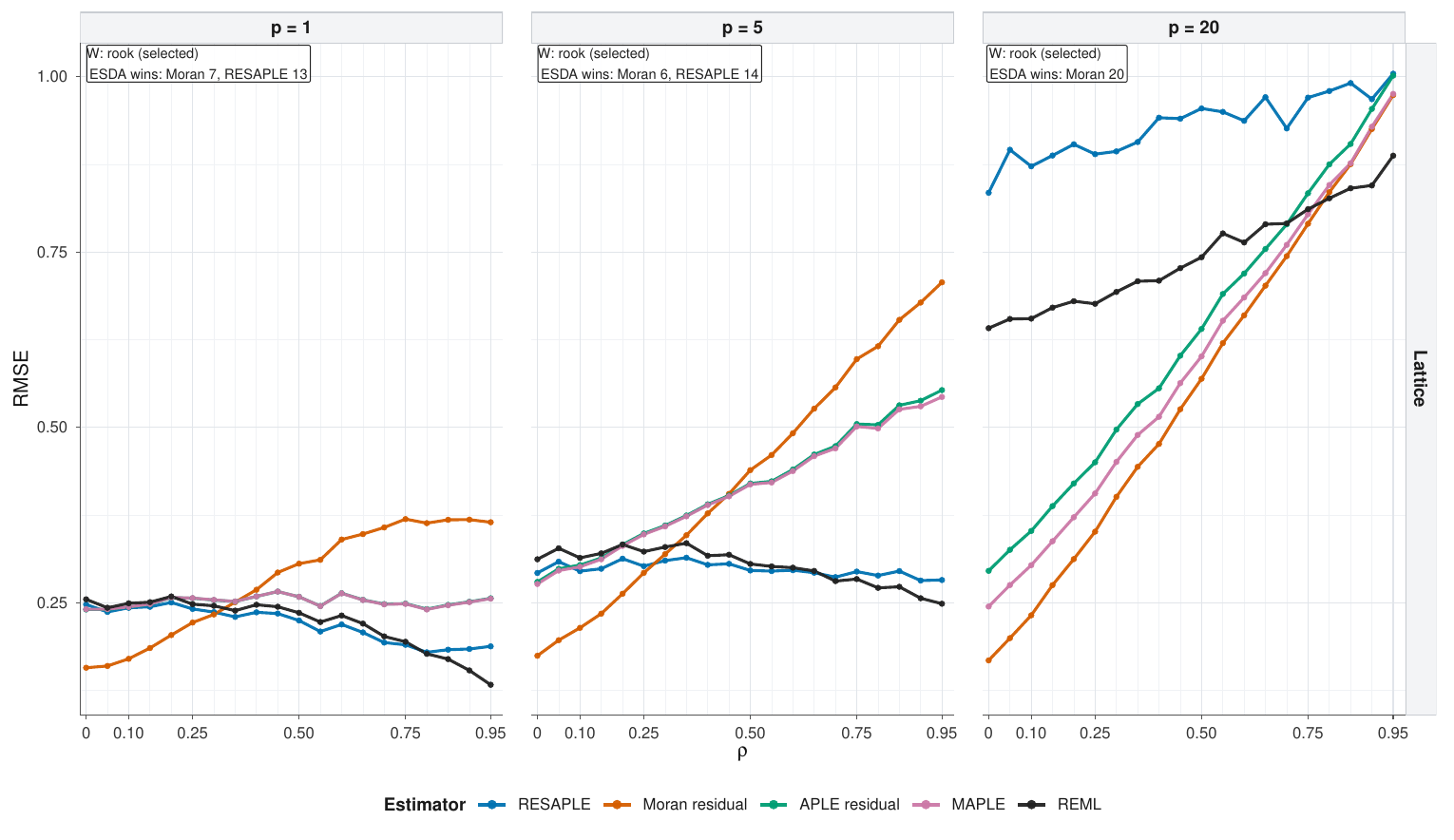}
\caption{$n=25$.}
\label{subfig:app-sim1-rmse-rook-n25}
\end{subfigure}

\vspace{0.8em}

\begin{subfigure}[t]{0.7\textwidth}
\centering
\includegraphics[width=0.86\textwidth]{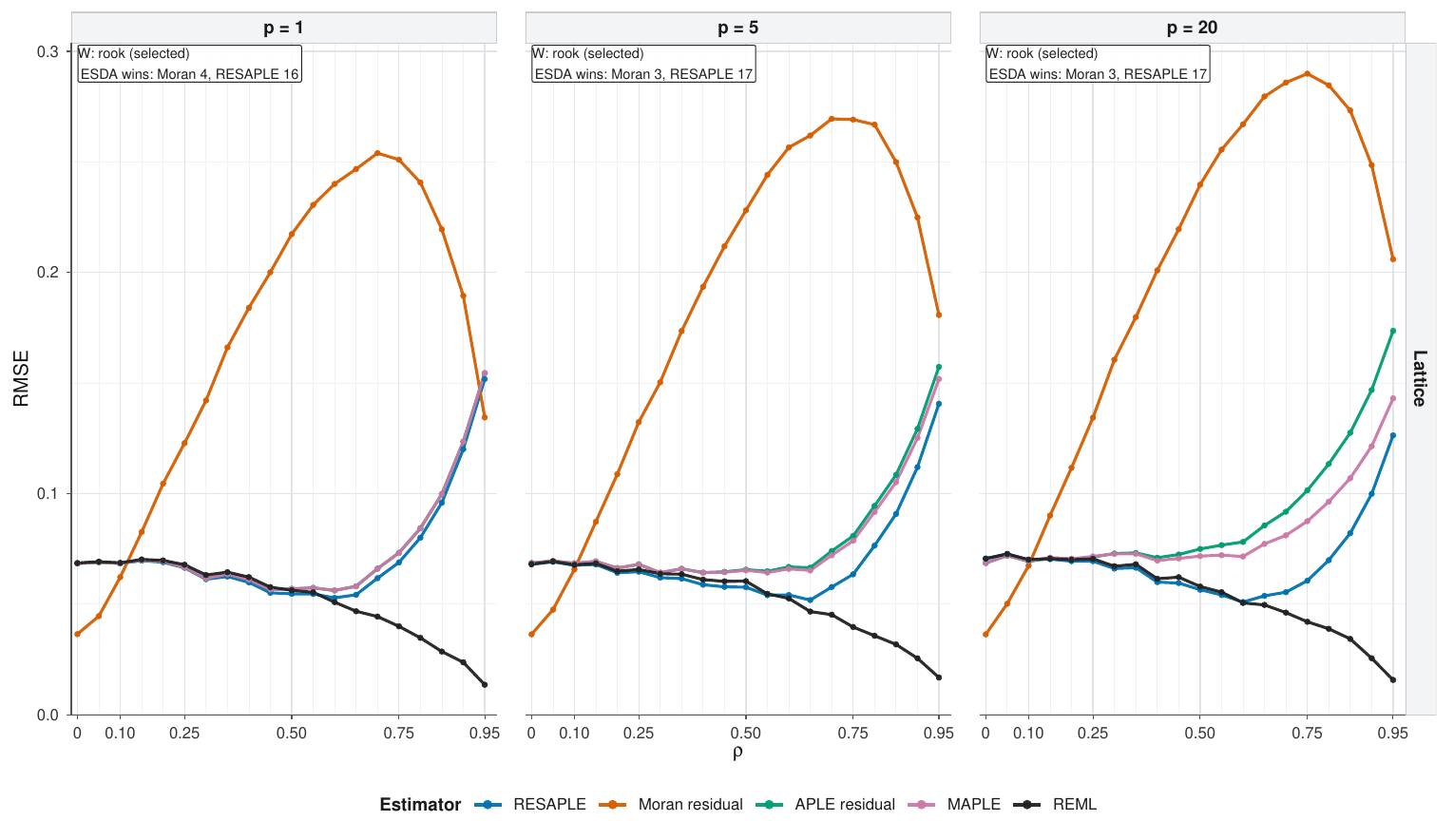}
\caption{$n=400$.}
\label{subfig:app-sim1-rmse-rook-n400}
\end{subfigure}

\caption{Gaussian estimation RMSE on the lattice using the selected weight $W=\mathrm{rook}$. The $n=100$ panel is shown in the main text and is not repeated here.}
\label{fig:app-sim1-rmse-rook}
\end{figure*}

Figure~\ref{fig:app-sim1-rmse-rook} completes the lattice RMSE results for the selected weight. For $n=25$, RESAPLE has the lowest RMSE across most values of $\rho$. Its advantage is clearest when $p=5$ or $p=20$, where residualisation makes the bias of Moran-type estimators more visible. For $n=400$, RESAPLE remains competitive across the full grid, and is strongest when $p$ is moderate or large. The remaining differences are small because the estimators become closer as the residual dimension increases.

\begin{figure*}[pos=!htpb]
\centering

\begin{subfigure}[t]{0.7\textwidth}
\centering
\includegraphics[width=0.86\textwidth]{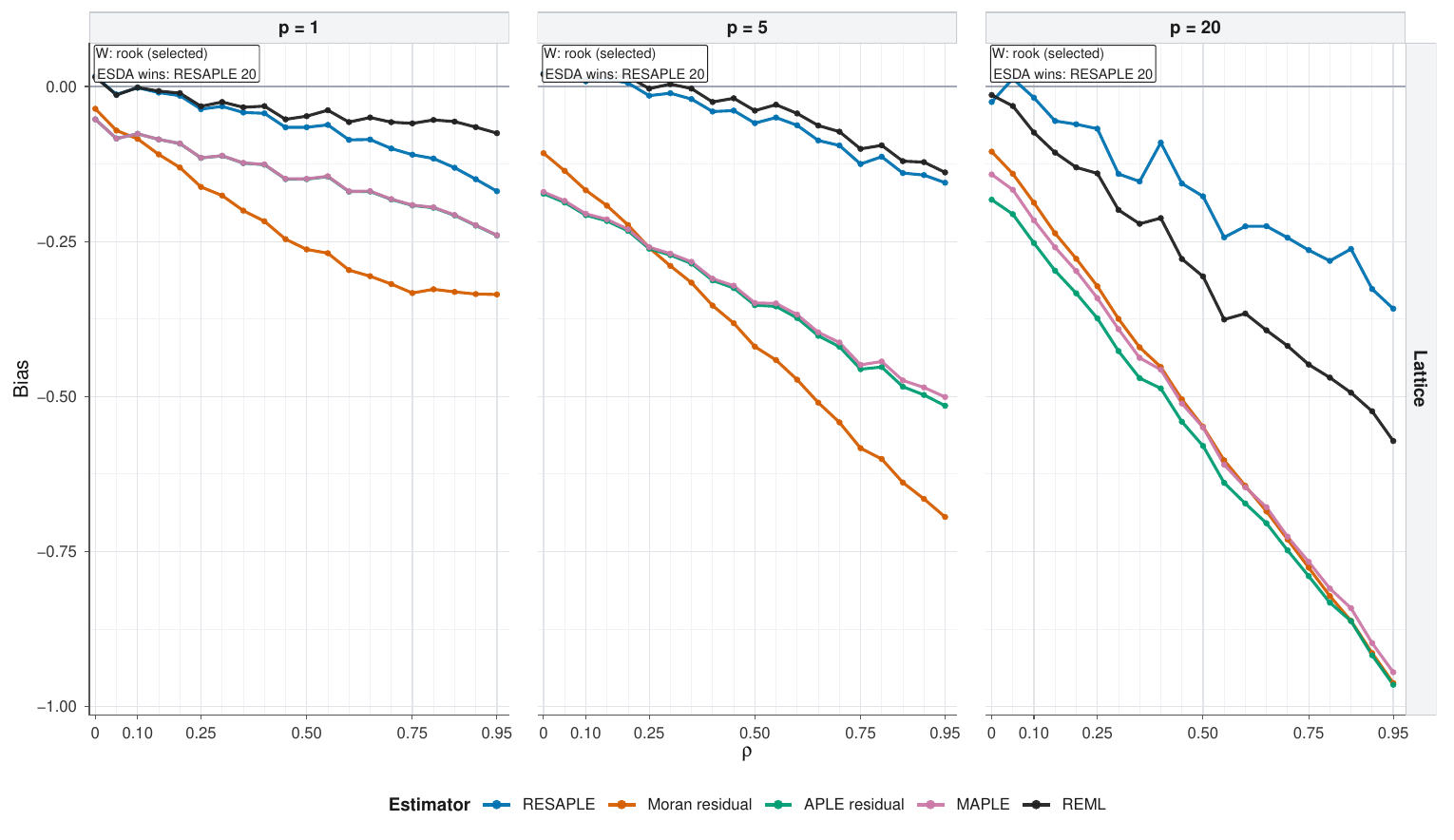}
\caption{$n=25$.}
\label{subfig:app-sim1-bias-rook-n25}
\end{subfigure}

\vspace{0.8em}

\begin{subfigure}[t]{0.7\textwidth}
\centering
\includegraphics[width=0.86\textwidth]{figs/fig_resaple_sim1_bias_gaussian_n100_Wrook.pdf}
\caption{$n=100$.}
\label{subfig:app-sim1-bias-rook-n100}
\end{subfigure}

\vspace{0.8em}

\begin{subfigure}[t]{0.7\textwidth}
\centering
\includegraphics[width=0.86\textwidth]{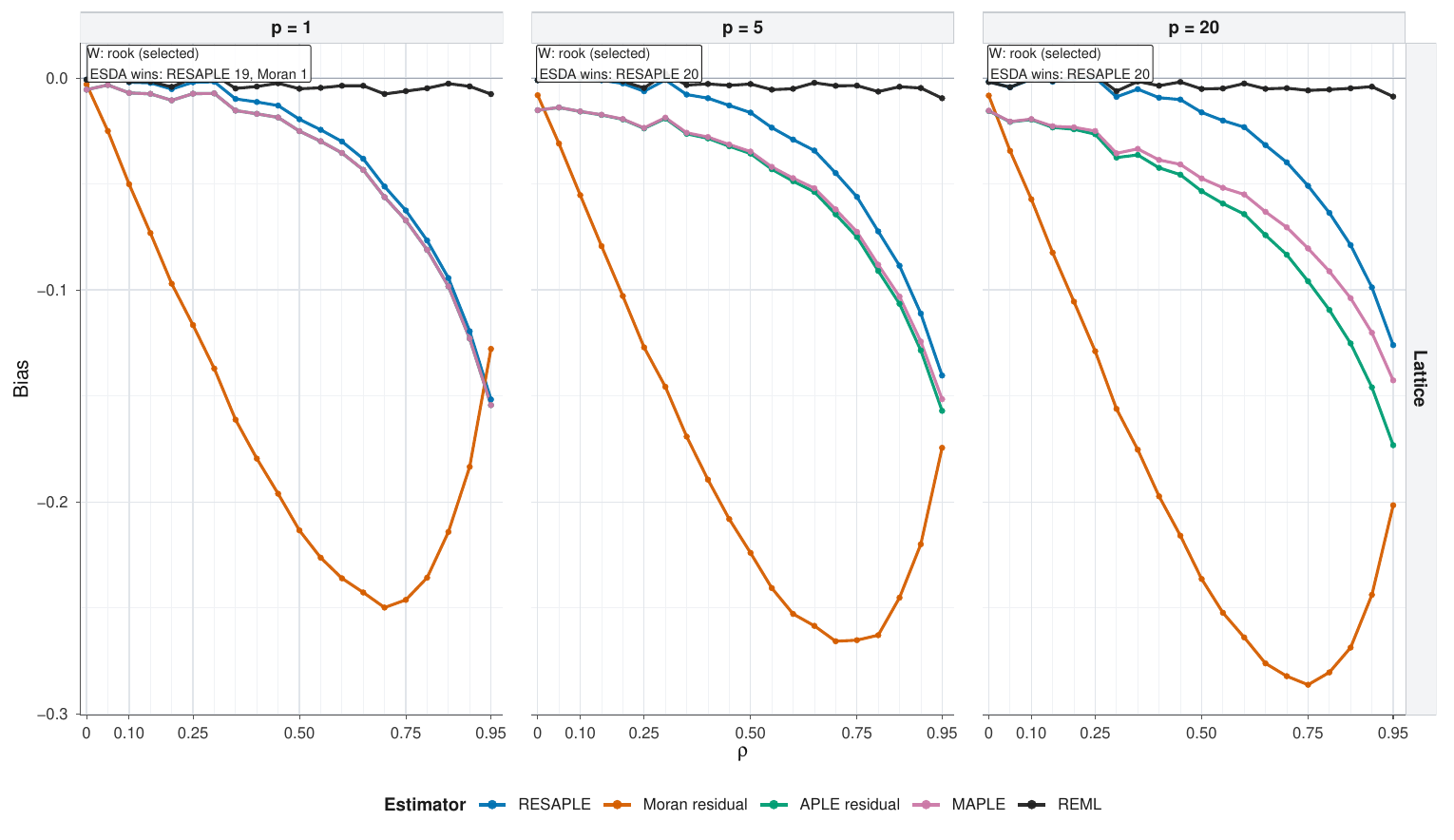}
\caption{$n=400$.}
\label{subfig:app-sim1-bias-rook-n400}
\end{subfigure}

\caption{Gaussian estimation bias on the lattice using the selected weight $W=\mathrm{rook}$.}
\label{fig:app-sim1-bias-rook}
\end{figure*}

Figure~\ref{fig:app-sim1-bias-rook} shows that RESAPLE is consistently closer to zero bias than the residual Moran estimator, APLE residual estimator, and MAPLE across the lattice designs. The difference is largest for moderate and large $\rho$. Moran residual has particularly strong negative bias as $\rho$ increases. APLE residual and MAPLE reduce this effect in some settings, but they still inherit substantial downward bias relative to RESAPLE. REML is also close to zero in some panels, as expected from a likelihood-based estimator, but it is included here as a benchmark rather than as an ESDA statistic.

\begin{figure*}[pos=!htpb]
\centering

\begin{subfigure}[t]{0.7\textwidth}
\centering
\includegraphics[width=0.86\textwidth]{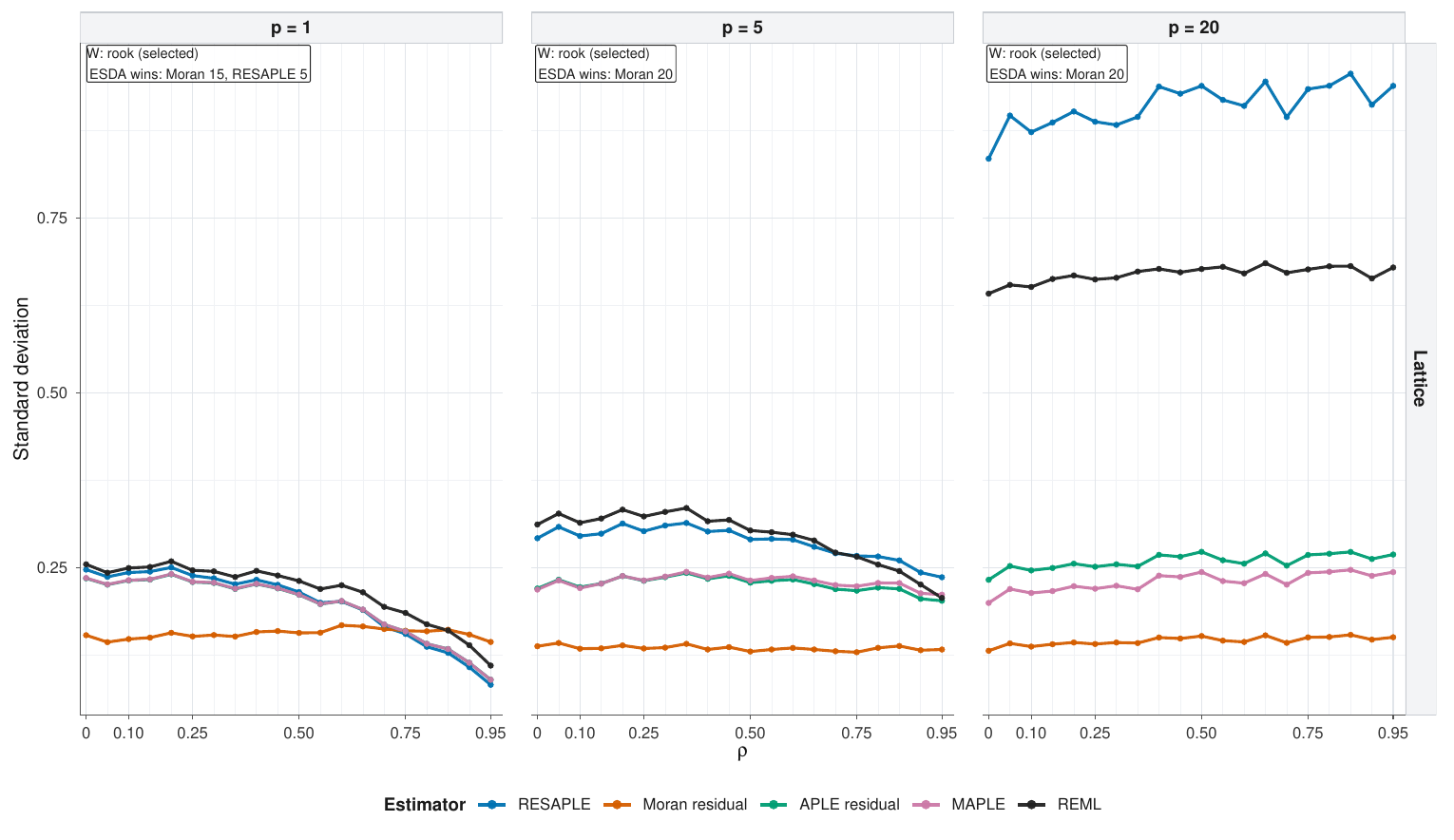}
\caption{$n=25$.}
\label{subfig:app-sim1-sd-rook-n25}
\end{subfigure}

\vspace{0.8em}

\begin{subfigure}[t]{0.7\textwidth}
\centering
\includegraphics[width=0.86\textwidth]{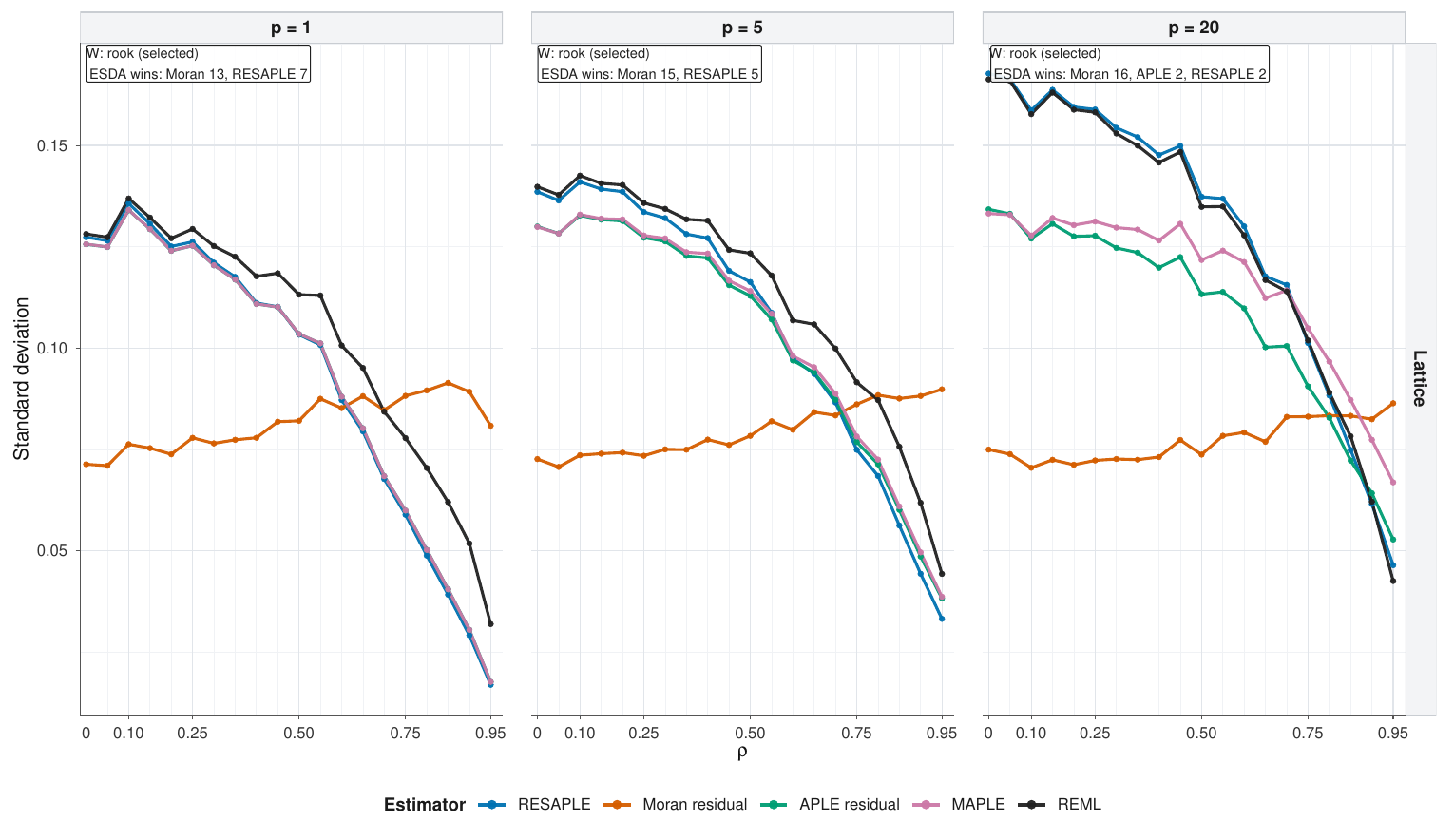}
\caption{$n=100$.}
\label{subfig:app-sim1-sd-rook-n100}
\end{subfigure}

\vspace{0.8em}

\begin{subfigure}[t]{0.7\textwidth}
\centering
\includegraphics[width=0.86\textwidth]{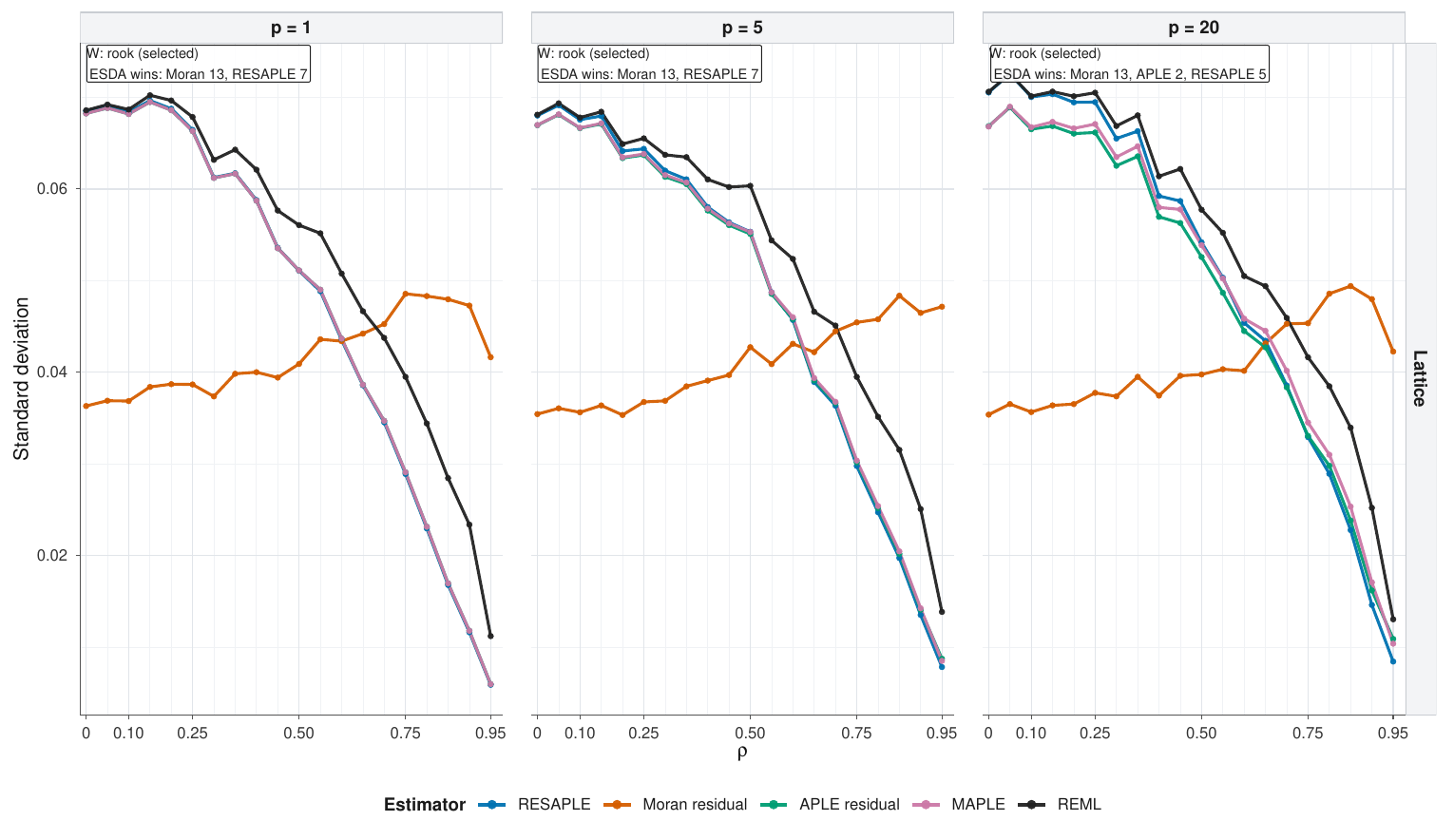}
\caption{$n=400$.}
\label{subfig:app-sim1-sd-rook-n400}
\end{subfigure}

\caption{Gaussian estimation standard deviation on the lattice using the selected weight $W=\mathrm{rook}$.}
\label{fig:app-sim1-sd-rook}
\end{figure*}

Figure~\ref{fig:app-sim1-sd-rook} shows that RESAPLE does not always have the lowest standard deviation. In several low-dimensional settings, Moran residual has lower sampling variability. This does not contradict the RMSE results, because the same estimator also has much larger bias. RESAPLE trades some sampling variability for a much smaller systematic error. This is the main reason its RMSE is usually lower.

The same qualitative pattern appears for the other lattice candidates in $\{\mathrm{queen},\mathrm{knn4},\mathrm{knn6},\mathrm{knn8}\}$. Those weights are not repeated here because Simulation~\ref{subsec:sim-W-selection} selects rook by $\mathcal I_r(0)$ for every lattice design.

\newpage

\subsection{Gaussian Testing Supplement}\label{app:gaussian-testing-supplement}

This subsection gives the Gaussian testing supplement for the selected lattice weight. The main text reports the $n=100$ lattice results, together with B07 and the large irregular graph. We therefore report the $n=25$ lattice results here. The $n=400$ lattice results are not repeated, because they show the same pattern as $n=100$ with higher power and smaller Monte Carlo variation. 

Figure~\ref{fig:app-sim2-permutation-rook-n25} compares the permutation tests at $n=25$. All four tests use the same Freedman-Lane residual permutation procedure. This isolates the effect of the statistic. For $p=1$, the four curves are almost identical. For $p=5$, the curves remain close. For $p=20$, the residual dimension is only $r=5$, and all tests have low power. This is simply a low-information setting rather than a failure of a particular statistic.

\begin{figure*}[pos=!htpb]
\centering
\includegraphics[width=0.86\textwidth]{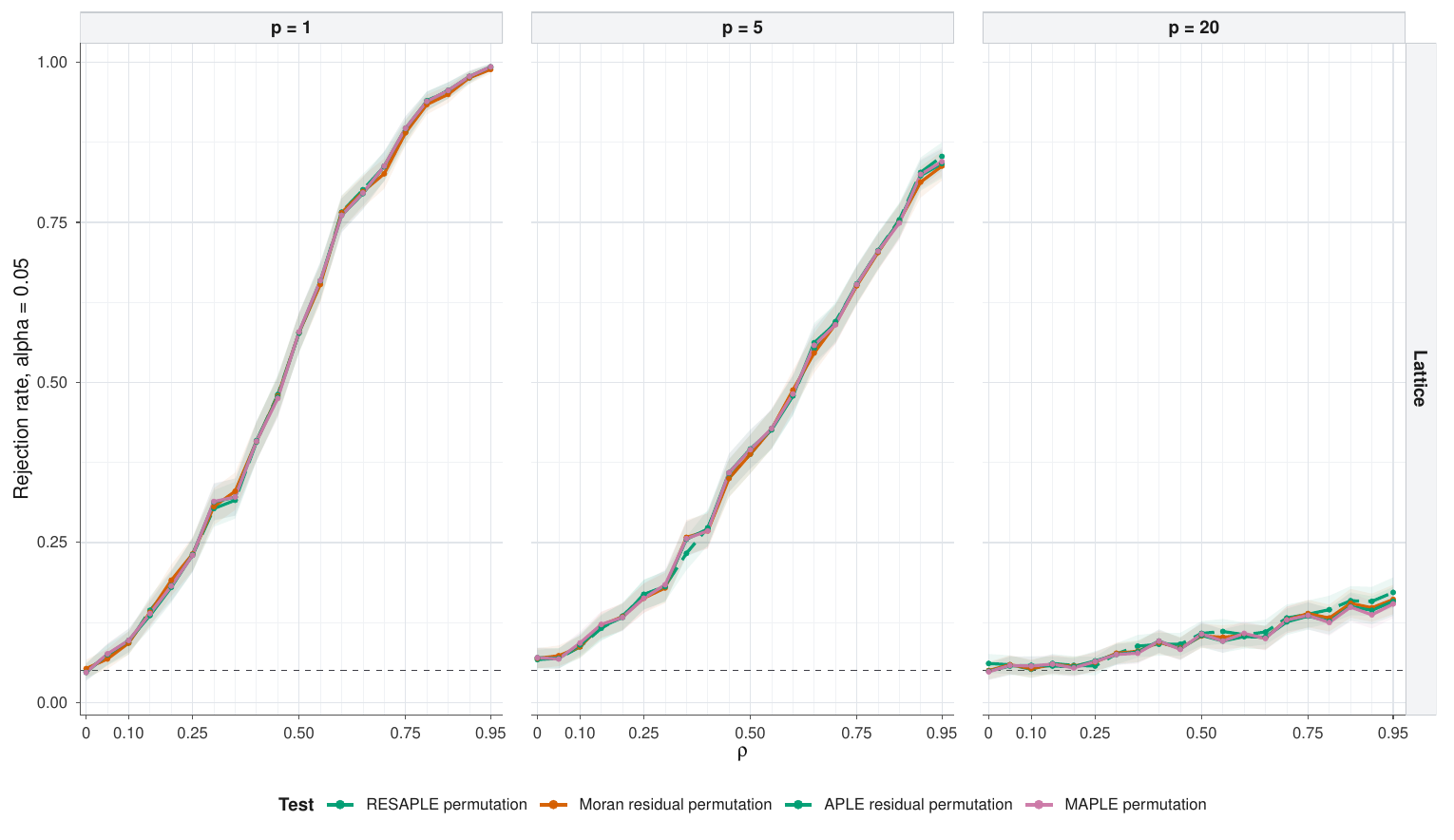}
\caption{Gaussian permutation rejection rates on the $n=25$ lattice using the selected weight $W=\mathrm{rook}$. All tests use the same Freedman-Lane residual permutation procedure. The horizontal dashed line marks $\alpha=0.05$.}
\label{fig:app-sim2-permutation-rook-n25}
\end{figure*}

Figure~\ref{fig:app-sim2-reference-rook-n25} compares the three RESAPLE reference procedures at $n=25$. The exact and permutation procedures are close for $p=1$ and $p=5$. For $p=20$, all procedures have low power because $r=5$. The $z$ procedure is the least stable in this small residual space. This agrees with the main-text results for B07, where small residual dimension and irregular spectra make the normal approximation less reliable.

\begin{figure*}[pos=!htpb]
\centering
\includegraphics[width=0.86\textwidth]{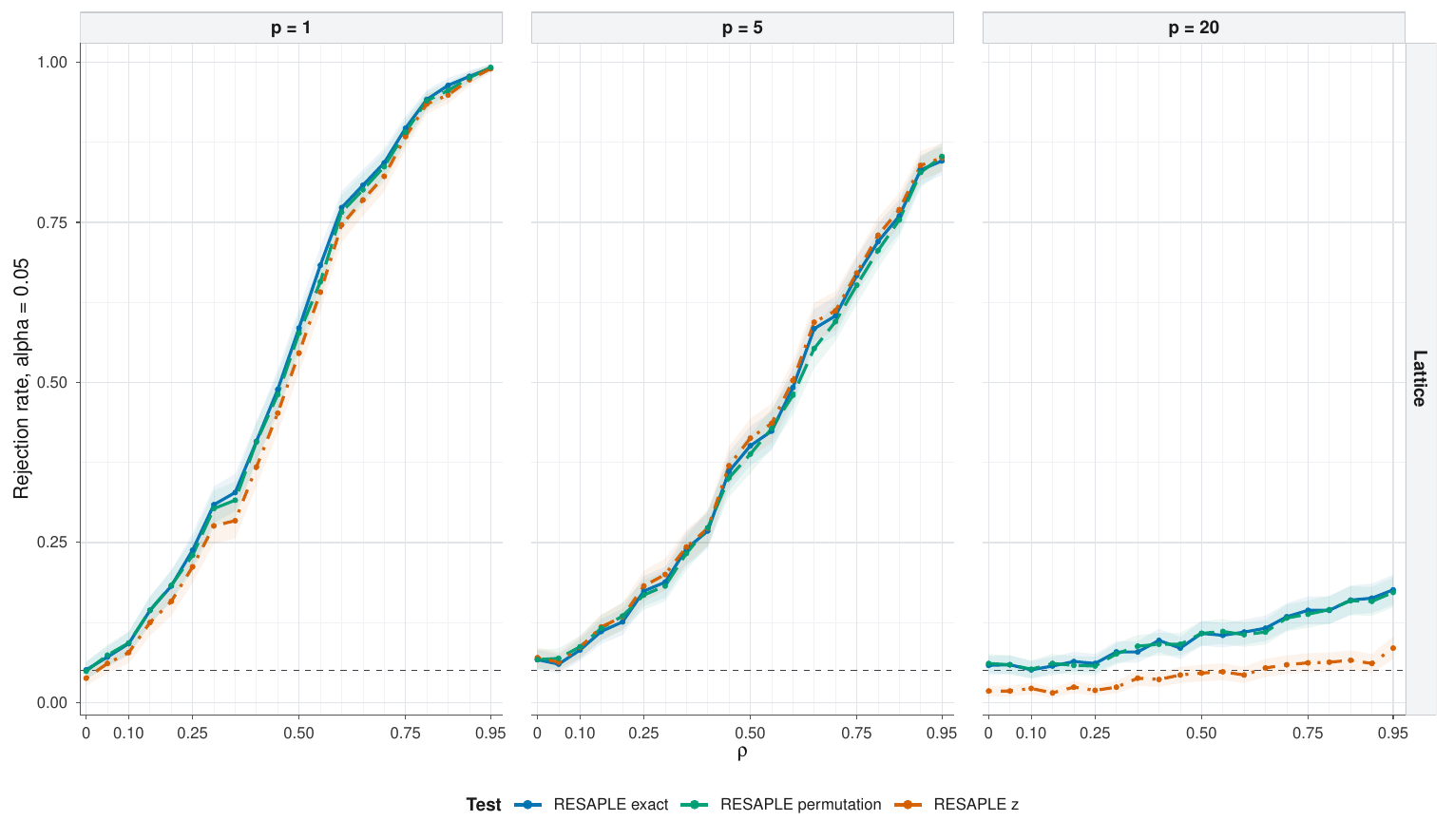}
\caption{Gaussian RESAPLE rejection rates on the $n=25$ lattice using exact, permutation, and $z$ reference procedures. The selected weight is $W=\mathrm{rook}$. The horizontal dashed line marks $\alpha=0.05$.}
\label{fig:app-sim2-reference-rook-n25}
\end{figure*}

The remaining lattice weights, $\mathrm{queen}$, $\mathrm{knn4}$, $\mathrm{knn6}$, and $\mathrm{knn8}$, give the same qualitative conclusion.

\newpage

\section{Additional Details for the Case Study}\label{app:case-study}
\subsection{Map of $Z_i$}

\begin{figure}[pos=!htp]
    \centering
    \includegraphics[width=\linewidth]{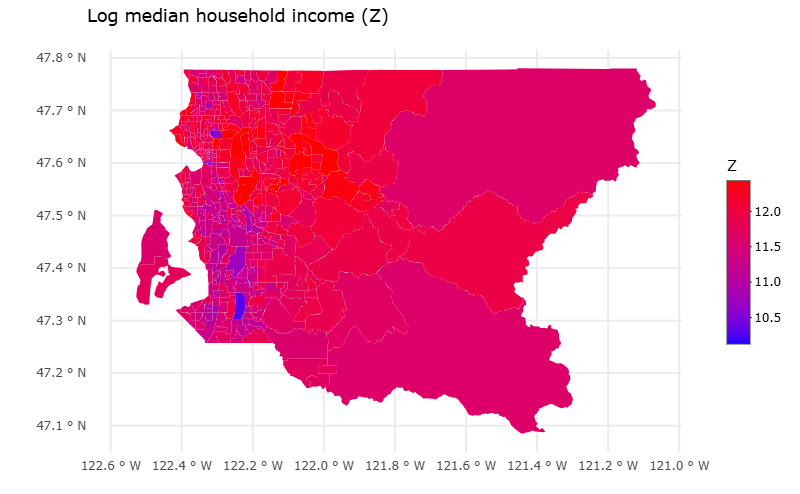}
    \caption{Map of Log Median Household Income.}
    \label{fig:zi-map}
\end{figure}

\newpage

\subsection{Restricted Information for Each $W \in \mathcal W$}

\begin{table}[pos=!htp]
\centering
\small
\setlength{\tabcolsep}{6pt}
\renewcommand{\arraystretch}{1.15}

\caption{Restricted null information across candidate weight matrices $W$, by model in the case study. Here $n=494$ tracts and $n_{\mathrm{iso}}=0$ isolated units for all candidates. For each model, the selected $W$ (maximising $I_r(0)$) is shown in \textbf{bold}. The ratio $\mathrm{info\_ratio}\defeq I_r(0)/I_n(0)$ compares restricted to unrestricted null information, where $I_r(0)=2\operatorname{Tr}(MKMK)$ and $I_n(0)=2\operatorname{Tr}(K^2)$ with $K=(W+W^\top)/2$.}
\label{tab:I0res-byW-appendix}

\vspace{0.6em}

\begin{minipage}[t]{0.325\linewidth}\vspace{0pt}
\centering
\textbf{Model M0} ($p=1$)\\[0.35em]
\begin{tabular}{@{}l r r@{}}
\toprule
$W$ &
\shortstack[c]{avg\_deg\\$I_r(0)$} &
\shortstack[c]{$I_n(0)$\\info\_ratio} \\
\midrule
\textbf{knn4} & 4.000000  & 217.5000  \\
              & 215.36336 & 0.9901764 \\
\addlinespace
rook          & 5.356275  & 194.9016  \\
              & 192.80485 & 0.9892422 \\
\addlinespace
queen         & 6.056680  & 173.4456  \\
              & 171.34649 & 0.9878978 \\
\addlinespace
knn6          & 6.000000  & 146.4444  \\
              & 144.34323 & 0.9856518 \\
\addlinespace
knn8          & 8.000000  & 110.7500  \\
              & 108.66201 & 0.9811468 \\
\bottomrule
\end{tabular}
\end{minipage}\hfill
\begin{minipage}[t]{0.325\linewidth}\vspace{0pt}
\centering
\textbf{Model M1} ($p=5$)\\[0.35em]
\begin{tabular}{@{}l r r@{}}
\toprule
$W$ &
\shortstack[c]{avg\_deg\\$I_r(0)$} &
\shortstack[c]{$I_n(0)$\\info\_ratio} \\
\midrule
\textbf{knn4} & 4.000000  & 217.5000  \\
              & 209.22927 & 0.9619737 \\
\addlinespace
rook          & 5.356275  & 194.9016  \\
              & 187.10276 & 0.9599859 \\
\addlinespace
queen         & 6.056680  & 173.4456  \\
              & 165.90602 & 0.9565308 \\
\addlinespace
knn6          & 6.000000  & 146.4444  \\
              & 139.09374 & 0.9498055 \\
\addlinespace
knn8          & 8.000000  & 110.7500  \\
              & 103.83073 & 0.9375235 \\
\bottomrule
\end{tabular}
\end{minipage}\hfill
\begin{minipage}[t]{0.325\linewidth}\vspace{0pt}
\centering
\textbf{Model M2} ($p=7$)\\[0.35em]
\begin{tabular}{@{}l r r@{}}
\toprule
$W$ &
\shortstack[c]{avg\_deg\\$I_r(0)$} &
\shortstack[c]{$I_n(0)$\\info\_ratio} \\
\midrule
\textbf{knn4} & 4.000000  & 217.5000  \\
              & 206.51133 & 0.9494774 \\
\addlinespace
rook          & 5.356275  & 194.9016  \\
              & 184.71094 & 0.9477140 \\
\addlinespace
queen         & 6.056680  & 173.4456  \\
              & 163.67882 & 0.9436899 \\
\addlinespace
knn6          & 6.000000  & 146.4444  \\
              & 137.05337 & 0.9358728 \\
\addlinespace
knn8          & 8.000000  & 110.7500  \\
              & 102.13158 & 0.9221813 \\
\bottomrule
\end{tabular}
\end{minipage}

\vspace{1.0em}

\begin{minipage}[t]{0.49\linewidth}\vspace{0pt}
\centering
\textbf{Model M3} ($p=9$)\\[0.35em]
\begin{tabular}{@{}l r r@{}}
\toprule
$W$ &
\shortstack[c]{avg\_deg\\$I_r(0)$} &
\shortstack[c]{$I_n(0)$\\info\_ratio} \\
\midrule
\textbf{knn4} & 4.000000  & 217.5000  \\
              & 202.91220 & 0.9329296 \\
\addlinespace
rook          & 5.356275  & 194.9016  \\
              & 181.19303 & 0.9296643 \\
\addlinespace
queen         & 6.056680  & 173.4456  \\
              & 160.20081 & 0.9236374 \\
\addlinespace
knn6          & 6.000000  & 146.4444  \\
              & 133.69818 & 0.9129617 \\
\addlinespace
knn8          & 8.000000  & 110.7500  \\
              & 98.92989  & 0.8932722 \\
\bottomrule
\end{tabular}
\end{minipage}\hfill
\begin{minipage}[t]{0.49\linewidth}\vspace{0pt}
\centering
\textbf{Model M4} ($p=12$)\\[0.35em]
\begin{tabular}{@{}l r r@{}}
\toprule
$W$ &
\shortstack[c]{avg\_deg\\$I_r(0)$} &
\shortstack[c]{$I_n(0)$\\info\_ratio} \\
\midrule
\textbf{knn4} & 4.000000  & 217.5000  \\
              & 198.06346 & 0.9106366 \\
\addlinespace
rook          & 5.356275  & 194.9016  \\
              & 175.98054 & 0.9029201 \\
\addlinespace
queen         & 6.056680  & 173.4456  \\
              & 155.01463 & 0.8937365 \\
\addlinespace
knn6          & 6.000000  & 146.4444  \\
              & 129.15404 & 0.8819320 \\
\addlinespace
knn8          & 8.000000  & 110.7500  \\
              & 94.53845  & 0.8536204 \\
\bottomrule
\end{tabular}
\end{minipage}

\end{table}

\newpage

\printcredits

\bibliographystyle{cas-model2-names}

\bibliography{references}



\end{document}